\documentclass[10pt,twocolumn,journal]{IEEEtran}

\usepackage{amsmath, mathrsfs}

\makeatletter
\def\maketag@@@#1{\hbox{\m@th\normalfont\normalsize#1}}
\makeatother


\setlength{\textfloatsep}{6pt plus 1.0pt minus 1.0pt}
\setlength{\belowcaptionskip}{5pt plus 1.0pt minus 1.0pt}


\allowdisplaybreaks

\usepackage{amsfonts}
\usepackage{amssymb}
\usepackage{color}
\usepackage{multirow}
\usepackage{graphicx}

\usepackage{caption}
\usepackage{subcaption}
\usepackage[numbers,sort&compress]{natbib}
\usepackage{balance}



\def\mindex#1{\index{#1}}



%
%

\def\sq{\hbox{\rlap{$\sqcap$}$\sqcup$}}
\def\qed{\ifmmode\sq\else{\unskip\nobreak\hfil
\penalty50\hskip1em\null\nobreak\hfil\sq
\parfillskip=0pt\finalhyphendemerits=0\endgraf}\fi\medskip}


\long\def\defbox#1{\framebox[.9\hsize][c]{\parbox{.85\hsize}{%
\parindent=0pt
\baselineskip=12pt plus .1pt      
\parskip=6pt plus 1.5pt minus 1pt 
 #1}}}


\long\def\beginbox#1\endbox{\subsection*{}%
\hbox{\hspace{.05\hsize}\defbox{\medskip#1\bigskip}}%
\subsection*{}}

\def\endbox{}


\def\diag{{\text{diag}}}

\def\tr{\mathsf{Tr}}

\newsavebox{\junk}
\savebox{\junk}[1.6mm]{\hbox{$|\!|\!|$}}

\def\argmin{\mathop{\rm arg\, min}}

\def\argmax{\mathop{\rm arg\, max}}




\newcommand{\field}[1]{\mathbb{#1}}

\def\ZZ{\field{Z}}



\def\bC{{\mathbb C}}

\def\bE{{\mathbb E}}

\def\bH{{\mathbb H}}

\def\bT{{\mathbb T}}

\def\bV{{\mathbb V}}

\def\bfA{{\bf A}}
\def\bfB{{\bf B}}
\def\bfC{{\bf C}}
\def\bfD{{\bf D}}

\def\bfH{{\bf H}}
\def\bfI{{\bf I}}

\def\bfK{{\bf K}}
\def\bfL{{\bf L}}
\def\bfM{{\bf M}}
\def\bfN{{\bf N}}

\def\bfR{{\bf R}}
\def\bfS{{\bf S}}
\def\bfT{{\bf T}}
\def\bfU{{\bf U}}
\def\bfV{{\bf V}}
\def\bfW{{\bf W}}
\def\bfX{{\bf X}}
\def\bfY{{\bf Y}}
\def\bfZ{{\bf Z}}

\def\bfa{{\bf a}}

\def\bfd{{\bf d}}
\def\bfe{{\bf e}}
\def\bff{{\bf f}}
\def\bfg{{\bf g}}
\def\bfh{{\bf h}}

\def\bfm{{\bf m}}
\def\bfn{{\bf n}}

\def\bfv{{\bf v}}
\def\bfw{{\bf w}}
\def\bfx{{\bf x}}
\def\bfy{{\bf y}}







\def\sfH{{\sf H}}

\def\sfN{{\sf N}}

\def\sfa{{\sf a}}

\def\sfr{{\sf r}}

\def\bfmath#1{{\mathchoice{\mbox{\boldmath$#1$}}%
{\mbox{\boldmath$#1$}}%
{\mbox{\boldmath$\scriptstyle#1$}}%
{\mbox{\boldmath$\scriptscriptstyle#1$}}}}




\def\bfmY{\bfmath{Y}}

\def\bfmhhaY{\bfmath{\hhaY}} 
\def\bfmhhaY{\hbox to 0pt{$\widehat{\bfmY}$\hss}\widehat{\phantom{\raise 1.25pt\hbox{$\bfmY$}}}}









\def\til={{\widetilde =}}



\def\clA{{\cal A}}

\def\clC{{\cal C}}
\def\clD{{\cal D}}

\def\clG{{\cal G}}

\def\clK{{\cal K}}

\def\clN{{\cal N}}

\def\clU{{\cal U}}

\def\clX{{\cal X}}




\def\Var{{\mathbb{V}\sfa \sfr}}


 \def\FRAC#1#2#3{\genfrac{}{}{}{#1}{#2}{#3}}

\def\ddtp{{\mathchoice{\FRAC{1}{d^{\hbox to 2pt{\rm\tiny +\hss}}}{dt}}%
{\FRAC{1}{d^{\hbox to 2pt{\rm\tiny +\hss}}}{dt}}%
{\FRAC{3}{d^{\hbox to 2pt{\rm\tiny +\hss}}}{dt}}%
{\FRAC{3}{d^{\hbox to 2pt{\rm\tiny +\hss}}}{dt}}}}

\def\average#1,#2,{{1\over #2} \sum_{#1}^{#2}}

\def\eye(#1){{\bf(#1)}\quad}

\def\var{{\bV\sfa\sfr}}

\newtheorem{theorem}{{\bf Theorem}}

\newtheorem{proposition}{{\bf Proposition}}

\newtheorem{remark}{{\bf Remark}}

\def\eq#1/{(\ref{e:#1})}

\newcommand{\inp}[2]{{\langle #1, #2 \rangle}}

\newcommand{\beqn}[1]{\notes{#1}%
\begin{eqnarray} \elabel{#1}}

\newcommand{\eeqn}{\end{eqnarray} }

\newcommand{\beq}[1]{\notes{#1}%
\begin{equation}\elabel{#1}}

\newcommand{\eeq}{\end{equation}}

\def\bdes{\begin{description}}
\def\edes{\end{description}}




%

\newcounter{rmnum}

\newcounter{anum}


%
{\end{list}}

\def\ass(#1:#2){(#1\ref{#1:#2})}

\def\ritem#1{
\item[{\sf \ass(\current_model:#1)}]
}

\newenvironment{recall-ass}[1]{%
\begin{description}
\def\current_model{#1}}{
\end{description}
}



%
\setlength\unitlength{1mm}

\long\def\comment#1{}


\newfont{\bbb}{msbm10 scaled 700}

\newfont{\bb}{msbm10 scaled 1100}

\newcommand{\EE}{\mbox{\bb E}}


\newcommand{\av}{{\bf a}}

\newcommand{\hv}{{\bf h}}

\newcommand{\mv}{{\bf m}}
\newcommand{\nv}{{\bf n}}

\newcommand{\xv}{{\bf x}}
\newcommand{\yv}{{\bf y}}


\newcommand{\Bm}{{\bf B}}
\newcommand{\Cm}{{\bf C}}
\newcommand{\Dm}{{\bf D}}

\newcommand{\Fm}{{\bf F}}

\newcommand{\Id}{{\bf I}}

\newcommand{\Mm}{{\bf M}}
\newcommand{\Nm}{{\bf N}}

\newcommand{\Sm}{{\bf S}}
\newcommand{\Tm}{{\bf T}}

\newcommand{\Wm}{{\bf W}}
\newcommand{\Vm}{{\bf V}}
\newcommand{\Xm}{{\bf X}}
\newcommand{\Ym}{{\bf Y}}



\newcommand{\Gammam}{\hbox{\boldmath$\Gamma$}}
\newcommand{\Lambdam}{\hbox{\boldmath$\Lambda$}}

\newcommand{\Sigmam}{\hbox{\boldmath$\Sigma$}}


%



\newcommand{\trace}{{\hbox{tr}}}

\newcommand{\transp}{{\sf T}}





\usepackage{tikz}
\usepackage{pgfplots}
\pgfplotsset{compat=newest}
\usetikzlibrary{patterns}
\usetikzlibrary{positioning}
\usetikzlibrary{datavisualization}
\usetikzlibrary{datavisualization.formats.functions}
\usetikzlibrary{backgrounds}
\usetikzlibrary{shapes,snakes}

\usepgfplotslibrary{external} 
\tikzexternalize
%


%


%
%

\def\herm{{\sfH}}
\def\sdet{{\mathsf{det}}}
\def\snr{{\mathsf{snr}}}
\def\sinr{{\mathsf{sinr}}}
\def\sub{{\mathsf{sub}}}
\def\sp{{\mathsf{span}}}

\newcommand\norm[2]{{\|#1\|_{#2}}}

\newcommand{\inpb}[2]{{\langle #1,#2 \rangle_{\bfB}}}

\def\cg{{\clC\clN}} 
\def\gb{{GB}}
\def\gl{{GL}}
\def\Hproj{\underline{\mathbf{\sfH}}}
\def\Hmat{\underline{\bfH}}

\def\algML{{\bf Algorithm 1}}
\def\nameML{{AML}}
\def\algRMMV{{\bf Algorithm 2}}
\def\nameRMMV{{RMMV}}
\def\algCOR{{\bf Algorithm 4}}
\def\nameCMP{{CMP}}

\def\algSR{{\bf Algorithm 3}}
\def\nameSR{{SR}}

\def\lcav{{L_{\mathsf{cav}}}}
\def\lvex{{L_{\mathsf{vex}}}}

\begin{document}

\title{Massive MIMO Channel Subspace Estimation from Low-Dimensional Projections}
\author{Saeid Haghighatshoar,  \IEEEmembership{Member, IEEE,} Giuseppe Caire,
\IEEEmembership{Fellow, IEEE}%
\thanks{A shorter version of this paper was presented at the International Zurich Seminar, Zurich, Switzerland, March 2016.}
\thanks{The authors are with the Communications and Information Theory Group, Technische Universit\"{a}t Berlin (\{saeid.haghighatshoar, caire\}@tu-berlin.de).}\vspace{-4mm}
}

\maketitle

\begin{abstract}
Massive MIMO is a variant of multiuser MIMO where the number of base-station antennas $M$ is very large (typically $\approx 100$), 
and generally much larger than the number of spatially multiplexed data streams (typically $\approx 10$). 
The benefits of such approach have been intensively investigated
in the past few years, and all-digital experimental implementations have also been demonstrated.
Unfortunately, the front-end A/D conversion necessary to drive hundreds of antennas, with a signal bandwidth of the order of 
10 to 100 MHz, requires very large sampling bit-rate and power consumption.

In order to reduce such implementation requirements, 
Hybrid Digital-Analog architectures have been proposed.
In particular, our work in this paper is motivated by one of such schemes named
{\em Joint Spatial Division and Multiplexing} (JSDM), where the downlink precoder  (resp., uplink linear receiver) 
is split into the product of a baseband linear projection (digital) and an RF reconfigurable beamforming network (analog), 
such that only a reduced number $m \ll M$ of A/D converters and RF modulation/demodulation chains is needed. 
In JSDM, users are grouped according to similarity of their channel dominant subspaces, and these groups are separated by the analog beamforming stage, where multiplexing gain in each group is achieved using the digital precoder. Therefore, it is apparent that 
extracting the channel subspace information of the $M$-dim channel vectors from snapshots of $m$-dim 
{\em projections}, with $m \ll M$, plays a fundamental role in JSDM implementation. 

In this paper, we develop novel efficient algorithms that require sampling only $m=O(2 \sqrt{M})$ specific array 
elements according  to a coprime sampling scheme, and for a given $p \ll M$, return a $p$-dim beamformer that has 
a performance comparable with the best $p$-dim beamformer that can be designed from the full knowledge of the 
{\em exact} channel covariance matrix.  
{We assess the performance of our proposed estimators both analytically and empirically 
via numerical simulations. We also demonstrate by simulation that 
the proposed subspace estimation methods provide near-ideal performance 
for a massive MIMO JSDM system, by comparing with the case where the user channel 
covariances are perfectly known.}
\end{abstract}

\section{Introduction}  \label{sec:intro}

\PARstart{C}{onsider} a multiuser MIMO channel formed by a base-station (BS) with $M$ antennas and $K$ single-antenna mobile 
users in a cellular network. 
Following the current {\em massive MIMO} approach 
\cite{Marzetta-TWC10,Huh11,hoydis2013massive,larsson2014massive}, 
uplink (UL) and downlink (DL) are organized in Time Division Duplexing (TDD), and the BS transmit/receive hardware 
is designed or calibrated in order to preserve UL-DL reciprocity \cite{shepard2012argos,rogalin2014scalable} 
such that the BS can estimate the channel vectors of the users from UL training signals 
sent by the users on orthogonal dimensions.  Since there is no multiuser interference on the UL training phase, 
in this paper we shall focus on the basic channel estimation problem for a single user. 

In massive MIMO systems, the number of antennas $M$ is typically much larger than the number of users $K$ 
scheduled to communicate over a given transmission time slot (i.e., the number of spatially multiplexed data streams). 
Letting $D$ denote the duration of a time slot (expressed in channel uses), $\tau D$ channel uses for some $\tau\in(0,1)$, are dedicated to training and the remaining 
$(1 - \tau)D$ channel uses are devoted to data transmission, where it is assumed that $D$ is not larger than the 
channel coherence block length, i.e., the number of channel uses 
over which the channel is
nearly constant \cite{Marzetta-TWC10}.  It turns out that for isotropically distributed channel vectors
with $\min\{M, K\} \geq D/2$, it is optimal to devote a fraction $\tau = 1/2$ of the slot to channel 
estimation while serving only $D/2$ out of $K$ users in the remaining half \cite{Marzetta-TWC10}.\footnote{When $K > D/2$, then groups of $D/2$ users are scheduled
over different time slots such that all users achieve a positive throughput (i.e., rate averaged over a long sequence of scheduling slots).} 

In many relevant scenarios, the channel vectors are highly correlated since the propagation occurs through a small 
set of Angle of Arrivals (AoAs). This correlation can be exploited to improve the system multiplexing gain and decrease 
the training overhead.  A particularly effective scheme is the Joint Space Division and Multiplexing (JSDM) approach proposed and analyzed in
\cite{adhikary2013joint,nam2014joint,adhikary2014joint,adhikary2014massive,adhikary2014spatial}.
JSDM starts from the consideration that for a user with a channel vector $\bfh\in \bC^M$
the signal covariance matrix $\bfS=\bE[\bfh \bfh^\herm]$ is typically low-rank\footnote{This is especially true in the case of a tower-mounted BS and/or in the case of mm-wave channels, as experimentally confirmed by 
channel measurements (see \cite{adhikary2014joint} and references therein).}.
Moreover, according to the well-known and widely accepted Wide-Sense Stationary Uncorrelated Scattering (WSSUS) channel model, 
$\Sm$ is invariant over time and frequency. In particular, while the small-scale fading 
has a coherence time between 0.1s and 10\,ms for motion speed between 1\,m/s to 10\,m/s at the carrier frequency of 3 GHz,
the time over which the channel vector can be considered WSS is of the order of tens of seconds, i.e., 
from 2 to 4 orders of magnitude larger. Hence, estimating the  signal subspace of a user is a much easier task than estimating 
the instantaneous channel vector $\hv$ on each coherence time slot. 
This is especially important in mm-wave channels (e.g., carrier frequency of the order of 30 GHz) since, due to the higher carrier frequency, the Doppler bandwidth of these channels is large and therefore $D$ is small, i.e., the multiplexing gain of $D/2$ achieved by estimating the channels by TDD
on each given slot as in \cite{Marzetta-TWC10} is significantly impaired. 

When the subspace information for the users can be accurately estimated over a long sequence of time slots, JSDM partitions the users
into $G > 1$ groups such that users in each group have approximately the same dominant channel subspace 
\cite{adhikary2013joint,nam2014joint,adhikary2014joint}. The overall multiplexing gain is obtained in two stages, as the concatenation of two
linear projections. Namely, groups are separated by zero-forcing beamforming that uses only the group subspace information. 
Then, additional multiuser multiplexing gain can be obtained by conventional linear precoding applied independently in each group. 
In this way, the system multiplexing gain can be boosted by $G$ such that a decrease in $D$ can be compensated 
by a larger $G$ \cite{DBLP:journals/corr/NamCKH15}.

Furthermore, JSDM lends itself naturally to a Hybrid Digital Analog (HDA) implementation, where the group-separating beamformer
can be implemented in the analog (RF) domain, and the multiuser precoding inside each group is implemented in the digital (baseband) domain. 
The analog beamforming projection reduces the dimensionality from $M$ to some intermediate dimension 
$m \ll M$.  Then, the resulting $m$ inputs (UL) are converted into digital baseband signals, and are further processed in the digital domain. This has the additional non-trivial advantage that only $m \ll M$ RF chains (A/D converters and modulators) are needed, thus reducing significantly the 
massive MIMO BS receiver/transmitter front-end complexity and power consumption. 

From what said, it is apparent that a central task at the BS side consists in estimating, for each user, 
a subspace containing a significant amount of its received signal power.  
Since in an HDA implementation we do not have direct access to 
all the $M$ antennas, but only to $m \ll M$ analog output observations, we need to estimate 
this subspace from snapshots of a low-dim 
projection of the signal.

%
\subsection{Contribution} \label{sec:contribution}

In this paper, we aim to design such a subspace estimator for a BS with a large uniform linear array (ULA) with $M \gg 1$ antennas. 
The geometry of the array is shown in Fig.~\ref{fig:sc_channel}, with array elements having uniform spacing $d$. 

\begin{figure}[h]
\centering
\includegraphics{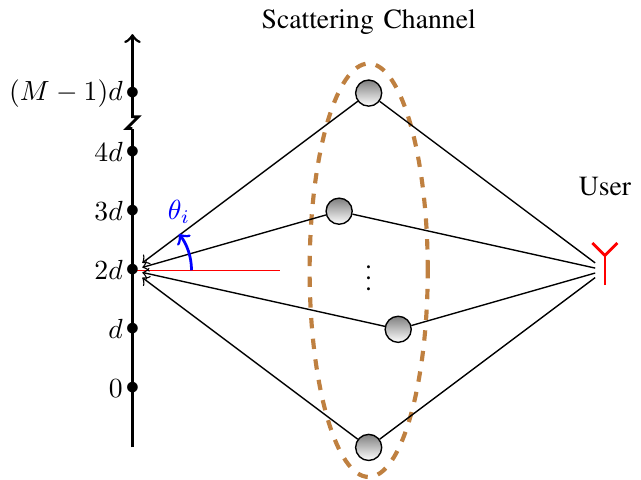}
\caption{{\small 
Array configuration in a multi-antenna receiver in the presence of a scattering channel with discrete angle of arrivals.}}
\label{fig:sc_channel}
\end{figure}

%
%
%
%
%
%
%
%

We assume that the array serves the users in the angular range $[-\theta_{\max}, \theta_{\max}]$ for some $\theta_{\max} \in (0,\pi/2)$, and we let 
$d=\frac{\lambda}{2 \sin(\theta_{\max})}$, where $\lambda$ is the wave-length.
%
%
In general, we assume that we can observe only low-dim sketches of the received signal 
via $m \ll M$ linear projections. In the case where the projection matrix contains a single non-zero element equal to 1 in each row, 
we recover the case of array subsampling as a special case. In particular, we shall consider a coprime sampling scheme requiring 
$m = O(2\sqrt{M})$. Coprime subsampling was first developed by Vaidyanathan and Pal in \cite{vaidyanathan2011sparse,vaidyanathan2011theory}, where they showed that for a given spatial span for the array, one obtains approximately the same resolution as a uniform linear array by nonuniformly sampling 
only a few array elements at coprime locations.  We propose several algorithms for estimating the signal subspace and cast them as convex 
optimization problems that can be solved efficiently. 
We also compare via simulation the performance of our algorithms with other state-of-the-art algorithms 
in the literature.  
{The relevance of the proposed approach for JSDM is demonstrated via a representative example, where the dominant subspace of 
users with different channel correlations are estimated and grouped according to the Grassmanian quantization scheme introduced in \cite{nam2014joint}. 
Then, JSDM is applied to the estimated user groups. We compare the achieved sum-rate of our scheme with the ideal case, where
the users' channel covariances are perfectly known, as in \cite{nam2014joint}, and we find that the performance penalty incurred by our proposed method is
negligible, even for very short training lengths.}


\noindent{\bf Notation.}
Throughout the paper, the output of an optimization algorithm $\argmin_{x}  f(x)$ is denoted by $x^*$. 
We use $\bT$ and $\bT_+$ for the space of  all $M\times M$ Hermitian Toeplitz and Hermitian semi-definite Toeplitz matrices. 
We always use $\Id$ for the identity matrix, where the dimension may be explicitly indicated for the sake of clarity (e.g., $\Id_k$ denotes
the $k \times k$ identity matrix). We denote a $k\times k$ diagonal matrix with $k$ diagonal elements $\alpha_1, \dots, \alpha_k$ with $\diag(\alpha_1, \dots, \alpha_k)$.
We define $\bH(M,p)=\{\bfU_{M\times p} \in \bC^{M\times p}: \bfU^\herm \bfU=\Id_p\}$ as the set of tall unitary matrices 
of dimension $M\times p$. For matrices and vectors of appropriate dimensions, 
we define the inner product by $\inp{\bfK}{\bfL}=\tr(\bfK \bfL^\herm)$, where $\tr$ denotes the trace operator, and  define the induced norm by
$\|\bfK\|=\sqrt{\inp{\bfK}{\bfK}}$, also known as {\em Frobenius norm} for matrices. 
For an integer $k \in \ZZ$, we use the shorthand notation $[k]$ 
for the set of non-negative integers $\{0,1, \dots, k-1\}$, where the set is empty if $k < 0$. 


\section{Related Work}\label{sec:lit_review}

Several works in the literature are related to the problem addressed in this paper, which can be summarized in the following four categories:   
Subspace tracking, Low-rank matrix recovery, 
Direction-of-arrival (DoA) estimation,
and Multiple Measurement Vectors (MMV) problem in compressed sensing (CS).
For the sake of completeness, in this section we briefly review these approaches. 

%
While in the rest of this paper we shall treat a general scattering model, 
for simplicity of exposition it is convenient to focus here on a simple model  
in which the transmission between a user and the BS occurs through 
$p$ scatterers (see Fig.~\ref{fig:sc_channel}).  One snapshot of the received signal is given by
\begin{equation} \label{ziofafa}
\bfy=\sum_{\ell=1}^p \bfa(\theta_\ell) w_\ell\,  x + \bfn, 
\end{equation}
where $x$ is the transmitted (training) symbol, $w_\ell \sim \cg(0, \sigma_\ell^2)$ is the channel gain 
of the $\ell$-th multipath component, $\bfn \sim \cg({\bf 0}, \sigma^2\Id_M)$ is the additive white Gaussian noise of the receiver antenna, and where $\bfa(\theta) \in \bC^M$ is the array response at AoA $\theta$, whose $k$-th component is given by 
\begin{align}
[\bfa(\theta)]_k
=e^{j k\pi \frac{\sin(\theta)}{\sin(\theta_{\max})}}.
\end{align}
According to the WSSUS model, the channel gains for different paths, i.e., $\{w_\ell\}_{\ell=1}^p$, 
are uncorrelated.
Without loss of generality, we suppose $x=1$ in all training snapshots. Letting 
$\bfA = [\bfa(\theta_1), \dots,  \bfa(\theta_p)]$, we have 
\begin{align}\label{eq:disc_ch_mod}
\bfy(t)=\bfA \bfw(t) + \bfn(t), \;\;\; t\in[T],
\end{align}
where $\bfw(t) = (w_1(t), \dots, w_p(t))^\transp$ for different $t\in [T]$ are statistically independent. 
Also,  we assume that the AoAs $\{\theta_\ell\}_{\ell=1}^p$ are invariant over the whole training period of length $T$ slots.
%
We gather the received signal's and subsampled signal's snapshots into $M\times T$ 
matrix $\bfY =[\bfy(0), \dots ,\bfy(T-1)]$, and $m\times T$ matrix $\bfX =[\bfx(0), \dots ,\bfx(T-1)]$, 
where $\bfx(t)=\bfB\,\bfy(t)$, and $\bfX=\bfB \bfY$ for the $m\times M$ projection matrix $\bfB$.
From \eqref{eq:disc_ch_mod}, the covariance of $\bfy(t)$ is given by
\begin{align}\label{eq:subs_embed}
{\bfC}_y=\bfA \mathbf{\Sigma} \bfA^\herm + \sigma^2\Id_M= \sum_{\ell=1}^p \sigma_\ell^2 \bfa(\theta_\ell) \bfa(\theta_\ell)^\herm + \sigma^2\Id_M,
\end{align}
where $\Sigmam=\diag(\sigma_1^2, \dots, \sigma_p^2)$ is the covariance matrix of $\bfw(t)$.

\subsection{Subspace Tracking from Incomplete Observations}
Let $\bfC_y = \bfU \Lambdam \bfU^\herm$ be the singular value decomposition (SVD) of $\bfC_y$, where $\Lambdam =\diag(\lambda_1, \dots, \lambda_M)$ denotes the diagonal matrix of singular values (sorted in non-increasing order). 
Denoting by $\bfU_p$ the $M\times p$ matrix consisting of the first $p$ columns of $\bfU$, we have that the columns of $\bfU_p$ form an orthonormal basis 
for the signal subspace. 
The goal of {\em subspace tracking from incomplete observations} consists in estimating 
this subspace from the noisy low-dim sketches $\bfx(t)=\bfB \bfy(t)$, revealed to the estimator sequentially for $t \in [T]$.   
The noiseless version of this problem was studied by Chi et. al. in \cite{chi2013petrels}, proposing 
the PETRELS algorithm.  Another algorithm named GROUSE was proposed by Balzano et. al. in \cite{balzano2010online}.
The main focus of both algorithms is to optimize the computational complexity rather than the data size, and
therefore they are mainly suited to the case where both $M$ and $T$ are high. 

\subsection{Low-rank Matrix Recovery}
For $p \ll M$ and for a high signal-to-noise ratio (SNR), the covariance matrix $\Cm_y$ in (\ref{eq:subs_embed}) is nearly low-rank. 
%
Recovery of low-rank matrices from a collection of a few possibly noisy samples is of great importance in signal processing and machine learning. 
Recently, it has been shown that this can be achieved via nuclear-norm minimization, which is a convex problem and can be efficiently solved \cite{candes2009exact}. 
For a symmetric matrix $\bfM$, the  nuclear norm $\|\Mm\|_*$ is given by the sum of the absolute values of the eigen-values of $\Mm$, 
and reduces to $\tr(\bfM)$ when $\bfM$ is positive semi-definite (PSD). 
In our case, we have only a collection of $T$ snapshots $\bfX=\bfB \bfY$ as defined before. Let
\begin{align}
\widehat{\bfC}_y=\frac{1}{T} \sum_{t=1}^T \bfy(t) \bfy(t)^\herm, \;\;\; \widehat{\bfC}_x=\Bm \widehat{\bfC}_y \Bm^\herm
\end{align}
be the sample covariance of the full and projected signal. 
A natural extension of the matrix completion by nuclear-norm minimization to our case is readily give by:
\begin{align}\label{eq:nuc_norm}
\min_{\bfM} \, \tr(\bfM) \text{ subject to } \bfM \in \bT_+,\  \|\widehat{\bfC}_x- \Bm \bfM \Bm^\herm \| \leq \epsilon,
\end{align}
where $\epsilon$ is an estimate of the $\ell_2$-norm of the error.
%
%

\subsection{Direction-of-arrival Estimation and Super-resolution}

From \eqref{eq:disc_ch_mod}, it is seen that the received signal $\bfy(t)$ is a noisy superposition of $p$ independent Gaussian 
sources arriving from $p$ different angles. 
This is the same model studied for direction-of-arrival (DoA) estimation.  
There are two main categories of algorithms for DoA estimation: classical super-resolution (SR) algorithms such as 
ESPRIT \cite{roy1989esprit} and MUSIC \cite{schmidt1986multiple},  
and more recent compressed sensing based algorithms that use the angular sparsity of the signal 
over a discrete grid of AoAs.
Although grid-based approaches suffer from the mismatch of off-grid sources \cite{chi2011sensitivity}, 
they have been vastly studied \cite{bajwa2010compressed, baraniuk2007compressive, duarte2013spectral, fannjiang2010compressed, herman2009high, malioutov2005sparse, kunis2008random, stoica2012spice, stoica2011new}.
Recently, Cand{\`e}s and Fernandez-Granda  \cite{candes2014towards,candes2013super} 
developed a SR technique based on total-variation (TV) minimization, which 
inherits the convex optimization computational advantage of compressed sensing.
This approach was extended by Tan et. al. in \cite{tan2014direction} to DoA estimation with coprime arrays, when the AoAs 
are sufficiently separated.  In a wireless environment, the AoAs may be clustered. 
This implies that the separation requirement for the SR setup may not be met. 
For example, often  a continuous AoA density function has been observed in measurements (e.g., see \cite{toeltsch2002statistical}), 
and is considered in channel models (e.g., see \cite{asplund2006cost}).
This represents an obstacle for a straightforward 
application of SR methods (both classical \cite{roy1989esprit,schmidt1986multiple} and modern \cite{candes2014towards,candes2013super,tan2014direction}).
Since in this paper we aim at estimating the subspace of the signal rather than DoAs, in Section \ref{sec:SR} we extend the SR 
approach, and develop a new algorithm for estimating the signal subspace.

\subsection{Multiple Measurement Vectors (MMV)}\label{intro:mmv}

%
It is seen from \eqref{eq:disc_ch_mod} that, neglecting the measurement noise $\bfn(t)$, the signal $\bfy(t)$ has typically a sparse representation over the continuous dictionary $\{\bfB \bfa(\theta), \theta \in [-\theta_{\max}, \theta_{\max}]\}$, i.e., only $p$ atoms of the dictionary $\{\bfB \bfa(\theta_i)\}_{i=1}^p$ are needed to represent the signal.
After a suitable discretization of the dictionary (e.g., using a discrete grid of AoAs), the problem of
estimating  $\bfY$ from the collection of snapshots $\bfX=\bfB \bfY$, knowing that each column $\bfx(t)$ of $\bfX$ has the same
sparsity pattern (i.e., it is a linear combination of the same dictionary elements $\{\bfB \bfa(\theta_i)\}_{i=1}^p$) is the classical
Multiple Measurement Vectors (MMV) problem in compressed sensing, which has been widely studied in the literature 
(see e.g., \cite{tropp2006algorithms, tropp2006algorithms2, malioutov2005sparse, lee2012subspace,kim2012compressive, davies2012rank, mishali2008reduce} and refs. therein). 
Since (as in our case) the underlying dictionary may be continuous, more recently off-grid MMV techniques have also been developed  
\cite{tang2013compressed, li2014off, yang2014exact}. 

We will compare the performance of our algorithms with a grid-based MMV approach as in \cite{tropp2006algorithms2},
where the channel coefficients $\bfW=[\bfw(0), \dots, \bfw(T-1)]$ are estimated by \begin{align}\label{eq:l21_optim2}
{\Wm^*}= \argmin_{\Mm \in \bC^{G\times T}} \norm{\Mm}{2,1} \text{ subject to } \norm{\bfX - \bfD \Mm}{} \leq \epsilon,
\end{align}
where $\bfD=[\bfB \bfa(\theta_1), \dots, \bfB \bfa(\theta_G)]$ is a quantized dictionary over a grid of AoAs of size $G$, and where the so-called $\ell_{2,1}$-norm 
of the matrix $\Mm = [\mv_1, \dots, \mv_G]^\transp$  is defined as $\norm{\Mm}{2,1}=\sum_{i=1}^G \|\mv_i\|$, 
where $\bfm_i \in \bC^T$, $i=1,\dots, G$, denote the rows of $\bfM$.
The signal subspace is eventually given by $\sp{\{\bfa(\theta_i): i \in \clA\}}$, where $\clA$ contains 
the index of the ``active'' columns of $\Dm$, i.e., those indexed by the support set of $\bfW^*$. 

We will also compare our algorithms with a grid-less approach inspired by \cite{tang2013compressed, li2014off, yang2014exact}, based on
applying {\em atomic-norm denoising} to the received signal $\bfY$. 
This can be cast as the following semi-definite program (SDP)
\begin{align}\label{eq:atomic_semi}
(\bfT^*, &\bfW^*, \bfZ^*)= \argmin _{\Tm \in \bT_+, \bfW \in \bC^{T \times T}, \bfZ \in \bC^{M\times T} } \tr(\Tm) +  \tr(\bfW)\nonumber\\
&\text{ subject to } \left [ \begin{array}{cc} \Tm & \bfZ\\ \bfZ^\herm & \bfW \end{array} \right ] \succeq {\bf 0}, \norm{\bfX - \bfB\bfZ}{} \leq \epsilon',
\end{align}
where $\Tm^*$ gives an estimate of the signal covariance matrix and
where $\epsilon'$ is an estimate of $\ell_2$-norm of the noise in the projected data. 

In both \eqref{eq:l21_optim2} and \eqref{eq:atomic_semi}, the computational complexity scales with the number of observation snapshots $T$.\footnote{{This scaling depends highly on the specific SDP solver and the structure of the matrix, but it is typically at least of the order $O(T^3)$.}}
This poses a problem  when the training time $T$ is large. Although an ADMM formulation as in \cite{boyd2004convex} is proposed in \cite{li2014off} 
to reduce the computational complexity, the parameters of ADMM need to be selected very carefully to guarantee convergence. 
We shall see that our algorithms perform equally or better than \eqref{eq:l21_optim2} and \eqref{eq:atomic_semi} and have significantly less complexity for large $T$, 
since their complexity does not scale with $T$.

\section{Channel Model and Problem Statement}\label{sec:channel_model}

More general than in (\ref{ziofafa}), the channel vector may be formed by the superposition of a continuum of 
array responses. In order to include this case, we define the AoA {\em scattering function} $\gamma(u)$, 
which describes the received power density along the direction identified by $u \in [-1,1]$, where $u=\frac{\sin(\theta)}{\sin(\theta_{\max})}$ for $\theta\in[-\theta_{\max},\theta_{\max}]$.
We denote the array vector in the $u$ domain by $\bfa(u)$,  where $[\bfa(u)]_k=e^{jk\pi u}$. Then, the channel model is given by 
\begin{align}\label{cont_sig_model}
\bfy(t) = \int_{-1}^1  \sqrt{\gamma(u)} \bfa(u) z(u, t) du+ \bfn(t),
\end{align}
where $z(u,t)$ is a white circularly symmetric Gaussian process with a covariance function 
$\bE\Big [ z(u,t) z(u',t')^* \Big ] = \delta(u - u') \delta_{t,t'}$.
The covariance matrix of $\bfy(t) $ is also given by
\begin{align}
\Cm_y =\int _{-1}^{1} \gamma(u)\bfa(u) \bfa(u)^\herm du+ \sigma^2\bfI_M = \bfS+\sigma^2\bfI_M, \label{C=S+I} 
\end{align}
where $\bfS=\bfS(\gamma):=\int _{-1}^{1} \gamma(u)\bfa(u) \bfa(u)^\herm du$ denotes the covariance matrix of the signal part, 
and where $\sigma^2\bfI_M$ is the covariance matrix of the white additive noise. We define the received SNR by
\begin{align}
\snr:=\frac{\tr(\bfS(\gamma))}{\tr(\sigma^2 \bfI_M)}=\frac{\int_{-1}^1 \gamma(u) \|\bfa(u)\|^2du}{M\sigma^2}=\frac{\int_{-1}^1 \gamma(u) du}{\sigma^2},\nonumber
\end{align}
where $\int_{-1}^1 \gamma(u) du$ is the whole received signal power in a given array element.
For  the ULA, $\bfS$ is a Toeplitz matrix with $[\bfS]_{ij}=[\bff]_{i-j}$, where $\bff$ is an $M$-dim vector 
with $[\bff]_k = \int_{-1}^1 \gamma(u) e^{jk\pi u}\, du$ for $k\in[M]$, 
and corresponds to the $k$-th Fourier coefficient of the density $\gamma$.

We define the best $p$-dim beamforming matrix for the covariance matrix $\bfS$ as $\bfV_p=\argmax _{\bfU \in \bH(M,p)} \inp{\bfS}{\bfU \bfU^\herm}$. Letting $\bfS=\bfV \Lambdam \bfV^\herm=\sum_{i=1}^M \lambda_i \bfv_i \bfv_i^\herm$ be the SVD of $\bfS$, the matrix $\bfV_p$ is an $M\times p$ tall unitary matrix formed by the first $p$ columns of $\bfV$. 
The signal power captured by this beamformer is given by $\inp{\bfS}{\bfV_p\bfV_p^\herm}=\sum_{i=1}^p\lambda_i$.
%
%
%

In this paper, we are concerned with the estimation of $\Vm_p$, for some appropriately chosen $p$, 
from the noisy snapshots of the projected channel (sketches) $\Xm = \Bm \Ym$ obtained during a training period of length $T$, as defined 
at the beginning of Section \ref{sec:lit_review}.
In order to measure the ``goodness'' of estimators, we propose the following performance metric which is relevant 
to the underlying communication problem of JSDM group separation beamforming.
First, we define the efficiency of the best $p$-dim beamformer by
\begin{align}\label{delta_p_def}
\eta_p = \frac{ \inp{\bfS}{\bfV_p \bfV_p^\herm}}{\tr(\bfS)}= \frac{ \tr\big (\bfV_p ^\herm \bfS \bfV_p \big )}{\tr(\bfS)}.
\end{align}
If $\eta_p \approx 1$ for some  $p \ll M$, then a significant amount of signal's power is captured by a low-dim beamformer. 
Let now $\widetilde{\bfV}_p = \widetilde{\bfV}_p(\Xm)$ be an estimator of $\Vm_p$ from the sketches $\Xm$. 
We define the {\em relative efficiency} of $\widetilde{\bfV}_p$ as
\begin{align}\label{eq:perf_metric}
\Gamma_p= \frac{\inp{\bfS}{\widetilde{\bfV}_p \widetilde{\bfV}_p^\herm}}{\inp{\bfS}{\bfV_p  \bfV_p ^\herm}}= 1- \frac{ \inp{\bfS}{\bfV_p  \bfV_p ^\herm}- \inp{\bfS}{\widetilde{\Vm}_p\widetilde{\Vm}_p^\herm} }{\inp{\bfS}{\bfV_p \bfV_p ^\herm}}.
\end{align}
Hence, the efficiency of $\widetilde{\bfV}_p$ is given by $\widetilde{\eta}_p = \Gamma_p \eta_p$, and $1-\Gamma_p$ represents the fraction of signal power lost 
due to the mismatch between the optimal beamformer and its estimate. It is immediate to see that $\Gamma_p \in [0,1]$, where  it is desirable to make it as 
close to $1$ as possible, in particular for those values of $p$ for which $\eta_p \approx 1$.  

\begin{remark} 
We shall compare different subspace estimators for a given channel statistics, number of antennas and number of 
measurements (i.e., $\gamma$, $\snr$, $M$, and  $m$) according to the following procedure: 
1) fix some $\epsilon \in (0,1)$; 2) find minimum $p$ such that $\eta_p \geq 1 - \epsilon$;  3) compare subspace estimators in terms of $\Gamma_p$. 
This approach is quite different from the classical DoA estimation used in array processing (e.g., in radar). 
There, the relevant parameters to be estimated are the AoAs. In our problem, we do not really care about discrete angles, but only about 
a good approximation (in terms of captured signal power) of the span of the corresponding array response vectors. 
It follows that the problem of {\em identifiability} that typically arises in DoA estimation when the minimum angular spacing is too small, 
 is irrelevant here. This is the reason why we can handle continuous AoA scattering functions $\gamma$, in contrast to some SR methods
that assume discrete and sufficiently spaced AoAs. \hfill $\lozenge$
\end{remark}

\section{Proposed Algorithms for Subspace Estimation}

In this section, we introduce the coprime sampling that we will use throughout the paper. 
We explain the proposed algorithms for estimating the signal subspace and provide further intuitions and discussions about their performance. 

\subsection{Coprime Sampling Operator}\label{coprime_subsampling}

Let $\clD$ be a subset of $[M]$ of size $L$ and consider a ULA whose elements are located at $id$ with 
$i \in \clD$ (see Fig.~\ref{fig:sc_channel}). The array is called
a  \textit{minimum-redundancy linear arrays} (MRLA) if for every $\ell \in [M]$, with $\ell\neq 0$, there are unique 
elements $i,i' \in \clD$ such that $\ell=i-i'$. This implies that $M=\frac{L(L-1)}{2}+1$ or approximately $L \approx \sqrt{2M}$. 
Now, consider an arbitrary configuration of sensors $\clD\subset [M]$ and let us define the difference set
\begin{align}
\Delta \clD = \{i-i': i,i' \in \clD \text{ with } i\geq i'\}.
\end{align}
It is clear that $\Delta \clD \subset [M]$. We call $\clD$ a {\em complete cover} (CC) if $\Delta \clD = [M]$. 
This implies that for every $\ell \in [M]$, there is at least a (not necessarily unique) pair $i,i' \in \clD$ such that $\ell=i-i'$. 
By this definition, the location of sensors for a MRLA builds a CC with a minimum size.  
For large values of $M$, it is possible to build a CC of size $2 \sqrt{M}$ by coprime sampling \cite{ vaidyanathan2011sparse,vaidyanathan2011theory}. 
Let $q_1,q_2$ be coprime numbers (i.e., $\gcd(q_1,q_2)=1$) that are very close to each other, and satisfy 
$q_1 q_2 \approx M$, such that $q_1\approx q_2\approx \sqrt{M}$. Let $\clD$ be the set of all nonnegative integer combinations 
of $q_1$ and $q_2$ less than  or equal to $M-1$, i.e.,
$\clD=\cup_{i=1,2} \{k: k\in[M],\, \text{mod}(k,q_i)=0\}$. Note that $|\clD|\approx 2 \sqrt{M}$. 
We define the covering set of an element $k \in [M]$ by
\begin{align}\label{covering_set_k}
\clX_k=\{(i,i'): i,i'\in \clD,\ i\geq i', i-i'=k\},
\end{align}
and its size by $c_k=|\clX_k|$.
For suitable selection of $q_1$ and $q_2$ and for sufficiently large $M$, $c_k\geq 1$ for almost all $k \in [M]$, the set $\Delta \clD$ is approximately equal to $[M]$, and $\clD$ is a CC for $[M]$.\footnote{For small values of $M$, the set $\clD$ might not be a CC, however, as we will explain the performance of our proposed algorithms will not change dramatically as far as the number of uncovered elements in $[M]$ is negligible compared with $M$.} 
In the rest of the paper, we always assume that $\clD$ is a CC for $[M]$.  
Suppose  the elements $d_i$ of $\clD$ are sorted in increasing order with $d_i$ being the $i$-th largest element in the list. 
Also, we let $m=|\clD|$ and let $\bfB$ be the $m \times M$ binary matrix with elements 
 $[\bfB]_{i,d_i}=1$ for $i\in \{1, \dots, m\}$ and zero otherwise. It is immediate to check that $\bfB\bfB^\herm=\bfI_{m}$. 
We will use $\bfB$ as the projection matrix to produces the low-dim observations $\Xm = \Bm\Ym$. 
In passing, this has the advantage that the projection reduces to array subsampling, or ``antenna selection'', which is very easy to implement 
in the analog RF domain by simple switches connecting the selected antennas 
to the RF demodulation chains and A/D converters.  

The first most basic property that a projection matrix $\Bm$ must satisfy in order to allow for efficient subspace estimation is identifiability, 
that is, the associated matrix map must be a bijection when restricted to the class of signal covariance matrices generated by 
the model at hand. For coprime sampling, this is ensured by the following result.

\begin{proposition}\label{prop:bijection}
Let $\bfS$ be an $M\times M$ Hermitian Toeplitz matrix and let $\bfB$ be the coprime sampling matrix. 
Then the mapping $\bfS \to \bfB \bfS \bfB^\herm$ is a bijection.  \hfill $\square$
\end{proposition}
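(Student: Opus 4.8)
\noindent\emph{Proof proposal.}
The plan is to reduce everything to the observation that a Hermitian Toeplitz matrix is determined by the $M$ scalars in its first column, and that the complete-cover property guarantees each of these scalars is read off directly from the sampled submatrix. First I would parametrize $\bT$: write $[\bfS]_{k,k'} = f_{k-k'}$ with the Hermitian convention $f_{-\ell} = \overline{f_\ell}$, so that $\bfS$ is specified by $f_0 \in \bR$ and $f_1,\dots,f_{M-1} \in \bC$, identifying $\bT$ with $\bR^{2M-1}$. Since $\bfB$ merely extracts the rows and columns indexed by $\clD = \{d_1 < \cdots < d_m\}$, the map $\Phi:\bfS \mapsto \bfB\bfS\bfB^\herm$ satisfies $[\Phi(\bfS)]_{i,j} = [\bfS]_{d_i,d_j} = f_{d_i - d_j}$; in particular $\Phi$ is $\bR$-linear and every entry of $\Phi(\bfS)$ is one of the coefficients $f_\ell$ (or its conjugate).

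Next I would note that, $\Phi$ being linear, being a bijection onto its image is equivalent to injectivity, i.e.\ to $\ker\Phi = \{{\bf 0}\}$ --- there is no surjectivity claim onto all Hermitian $m\times m$ matrices, nor could there be, since $m \approx 2\sqrt M$ gives $m^2 > 2M-1$. So the entire content is the implication $\Phi(\bfS) = {\bf 0} \Rightarrow \bfS = {\bf 0}$. To establish it, suppose $[\bfS]_{d_i,d_j} = 0$ for all $i,j$, and fix an arbitrary $\ell \in [M]$. By the complete-cover property $\Delta\clD = [M]$, so $\ell = d_a - d_b$ for some indices $a,b$ with $d_a \geq d_b$, whence $f_\ell = [\bfS]_{d_a,d_b} = [\Phi(\bfS)]_{a,b} = 0$. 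Since $\ell \in [M]$ was arbitrary, all generating coefficients of $\bfS$ vanish, so $\bfS = {\bf 0}$ and $\Phi$ is injective.

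I do not expect a genuine obstacle here: the proof is essentially a restatement of ``$\Delta\clD = [M]$'' as ``every Fourier coefficient of the scattering density survives the subsampling.'' The two points that warrant a little care are (i) making explicit that ``bijection'' means bijection onto the image $\Phi(\bT)$ --- equivalently, that $\bfS$ is identifiable from $\bfB\bfS\bfB^\herm$ --- rather than onto all of $\bC^{m\times m}$, and (ii) tracking the conjugation $f_{-\ell} = \overline{f_\ell}$ so that differences of either sign are accounted for. If a literal bijection is wanted, I would phrase the conclusion as: $\Phi$ is a linear isomorphism of $\bT$ onto the subspace $\Phi(\bT) \subseteq \bC^{m\times m}$, which is exactly what the identifiability application requires.
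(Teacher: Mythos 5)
Your proposal is correct and follows essentially the same route as the paper: both arguments rest on the fact that a Hermitian Toeplitz matrix is determined by the vector $\bff$ of its first-column entries, that $[\bfB\bfS\bfB^\herm]_{i,i'}=[\bff]_{d_i-d_{i'}}$, and that the complete-cover property $\Delta\clD=[M]$ lets every $[\bff]_k$ be read off from some entry of the sampled matrix. Your reformulation via linearity and triviality of the kernel, and your explicit remark that ``bijection'' means onto the image $\Phi(\bT)$, are harmless clarifications of the same argument.
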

\begin{proof}
Since $\bfS$ is Toeplitz, for any $i, j\in [M]$ with $i\geq j$, we have $[\bfS]_{i,j}=[\bff]_{i-j}$, for some $M$-dim vector $\bff$. Also, as $\bfS$ is Hermitian, $\bff$ fully specifies $\bfS$. Let $i,i' \in \{1, \dots, m\}$ with $i\geq i'$. We can check that
\begin{align}\label{eq:cpr_eq}
[\bfB\bfS \bfB^\herm]_{i,i'}=[\bfS]_{d_i,d_{i'}}=[\bff]_{d_i-d_{i'}}.
\end{align}
As $\clD$ is a complete cover for $[M]$, for any $k \in [M]$ there are $d_i,d_{i'} \in \clD$ such that $d_i-d_{i'}=k$, which using \eqref{eq:cpr_eq} implies that $[\bfB\bfS \bfB^\herm]_{i,i'}=[\bff]_k$. Thus, all the elements of $\bfS$ can be recovered from the low-dim matrix $\bfB\bfS \bfB^\herm$ and vice-versa, thus, the mapping is a bijection.
\end{proof}

\begin{remark}
Although in this paper, for simplicity of implementation, we focus on a coprime sampling matrix $\bfB$, all the proposed algorithms, 
except the super-resolution (\nameSR) algorithm in Section \ref{sec:SR},  can be applied to other sampling matrices (e.g.,  
 i.i.d. Gaussian matrices). \hfill $\lozenge$
\end{remark}

\subsection{\algML: Approximate Maximum Likelihood (\nameML) Estimator}
For the signal model (\ref{cont_sig_model}), we can immediately prove  the following result.
\begin{proposition}\label{prop:suff_stat1}
Let $\widehat{\bfC}_x=\frac{1}{T} \Xm\Xm^\herm$ be the sample covariance of the observations $\Xm$. Then $\widehat{\bfC}_x$ is a sufficient statistics for 
estimating the signal covariance matrix $\Sm$. \hfill $\square$
\end{proposition}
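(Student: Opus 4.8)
The plan is to exhibit the likelihood of the observed sketches $\bfX$ as an explicit function of the unknown signal covariance $\bfS$ and to show that it depends on the data only through $\widehat{\bfC}_x$; sufficiency then follows from the Fisher--Neyman factorization theorem. First I would note that each snapshot $\bfy(t)$ in \eqref{cont_sig_model} is a zero-mean circularly symmetric complex Gaussian vector: it is a bounded linear functional of the white Gaussian process $z(\cdot,t)$ (whose covariance $\bE[z(u,t)z(u',t')^*]=\delta(u-u')\delta_{t,t'}$ makes the integral a well-defined Gaussian vector) plus the independent Gaussian noise $\bfn(t)$, and by \eqref{C=S+I} its covariance is $\bfC_y=\bfS+\sigma^2\bfI_M$. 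Because the processes $z(\cdot,t)$ are statistically independent across $t\in[T]$, the snapshots $\bfy(t)$ are i.i.d., hence so are the sketches $\bfx(t)=\bfB\bfy(t)\sim\cg({\bf 0},\bfC_x)$ with
\[
\bfC_x=\bfB\bfC_y\bfB^\herm=\bfB\bfS\bfB^\herm+\sigma^2\bfI_m,
\]
the last step using $\bfB\bfB^\herm=\bfI_m$ from Section~\ref{coprime_subsampling}.

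Next I would write the joint density of the $T$ i.i.d. columns,
\[
p(\bfX\mid\bfS)=\prod_{t=1}^T\frac{1}{\pi^m\det(\bfC_x)}\,e^{-\bfx(t)^\herm\bfC_x^{-1}\bfx(t)},
\]
and collapse the exponent using the cyclic property of the trace: $\sum_{t}\bfx(t)^\herm\bfC_x^{-1}\bfx(t)=\tr\big(\bfC_x^{-1}\sum_t\bfx(t)\bfx(t)^\herm\big)=T\,\tr(\bfC_x^{-1}\widehat{\bfC}_x)$, so that
\[
p(\bfX\mid\bfS)=\frac{1}{\pi^{mT}\det(\bfC_x)^T}\,\exp\!\big(-T\,\tr(\bfC_x^{-1}\widehat{\bfC}_x)\big).
\]
The right-hand side is a function of $\bfX$ only through $\widehat{\bfC}_x$; taking $h(\bfX)\equiv1$ and letting $g(\cdot\,;\bfS)$ be this expression, the Fisher--Neyman criterion yields that $\widehat{\bfC}_x$ is sufficient for $\bfS$ --- equivalently for the scattering function $\gamma$, since the law of $\bfX$ depends on $\gamma$ only through $\bfS=\bfS(\gamma)$.

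I do not anticipate a real obstacle: this is the classical fact that, for a zero-mean Gaussian family, the sample covariance is sufficient, transported through the linear sketching map $\bfY\mapsto\bfB\bfY$. The only points deserving a careful line are the Gaussianity of $\bfy(t)$ under the continuum model \eqref{cont_sig_model}, which is immediate from the stated white-noise covariance, and the reduction of the sketched noise covariance to $\sigma^2\bfI_m$, which relies on $\bfB\bfB^\herm=\bfI_m$ for the coprime sampling matrix. If $\sigma^2$ were also treated as unknown, the identical computation shows $\widehat{\bfC}_x$ is jointly sufficient for $(\bfS,\sigma^2)$, since $\bfC_x$ is determined by that pair.
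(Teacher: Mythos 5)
Your proposal is correct and follows essentially the same route as the paper: both write the Gaussian likelihood of $\bfX$ with covariance $\bfB\bfS\bfB^\herm+\sigma^2\bfI_m$, collapse the exponent to $-T\,\tr\big(\widehat{\bfC}_x\,\bfC_x^{-1}\big)$ via the cyclic trace identity, and invoke the Fisher--Neyman factorization. You merely spell out the ``simple algebra'' and the Gaussianity of the continuum model that the paper leaves implicit.
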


\begin{proof}
Recall that $\bfC_x=\bfB \bfC_y \bfB^\herm=\bfB \bfS \bfB^\herm + \sigma^2\bfI_m$, 
where we have explicitly used the fact that $\Bm\Bm^\herm = \Id_m$. 
As the observations $\bfX$ are Gaussian, after some simple algebra the likelihood function is given by
\begin{align} \label{eq:suff_stat}
p(\Xm|\Sm) &=
\frac{\exp\Big\{-T\, \tr\Big (\widehat{\bfC}_x (\Bm\Sm\Bm^\herm   + \sigma^2\bfI_m)^{-1}\Big )\Big \}}{\pi^{Tm}\, \sdet(\Bm\Sm\Bm^\herm   + \sigma^2\bfI_m)^{T}}.
\end{align}
It follows that the likelihood function depends on $\Xm$ only via $\widehat{\bfC}_x$. 
From the Fischer-Neyman factorization theorem \cite{neyman1936teorema}, 
it follows that  $\widehat{\bfC}_x$ is a sufficient statistics.
\end{proof}

We always assume that the noise variance $\sigma^2$ can be estimated during the system's operation. In this section, for simplicity of the notation, we suppose that the input signal is scaled by $\frac{1}{\sigma}$, and denote the resulting sample covariance by $\widehat{\bfC}_{\widetilde{x}}$, where due to normalization $\widehat{\bfC}_{\widetilde{x}}=\widehat{\bfC}_x/\sigma^2$. Then, the Maximum-Likelihood (ML) estimator for the normalized subsampled data can be written as $\widetilde{\bfS}^* = \argmin_{\widetilde{\bfS} \in \bT_+} L(\widetilde{\bfS})$, where $\widetilde{\bfS}=\bfS/\sigma^2$, and where $L(\widetilde{\bfS})$ is the minus log-likelihood function given by
\begin{align*}
L(\widetilde{\bfS})= \log \sdet(\bfI_m+\bfB \widetilde{\bfS} \bfB^\herm ) + \tr\Big (\widehat{\bfC}_{\widetilde{x}} (\bfI_m+ \bfB \widetilde{\bfS} \bfB^\herm)^{-1} \Big ).
\end{align*}
By direct inspection, we have the following result.
\begin{proposition}\label{prop:ml_cav_vex}
$L(\widetilde{\bfS})$ is the sum of a concave function $\lcav(\widetilde{\bfS})= \log \sdet(\bfI_m+\bfB \widetilde{\bfS} \bfB^\herm )$ and a convex function 
$\lvex(\widetilde{\bfS})= \tr\Big (\widehat{\bfC}_{\widetilde{x}} (\bfI_m+ \bfB \widetilde{\bfS} \bfB^\herm)^{-1} \Big )$.  \hfill $\square$
\end{proposition}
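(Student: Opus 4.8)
The plan is to observe that the coprime structure of $\bfB$ is irrelevant here: the only property used is that $\widetilde{\bfS}\mapsto\Phi(\widetilde{\bfS}):=\bfI_m+\bfB\widetilde{\bfS}\bfB^\herm$ is an affine map. Indeed $\Phi$ is linear in $\widetilde{\bfS}$ plus the constant $\bfI_m$, and for every $\widetilde{\bfS}\in\bT_+$ we have $\bfB\widetilde{\bfS}\bfB^\herm\succeq 0$, so $\Phi(\widetilde{\bfS})\succeq\bfI_m\succ 0$. Hence $\Phi$ maps $\bT_+$ into the open cone of Hermitian positive definite $m\times m$ matrices, on which both $\log\sdet(\cdot)$ and $(\cdot)^{-1}$ are well defined, so $\lcav=\log\sdet\circ\,\Phi$ and $\lvex=\tr\big(\widehat{\bfC}_{\widetilde{x}}(\cdot)^{-1}\big)\circ\Phi$ are well defined on all of $\bT_+$, and $L=\lcav+\lvex$ by the displayed expression for $L(\widetilde{\bfS})$. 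It then suffices to prove that $\bfM\mapsto\log\sdet(\bfM)$ is concave and $\bfM\mapsto\tr(\widehat{\bfC}_{\widetilde{x}}\bfM^{-1})$ is convex on the positive definite cone, since precomposition with an affine map preserves concavity and convexity respectively.

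For concavity of $\bfM\mapsto\log\sdet(\bfM)$, I would restrict to an arbitrary line $\bfM_0+t\Delta$ with $\bfM_0\succ 0$ and $\Delta$ Hermitian, factor $\bfM_0+t\Delta=\bfM_0^{1/2}\big(\bfI_m+t\,\bfM_0^{-1/2}\Delta\bfM_0^{-1/2}\big)\bfM_0^{1/2}$, and get $\log\sdet(\bfM_0+t\Delta)=\log\sdet(\bfM_0)+\sum_{i=1}^m\log(1+t\mu_i)$, where the $\mu_i$ are the real eigenvalues of $\bfM_0^{-1/2}\Delta\bfM_0^{-1/2}$; each summand is concave in $t$, hence the restriction is concave. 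For convexity of $\bfM\mapsto\tr(\widehat{\bfC}_{\widetilde{x}}\bfM^{-1})$, I would use that the sample covariance is positive semidefinite, so $\widehat{\bfC}_{\widetilde{x}}=\bfR\bfR^\herm$ for some $\bfR$ (e.g.\ $\bfR=\widetilde{\Xm}/\sqrt{T}$), whence $\tr(\widehat{\bfC}_{\widetilde{x}}\bfM^{-1})=\tr(\bfR^\herm\bfM^{-1}\bfR)=\sum_j\bfr_j^\herm\bfM^{-1}\bfr_j$, a sum over the columns $\bfr_j$ of $\bfR$. A sum of convex functions is convex, so it remains to show each $\bfM\mapsto\bfr^\herm\bfM^{-1}\bfr$ is convex on the positive definite cone. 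This follows from the Schur-complement identity
\[
\left\{(\bfM,s):\ \bfM\succ 0,\ s\ge\bfr^\herm\bfM^{-1}\bfr\right\}
=\left\{(\bfM,s):\ \left[\begin{array}{cc}\bfM&\bfr\\ \bfr^\herm&s\end{array}\right]\succeq 0\right\},
\]
which exhibits the epigraph as a spectrahedron, hence convex; alternatively, restricting to a line $\bfM_0+t\Delta$ one computes that the second derivative at $t=0$ equals $2\,\bfz^\herm\bfN^2\bfz\ge 0$ with $\bfN=\bfM_0^{-1/2}\Delta\bfM_0^{-1/2}$ and $\bfz=\bfM_0^{-1/2}\bfr$.

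Putting the pieces together, $\lcav$ is a concave function precomposed with the affine map $\Phi$, hence concave; $\lvex$ is a convex function precomposed with $\Phi$, hence convex; and their sum is the stated decomposition of $L$. I do not expect any genuine obstacle: the only points requiring a moment's care are confirming that $\Phi(\widetilde{\bfS})$ is always positive definite, so that the determinant and the inverse are harmless on the relevant domain, and invoking $\widehat{\bfC}_{\widetilde{x}}\succeq 0$ in the convexity step — both are immediate. All convexity facts invoked are standard and can alternatively be cited from \cite{boyd2004convex}.
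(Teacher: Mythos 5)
Your proof is correct, and it follows exactly the route the paper intends: the paper offers no written argument for this proposition (it is stated as following ``by direct inspection''), the intended inspection being precisely the standard facts you spell out — concavity of $\log\sdet$ and convexity of the matrix fractional function $\bfM \mapsto \tr(\widehat{\bfC}_{\widetilde{x}}\bfM^{-1})$ on the positive definite cone, preserved under precomposition with the affine map $\widetilde{\bfS} \mapsto \bfI_m + \bfB\widetilde{\bfS}\bfB^\herm$. Your verification that this affine map sends $\bT_+$ into the positive definite cone, and your two independent justifications of the convexity step (Schur-complement epigraph and the second-derivative computation), are all sound.
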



As $L(\widetilde{\bfS})$ is not convex, local optimization techniques such as gradient descent are not guaranteed to converge to the globally optimal solution. 
Since $\widetilde{\bfS}$ scales with SNR, it is possible to obtain a convex (indeed, linear) approximation of the concave function 
$\lcav(\widetilde{\bfS})$, which is tight especially for low SNR. More precisely, we have the following result.  

\begin{proposition}\label{prop:cav_lin_approx}
$\lcav(\widetilde{\bfS})\leq \tr(\bfB\widetilde{\bfS} \bfB^\herm)$ for all $\widetilde{\bfS} \in \bT_+$. 
Moreover, for the low-SNR regime ($\snr \ll 1$), we have $\lcav(\widetilde{\bfS}) = \tr(\bfB\widetilde{\bfS} \bfB^\herm) + o(\snr)$.  \hfill $\square$
\end{proposition}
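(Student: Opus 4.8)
The plan is to reduce both statements to the scalar estimate $\log(1+x)\le x$ and its second-order refinement $\log(1+x)=x+O(x^2)$, applied to the eigenvalues of $\bfM:=\bfB\widetilde{\bfS}\bfB^\herm$.

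First I would note that since $\widetilde{\bfS}\in\bT_+$ it is positive semi-definite, so $\bfM$ is an $m\times m$ Hermitian positive semi-definite matrix; let $\mu_1,\dots,\mu_m\ge 0$ be its eigenvalues. Then $\lcav(\widetilde{\bfS})=\log\sdet(\bfI_m+\bfM)=\sum_{i=1}^m\log(1+\mu_i)$, while $\tr(\bfB\widetilde{\bfS}\bfB^\herm)=\tr(\bfM)=\sum_{i=1}^m\mu_i$. Applying $\log(1+x)\le x$ (valid for every $x\ge 0$) term by term immediately yields $\lcav(\widetilde{\bfS})\le\tr(\bfB\widetilde{\bfS}\bfB^\herm)$, which is the first assertion.

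For the low-SNR refinement I would first determine the scaling of $\tr(\bfM)$ with $\snr$. Because $\bfS$ is Toeplitz, all its diagonal entries equal $[\bff]_0=\int_{-1}^1\gamma(u)\,du$, so all diagonal entries of $\widetilde{\bfS}=\bfS/\sigma^2$ equal $\int_{-1}^1\gamma(u)\,du/\sigma^2=\snr$; since $\bfB$ merely selects the rows (and columns) indexed by $\clD$, the diagonal of $\bfM$ consists of $m$ copies of $\snr$, whence $\tr(\bfB\widetilde{\bfS}\bfB^\herm)=m\,\snr$. In particular $0\le\mu_i\le m\,\snr$ for each $i$, so every $\mu_i=O(\snr)$ as $\snr\to 0$ (with $M$, hence $m$, held fixed). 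Using $\log(1+x)=x+O(x^2)$ near $x=0$ together with $\sum_i\mu_i^2\le(\sum_i\mu_i)^2=(m\,\snr)^2$, I obtain
\[
\lcav(\widetilde{\bfS})=\sum_{i=1}^m\mu_i+O\!\Big(\sum_{i=1}^m\mu_i^2\Big)=\tr(\bfB\widetilde{\bfS}\bfB^\herm)+O(\snr^2)=\tr(\bfB\widetilde{\bfS}\bfB^\herm)+o(\snr),
\]
which is the second assertion.

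There is no serious obstacle here: the only point that needs care is keeping track of the fact that $\widetilde{\bfS}$, and therefore $\bfM$, scales linearly with $\snr$, so that all the eigenvalues $\mu_i$ are uniformly $O(\snr)$ and the quadratic remainder in the logarithm is genuinely $o(\snr)$ rather than merely bounded. This is precisely the role played by the $1/\sigma$ normalization of the input signal introduced just before the proposition.
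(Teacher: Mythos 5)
Your proof is correct and follows essentially the same route as the paper's: diagonalize $\bfB\widetilde{\bfS}\bfB^\herm$ and apply the scalar bounds $\log(1+x)\le x$ and $\log(1+x)=x+O(x^2)$ to its eigenvalues. In fact you are slightly more explicit than the paper, which states the remainder as $o(\tr(\bfH))$ and leaves implicit the link $\tr(\bfB\widetilde{\bfS}\bfB^\herm)=m\,\snr$ that you spell out via the Toeplitz diagonal.
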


\begin{proof} 
See Appendix \ref{prop:cav_lin_approx_app}.
\end{proof}

Proposition \ref{prop:cav_lin_approx} states that for low SNR, 
$\tr(\bfB \widetilde{\bfS} \bfB^\herm)$ is the best linear approximation for $\lcav(\widetilde{\bfS})$, which implies that  
\begin{align}\label{eq:aml_1}
L_\text{app}(\widetilde{\bfS})= \tr(\bfB \widetilde{\bfS} \bfB^\herm) + \tr(\widehat{\bfC}_{\widetilde{x}} (\bfI_m+ \bfB \widetilde{\bfS} \bfB^\herm)^{-1}),
\end{align}
is the best convex upper bound for $L(\widetilde{\bfS})$. We define the \textit{approximate maximum likelihood} (AML) estimator for the normalized input by $\widetilde{\bfS}^*=\argmin_{\widetilde{\bfS} \in \bT_+} L(\widetilde{\bfS})$.

\begin{remark}
It is interesting to note that this approximation is valid independent of the length of the training period $T$ as far as $\snr$ is sufficiently small. 
Although the total signal-to-noise ratio of the estimation problem increases by increasing $T$, 
the validity of this approximation only depends on the SNR of an individual sample 
rather than the accumulative signal-to-noise ratio of the whole training samples. On the other hand, increasing $T$ yields
$\widehat{\bfC}_x\rightarrow \Cm_x = \EE[\xv(t)\xv(t)^\herm]$ by consistency of the sample covariance estimator, and improves the estimation.  \hfill $\lozenge$
\end{remark}

The next proposition shows that the AML estimation can be cast as an SDP. 

\begin{proposition}\label{prop:ml_semi_def}
Let $L_\text{app}(\widetilde{\bfS})= \tr(\bfB \widetilde{\bfS} \bfB^\herm) + \tr(\widehat{\bfC}_{\widetilde{x}} (\bfI_m+ \bfB \widetilde{\bfS} \bfB^\herm)^{-1})$ and let $\widehat{\bfC}_{\widetilde{x}}= \bfU \mathbf{\Lambda} \bfU^\herm$ be the SVD of $\widehat{\bfC}_{\widetilde{x}}$. Then the AML estimate can be obtained from the following SDP
\begin{align}\label{eq:ml_semi}
(\widetilde{\bfS}^*, \bfW^*) =& \argmin_{\Mm \in \bT_+, \bfW} \tr(\bfB \Mm \bfB^\herm) + \tr(\bfW)\nonumber\\
& \text{ subject to } \left [  \begin{array}{cc} \bfI_m+ \bfB \Mm \bfB^\herm & \widetilde{\mathbf{\Delta}} \\ \widetilde{\mathbf{\Delta}}^\herm & \bfW \end{array} \right ] \succeq {\bf 0},
\end{align}
where $\widetilde{\mathbf{\Delta}} = \widehat{\bfC}_{\widetilde{x}}^{1/2}=\bfU \mathbf{\Lambda}^{1/2}$.  \hfill $\square$
\end{proposition}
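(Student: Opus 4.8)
The plan is to recognize \eqref{eq:ml_semi} as a standard epigraph / Schur-complement reformulation of the term $\tr\big(\widehat{\bfC}_{\widetilde{x}} (\bfI_m+ \bfB \Mm \bfB^\herm)^{-1}\big)$ in $L_\text{app}$, while the first term $\tr(\bfB\Mm\bfB^\herm)$ carries over verbatim as a linear (hence SDP-compatible) objective. Concretely, I would first write $\widetilde{\mathbf{\Delta}}=\widehat{\bfC}_{\widetilde{x}}^{1/2}=\bfU\mathbf{\Lambda}^{1/2}$, so that $\widehat{\bfC}_{\widetilde{x}}=\widetilde{\mathbf{\Delta}}\widetilde{\mathbf{\Delta}}^\herm$, and observe that for any $\Mm\in\bT_+$ the matrix $\bfP:=\bfI_m+\bfB\Mm\bfB^\herm$ is Hermitian positive definite (since $\bfB\Mm\bfB^\herm\succeq0$), hence invertible, so the cost is well defined.

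Next I would invoke the Schur complement lemma: for $\bfP\succ0$, the block matrix $\left[\begin{array}{cc}\bfP & \widetilde{\mathbf{\Delta}}\\ \widetilde{\mathbf{\Delta}}^\herm & \bfW\end{array}\right]\succeq{\bf 0}$ holds if and only if $\bfW\succeq \widetilde{\mathbf{\Delta}}^\herm \bfP^{-1}\widetilde{\mathbf{\Delta}}$. Taking the trace of this matrix inequality and using linearity and the cyclic property of the trace gives $\tr(\bfW)\ge \tr\big(\widetilde{\mathbf{\Delta}}^\herm \bfP^{-1}\widetilde{\mathbf{\Delta}}\big)=\tr\big(\bfP^{-1}\widetilde{\mathbf{\Delta}}\widetilde{\mathbf{\Delta}}^\herm\big)=\tr\big(\widehat{\bfC}_{\widetilde{x}}\,\bfP^{-1}\big)$. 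Hence, for each fixed $\Mm$, the minimum of $\tr(\bfW)$ over feasible $\bfW$ equals exactly $\tr\big(\widehat{\bfC}_{\widetilde{x}}(\bfI_m+\bfB\Mm\bfB^\herm)^{-1}\big)$, attained at $\bfW=\widetilde{\mathbf{\Delta}}^\herm \bfP^{-1}\widetilde{\mathbf{\Delta}}$. Adding the common term $\tr(\bfB\Mm\bfB^\herm)$, the objective of \eqref{eq:ml_semi}, after the inner minimization over $\bfW$, coincides pointwise with $L_\text{app}(\Mm)$ on $\bT_+$; therefore minimizing the joint objective over $(\Mm,\bfW)$ with $\Mm\in\bT_+$ is equivalent to minimizing $L_\text{app}$ over $\bT_+$, and the $\Mm$-component $\widetilde{\bfS}^*$ of any optimal pair is an AML estimate. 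Finally I would note that the constraint set is indeed an SDP feasibility region: $\bT_+$ is described by a linear-matrix-inequality (positive semidefiniteness of a matrix depending affinely on the free Toeplitz parameters), the block positive-semidefiniteness constraint is an LMI in $(\Mm,\bfW)$ since $\bfI_m+\bfB\Mm\bfB^\herm$ is affine in $\Mm$, and the objective is linear; so \eqref{eq:ml_semi} is a genuine semidefinite program.

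The only genuinely delicate point — and the one I would treat with a little care rather than the boilerplate Schur-complement manipulation — is the \emph{strict} positive definiteness needed to apply the Schur complement cleanly: one must note $\bfP=\bfI_m+\bfB\Mm\bfB^\herm\succeq\bfI_m\succ0$ for every $\Mm\in\bT_+$, which holds because $\bfB\Mm\bfB^\herm\succeq0$, so there is no boundary/degenerate case to worry about and the equivalence of the trace-minimized SDP objective with $L_\text{app}$ is exact, not merely an inequality. A secondary bookkeeping point is to confirm that the optimal $\bfW^*$ is automatically Hermitian (it is, being $\widetilde{\mathbf{\Delta}}^\herm\bfP^{-1}\widetilde{\mathbf{\Delta}}$ with $\bfP^{-1}$ Hermitian), consistent with the block-PSD constraint forcing $\bfW$ Hermitian. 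With these observations in place the proof is a short, self-contained argument.
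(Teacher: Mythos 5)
Your proof is correct and rests on the same key tool as the paper's: the Schur complement condition $\bfW\succeq\widetilde{\mathbf{\Delta}}^\herm(\bfI_m+\bfB\Mm\bfB^\herm)^{-1}\widetilde{\mathbf{\Delta}}$, followed by a trace argument. The paper phrases this as a two-sided inequality sandwich between $L_\text{app}$ evaluated at the AML optimum and at the SDP optimum, whereas you phrase it as an exact inner minimization over $\bfW$ (an epigraph reformulation); these are the same argument in slightly different clothing, and your version is, if anything, a bit cleaner.
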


\begin{proof} 
See Appendix \ref{prop:ml_semi_def_app}.
\end{proof}

\noindent Some remarks about optimization problem \eqref{eq:ml_semi} are in order.

\begin{remark}
Although the optimization algorithm \eqref{eq:ml_semi} gives an approximation of the ML estimate for low-SNR regime, it does not need the explicit knowledge of SNR. However, an estimate of noise level in the array is necessary to scale the input data. By Proposition \ref{prop:cav_lin_approx}, we expect that the performance of AML be very close to the performance of the ML 
in the low-SNR regime. \hfill $\lozenge$
\end{remark}

\begin{remark}
After descaling all the parameters by $\sigma$, the SDP in \eqref{eq:ml_semi} can be equivalently written as
\begin{align}\label{eq:ml_semi_descale}
({\bfS}^*,&\bfW^*) = \argmin_{\Mm \in \bT_+, \bfW} \tr(\bfB \Mm \bfB^\herm) + \tr(\bfW)\nonumber\\
& \text{ subject to } \left [  \begin{array}{cc} \sigma^2 \bfI_m+ \bfB \Mm \bfB^\herm & \mathbf{\Delta} \\ \mathbf{\Delta}^\herm & \bfW \end{array} \right ] \succeq {\bf 0},
\end{align}
where $\mathbf{\Delta} = \sigma \widetilde{\mathbf{\Delta}}=\widehat{\bfC}_{x}^{1/2}$, where $\widehat{\bfC}_{x}$ is the sample covariance of the input data without any normalization. This can be used to directly estimate $\bfS$. The optimization \eqref{eq:ml_semi_descale} shows a close resemblance to the 
atomic norm computation introduced in \eqref{eq:atomic_semi} for the MMV problem. 
However, there are also some interesting differences.
For example, in \eqref{eq:ml_semi_descale}, the noise variance $\sigma^2$ and the sampling operator $\bfB$ directly appear in the SDP constraint, whereas in MMV formulation \eqref{eq:atomic_semi}, they appear as an additional regularization term, i.e., $\|\bfX- \bfB \bfT\|\leq \epsilon'$, where $\epsilon'$ can be set by knowing the value of $\sigma$.  
Moreover, the whole observation $\bfX$ during the training period has been replaced by 
$\mathbf{\Delta}=\widehat{\bfC}_{{x}}^{1/2}$ in \eqref{eq:ml_semi_descale}, thus, its complexity is independent of the training length $T$.
\hfill $\lozenge$
\end{remark}

\noindent{\bf Improving AML via \textit{Concave-Convex Procedure}.}  As the cost function $L(\widetilde{\bfS})=\lcav(\widetilde{\bfS})+ \lvex(\widetilde{\bfS})$ is a sum of a convex and a concave function, by slightly modifying algorithm \eqref{eq:ml_semi}, we can obtain better estimates of the signal covariance matrix $\widetilde{\bfS}$ even for high-SNR regime 
via the \textit{concave-convex procedure} (CCCP) \cite{yuille2003concave}. This consists in running a sequence of convex programs iteratively, 
such that in each iteration a better estimate of the optimal (not necessarily the globally optimal) ML solution is computed. 
Let $\widetilde{\bfS}_\ell$, $\ell=0,1,\dots$, denote the  estimate generated at iteration $\ell$. 
Consider the iteration $k$, where the estimates $\widetilde{\bfS}_1, \widetilde{\bfS}_2, \dots, \widetilde{\bfS}_k$ have already been computed. 
Given the last estimate $\widetilde{\bfS}_k$, let $\Gammam_k:=\bfI_m+ \bfB \widetilde{\bfS}_k \bfB^\herm$. 
Then, we can approximate the concave function $\lcav(\widetilde{\bfS})$ as 

\vspace{-2mm}{\small
\begin{align}
\lcav&(\widetilde{\bfS})=\log \sdet(\bfI_m+\bfB\widetilde{\bfS} \bfB^\herm)\nonumber\\
&=\log \sdet (\Gammam_k + \bfB(\widetilde{\bfS} -\widetilde{\bfS}_k)\bfB^\herm)\nonumber\\
&= \log \sdet (\Gammam_k) + \log \sdet \Big(\bfI_m +\Gammam_k^{-1/2} \bfB(\widetilde{\bfS} -\widetilde{\bfS}_k)\bfB^\herm \Gammam_k^{-1/2} \Big )\nonumber\\
&\stackrel{(a)}{\leq} \log \sdet (\bfI+ \bfB \widetilde{\bfS}_k \bfB^\herm) + \tr\Big ( \Gammam_k^{-1/2} \bfB(\widetilde{\bfS} -\widetilde{\bfS}_k)\bfB^\herm \Gammam_k^{-1/2}\Big )\nonumber\\
&=\lcav(\widetilde{\bfS}_k) + \inp{\bfB^\herm \Gammam_k^{-1} \bfB}{\widetilde{\bfS} -\widetilde{\bfS}_k},
\end{align}}%
where in $(a)$, we used an extension of Proposition \ref{prop:cav_lin_approx} proved in Appendix \ref{prop:cav_lin_approx_app} to upper bound 
$\log \sdet (\bfI_m+ \bfH)$ for a Hermitian PSD matrix $\bfH$ by $\trace(\bfH)$.
We also define
\begin{align*}
\Upsilon_k (\widetilde{\bfS};\widetilde{\bfS}_k):=\lcav(\widetilde{\bfS}_k) + \inp{\bfB^\herm \Gammam_k^{-1} \bfB}{\widetilde{\bfS} -\widetilde{\bfS}_k},
\end{align*}
and $L_k(\widetilde{\bfS}; \widetilde{\bfS}_k) := \Upsilon_k (\widetilde{\bfS};\widetilde{\bfS}_k) + \lvex(\widetilde{\bfS})$. It follows that, for arbitrary $\widetilde{\bfS}$ and $\widetilde{\bfS}_k$ in $\bT_+$, 
we have $L(\widetilde{\bfS}) \leq L_k(\widetilde{\bfS}; \widetilde{\bfS}_k)$. Also, from Proposition \ref{prop:cav_lin_approx}, we know that $\Upsilon_k (\widetilde{\bfS};\widetilde{\bfS}_k)$ 
is a tight convex upper bound for the concave function $\lcav(\widetilde{\bfS})$  especially around $\widetilde{\bfS}_k$. Thus, $L_k(\widetilde{\bfS};\widetilde{\bfS}_k)$ is a tight convex upper bound for $L(\widetilde{\bfS})$. 
To find the next estimate  $\widetilde{\bfS}_{k+1}$ in CCCP, we solve the following convex optimization
\begin{align}
\widetilde{\bfS}_{k+1}=\argmin_{\Mm \in \bT_+} L_k(\Mm;\widetilde{\bfS}_k).
\end{align}
Using Proposition \ref{prop:ml_semi_def}, this can also be cast as an SDP that can be efficiently solved. We initialize the estimates with $\widetilde{\bfS}_0={\bf 0}$ for $k=0$. It is immediately seen that $\Upsilon_0 (\widetilde{\bfS};\widetilde{\bfS}_0)=\tr(\bfB\widetilde{\bfS} \bfB^\herm)$, and $L_0(\widetilde{\bfS}, \widetilde{\bfS}_0)=L_\text{app}(\widetilde{\bfS})$ coincides with the AML function in \eqref{eq:aml_1}. Thus, the estimate $\widetilde{\bfS}_1$ corresponds to the AML estimate. We can also see that the sequence of 
estimates $\{\widetilde{\bfS}_k\}_{k=0}^\infty$ monotonically improve the likelihood function, i.e.,
\begin{align}
L(\widetilde{\bfS}_{k+1})&\leq L_k(\widetilde{\bfS}_{k+1};\widetilde{\bfS}_{k})= \min _{\Mm \in \bT_+} L_k(\Mm;\widetilde{\bfS}_k)\\
&\leq L_k(\widetilde{\bfS}_k;\widetilde{\bfS}_k)= L(\widetilde{\bfS}_k),
\end{align}
where we used the identity $\Upsilon_k (\widetilde{\bfS}_k;\widetilde{\bfS}_k)=\lcav(\widetilde{\bfS}_k)$, which implies $L_k(\widetilde{\bfS}_k;\widetilde{\bfS}_k)=L(\widetilde{\bfS}_k)$.
{As a result, if AML is a good approximation of ML, we expect that $\widetilde{\bfS}_1$ provides a good initialization point for the CCCP, such that the sequence $\{\widetilde{\bfS}_k\}_{k=1}^\infty$ converges to the ML estimate (globally optimal point).}

\subsection{\algRMMV: MMV with Reduced Dimension (\nameRMMV)}

One of the main problems with grid-based and off-grid MMV optimizations in \eqref{eq:l21_optim2} and \eqref{eq:atomic_semi} is that their complexity scales very fast with the sample size $T$. Here, we propose an SVD-based technique as in \cite{malioutov2005sparse} to reduce the computational complexity of \eqref{eq:l21_optim2} and \eqref{eq:atomic_semi}. Consider again the observation $\bfX=\bfB \bfY$ during the training period of length $T$. For the time being, 
assume that the discrete AoA model holds and the arrival angles belong to a prefixed grid with elements in the interval 
$[-\theta_{\max},\theta_{\max}]$. In this case, the model $\Xm = \Dm \Wm + \Nm$ with the discretized dictionary 
$\bfD=[\bfB \bfa(\theta_1), \dots, \bfB \bfa(\theta_G)]$ holds exactly. 

We assume that $T \gg m=O(\sqrt{M})$ and consider  ``economy form'' SVD 
$\bfX= \bfU_m \mathbf{\Sigma}_m \bfV_m^\herm$, where $\bfV_m$ is a $T\times m$ tall unitary matrix 
and $\mathbf{\Sigma}_m$ is the $m\times m$ diagonal matrix of the non-zero singular values. 
We define the new data $\widetilde{\bfX}=\bfX \bfV_m= \bfU_m \mathbf{\Sigma}_m$. 
Notice that $\widetilde{\bfX}$ can be simply computed from the sample covariance matrix of the data $\widehat{\bfC}_x= \frac{1}{T} \bfX \bfX^\herm$, 
thus, it is not necessary to store the whole observation $\bfX$ during the training time.  
Moreover, $\widehat{\bfC}_x$ can also be computed from $\widetilde{\bfX}$, thus, Proposition \ref{prop:suff_stat1} 
implies that $\widetilde{\bfX}$ is also a sufficient statistics. We also have 
\begin{align}\label{eq:red_MMV}
\widetilde{\bfX}= \bfD \bfW \bfV_m + \bfN \bfV_m= \bfD \widetilde{\bfW} + \widetilde{\bfN},
\end{align}
where $\widetilde{\bfW}$ and $\widetilde{\bfN}$ of dimension $G \times m$ and $m \times m$ respectively, are the modified channel gains and array noise. 
It is not difficult to check that the reduced observation in \eqref{eq:red_MMV} is still in the MMV format, in the sense that the matrix $\widetilde{\bfW}$ has nonzero rows only on the grid points corresponding to the channel AoAs. However, the dimension of the problem now is fixed and does not scale with $T$. 
Of course, since in reality the AoAs are not exactly placed on a grid, this method suffers from the already mentioned mismatch due to 
domain discretization.

Our second algorithm for subspace estimation, referred to in the following as 
\textit{Reduced MMV} (RMMV), simply applies the off-grid atomic-norm minimization for the MMV problem reviewed in Section \ref{intro:mmv}
to the low-dim data $\widetilde{\bfX}$. This can be cast as the following SDP
\begin{align}
(\bfS^*&, \bfW^*, \bfZ^*)=\argmin _{\Mm \in \bT_+, \bfW \in \bC^{m \times m}, \bfZ \in \bC^{M\times m}} \tr(\Mm) + \tr(\bfW)\nonumber\\ &\text{ subject to } 
\left [ \begin{array}{cc} \Mm & \bfZ\\ \bfZ^\herm & \bfW \end{array} \right ] \succeq {\bf 0},\ \norm{\widetilde{\bfX} - \bfB \bfZ}{} \leq \epsilon,
\end{align}
where $\epsilon$ is an estimate of the norm of $\widetilde{\bfN}$. 
For large values of $T$, we expect that $\widehat{\bfC}_x \approx \sigma^2\bfI_m + \bfB \bfS \bfB^\herm$ by the consistency of the sample covariance estimator, 
such that  the noise components in $\widetilde{\bfN}$ remain approximately independent and Gaussian. 
If $m$ is sufficiently large, then the optimal value of $\epsilon$ concentrates around $\epsilon^* = \sigma \sqrt{m^2} = m \sigma \approx 2\sigma \sqrt{M}$, 
where $\sigma^2$ is the noise variance in each array element, and where we used the fact that, for coprime sampling, 
$m \approx 2\sqrt{M}$. The noise level $\sigma^2$ at the output of the array elements 
can be typically estimated during the system's operation.

\subsection{\algSR: Super Resolution (\nameSR)}\label{sec:SR}
Let $\Sm(\gamma) = \int_{-1}^1 \gamma(u) \av(u)\av(u)^\herm du$ be the signal covariance matrix as in \eqref{C=S+I}. 
It is not difficult to check that, the first column of $\bfS$ contains the vector of Fourier coefficients of $\gamma$ given by $\bff:=\inp{\gamma}{\bfa}:=\int_{-1}^1 \gamma(u) \bfa(u)  du$, where 
$[\bff]_k=[\inp{\gamma}{\bfa}]_k:=\int_{-1}^1 \gamma(u) e^{jk\pi u}  du$, $k\in[M]$.
Since $\bfS(\gamma)$ is Toeplitz, 
Proposition \ref{prop:bijection} yields that for the coprime sampling matrix $\bfB$ 
introduced in Section \ref{coprime_subsampling} all the elements of $\bfS$, and as a result  $\bff$,
can be identified from  $\bfB \bfS \bfB^\herm$. This implies that for a sufficiently large $T$, we can estimate $\bff$ accurately
using the elements of the sample covariance matrix $\widehat{\bfC}_x = \bfB \widehat{\bfC}_y \bfB^\herm$. 
Let $\clX_k$ and $c_k=|\clX_k|$ be the covering set and the covering number of the element $k$, as defined in \eqref{covering_set_k}. 
We define the following estimator for $[\bff]_k$
\begin{align}
[\widehat{\bff}]_k=\frac{\sum_{(i,i')\in \clX_k} [\widehat{\bfC}_x]_{i,i'}}{c_k}.
\end{align}
In Appendix \ref{estim_variance}, in Proposition \ref{estimator_signal}, we prove that for $k\neq0$, $[\widehat{\bff}]_k$ is an unbiased
estimator of $[\bff]_k$ with an approximate variance (exact if $c_k = 1$) given by $\frac{(\sigma^2 + [\bff]_0)^2}{T c_k}$, 
converging to $0$ as $T \rightarrow \infty$.  
We propose the following TV-minimization to recover the subspace of the signal from the estimates $\widehat{\bff}$
\begin{align}\label{eq:tv_pos_meas}
\gamma^* = \argmin \|f\|_{\text{TV}} \text{ subject to } \|\inp{f}{\bfa} - \widehat{\bff}\| \leq \epsilon,
\end{align} 
where $\epsilon$ is an estimate of the norm of the noise in the data. 
For a non-negative measure $\gamma$, the total-variation norm $\|\gamma\|_{\text{TV}}$ is simply given by $\int_{-1}^1 \gamma(u) du= [\bff]_0$. 
Moreover, as $\bff$ can be obtained from the first column of 
the Toeplitz matrix $\bfS(\gamma)$, we can write the optimization \eqref{eq:tv_pos_meas} directly in terms of the covariance matrix: 
\begin{align}\label{eq:tv_pos_meas2}
\bfS^* = &\argmin_{\Mm \in \bT_+} \tr(\Mm) \nonumber\\
&\text{ subject to }  \|\Mm\, \bfe_1 - \widehat{\bff}\| \leq \xi \sqrt{\frac{M}{T}} (\sigma^2 + [\Mm]_{11}),
\end{align}
where $\bfe_1=(1,0,\dots,0)^\transp$ has dimension $M\times 1$, where $[\Mm]_{11}$ is the diagonal element 
of the Toeplitz matrix $\Mm$ (equivalent to $[\bff]_0$), and where the $\epsilon$ parameter in \eqref{eq:tv_pos_meas} 
has been replaced by an estimate thereof in which $\xi\approx 1$ is some parameter that can be tuned appropriately.  
The motivation for \eqref{eq:tv_pos_meas2} is that for sufficiently large $M$ and for the true signal distribution $\gamma$, 
the best value of $\epsilon$ in \eqref{eq:tv_pos_meas} can be estimated by
\begin{align}
\|\inp{\gamma}{\bfa} - \widehat{\bff}\|^2&=\sum_{k} |[\widehat{\bff}]_k - [\bff]_k|^2 \to \sum_k  \bE\Big [\big|[\widehat{\bff}]_k - [\bff]_k\big |^2\Big ]\nonumber\\
&=\sum_k \var\Big[[\widehat{\bff}]_k\Big ]\stackrel{(a)}{\leq} M \frac{(\sigma^2+[\bff]_0)^2}{T},
\end{align}
where in $(a)$ we used the results proved for the variance of the estimate $[\widehat{\bff}]_k$ in Appendix \ref{estimator_signal}, and the fact for those elements with $c_k >1$, the resulting variance is less than $\frac{(\sigma^2+[\bff]_0)^2}{T}$. Thus, we have replaced $\epsilon$ in \eqref{eq:tv_pos_meas} by its approximation
$\sqrt{\frac{M}{T}} (\sigma^2 + [\bff]_0)=\sqrt{\frac{M}{T}} (\sigma^2 + [\bfM]_{11})$, where an additional tuning by the scaling parameter $\xi$ 
has been added to include the variation of this optimal value around its mean.
Algorithm \eqref{eq:tv_pos_meas2} is a convex optimization that can be solved 
if an estimate of the noise variance $\sigma^2$ is available.  In particular, no prior knowledge of SNR is necessary.

\begin{remark}
It might happen, especially for small array size $M$, that some of the elements $k \in [M]$ are not covered by the coprime sampling $\clD$, i.e., $c_k=|\clX_k|=0$. In this case, $[\widehat{\bff}]_k$ can not be estimated for those elements. However, we can still run \eqref{eq:tv_pos_meas} or equivalently \eqref{eq:tv_pos_meas2} by including in the constraint  only the Fourier coefficients corresponding to the elements with $c_k\geq 1$. Note that since the optimization is done over $\bT_+$, if the number of unobserved
elements of $\bff$ (i.e., those elements with $c_k= 0$) is negligible compared with $M$, 
they do not affect the performance considerably.  \hfill $\lozenge$
\end{remark}


\begin{remark}
A coprime sampling scheme similar to ours along with TV-minimization has been used in \cite{tan2014direction} for DoA estimation. 
Provided the AoAs are well-separated, the estimation algorithm in \cite{tan2014direction} can estimate them from the dual optimization proposed in \cite{candes2014towards,candes2013super}. However, the authors do not use the positivity of the measure (in our case $\gamma$) 
that naturally arises because of positive semi-definite property of the signal covariance matrix. Moreover, in this paper, we deal with a wireless 
scattering channel for which the AoAs may be clustered. This implies that the separation requirement for the super-resolution setup may not be met. 
However, since our aim is to  estimate the subspace of the signal rather than AoAs, using the positivity of the underlying measure, we can directly  
solve the primal problem \eqref{eq:tv_pos_meas} rather than the dual one that is used for DoA estimation. In particular, we do not need to go 
through the complicated and error-prone procedure of estimating the support (AoAs) via the dual polynomial, as required 
by DoA estimation in \cite{tan2014direction}.  \hfill $\lozenge$
\end{remark}

\subsection{\algCOR: Covariance Matrix Projection (\nameCMP)}

We define the sampling operator $\sub: \bC^{M\times M} \to \bC^{m \times m}$, 
where $\sub(\bfK)=\bfB\bfK \bfB^\herm$.  
Using the operator $\sub$, we can define a positive semi-definite bilinear form on the space of $M \times M$ matrices 
by $\inpb{\bfK}{\bfL}=\inp{\sub(\bfK)}{\sub(\bfL)}$, where the latter inner product is defined in the space of $m \times m$ matrices.  
This induces the seminorm $\|\bfK\|_\bfB=\sqrt{\inpb{\bfK}{\bfK}}$.\footnote{Note that $\|.\|_\bfB$ is not a norm on the space of all $M\times M$ matrices since we can simply find an $M \times M$ matrix $\bfK\neq 0$ for which $\|\bfK\|_\bfB=0$.} 
It can be easily checked that $\inpb{\bfK}{\bfL}$ restricted to $\bT_+$ is indeed an inner product, and thus it induces a well-defined norm.  We also define 
\begin{align}
\alpha_\bfB(M)=\max_{\bfK \in \bT} \frac{\|\bfK\|}{\|\bfK\|_\bfB},
\end{align}
where dependence on $M$ indicates that the maximization is performed over the space of all $M\times M$ Hermitian Toeplitz matrices. 
The parameter $\alpha_\bfB (M)$ is a measure of coherence of the sampling matrix $\bfB$ with respect to the space of 
Toeplitz matrices. It is not difficult to check that for the coprime matrix $\bfB$, it holds that
$1 \leq \alpha_\bfB(M)\leq \sqrt{M}$.
Our analysis shows that the CMP algorithm, defined in the following, performs better for sampling matrices $\bfB$ with a smaller $\alpha_\bfB(M)$. 

Let $\widehat{\bfC}_x = \frac{1}{T} \Xm\Xm^\herm$. 
In order to recover the dominant $p$-dim subspace of the signal, we first find an estimate of the whole signal covariance matrix $\bfC_y$ by
\begin{align}\label{C_estimator}
\bfC^*_y=\argmin_{\Mm \in \bT_+} \|\widehat{\bfC}_x- \bfB\Mm \bfB^\herm\|.
\end{align}
This is equivalent to the following optimization problem
\begin{align}\label{C_estimator2}
\bfC^*_y=\argmin_{\Mm \in \bT_+} \|\widehat{\bfC}_y - \Mm \|_\bfB,
\end{align}
which implies that the optimal solution $\bfC^*_y$ is given by the projection of the sample covariance matrix of the whole array signal on $\bT_+$ under seminorm $\|.\|_\bfB$. Note that this projection is unique since the restriction of the seminorm $\|.\|_\bfB$ to $\bT_+$ is indeed a norm and the projection theorem holds. 

If an estimate of the noise variance $\sigma^2$ is available, we can directly estimate the covariance matrix of the signal via the following variant of \eqref{C_estimator2}
\begin{align}\label{S_estimator}
\bfS^*=\argmin_{\Mm \in \bT_+} \|\widehat{\bfC}_x- \sigma^2 \bfI_m - \bfB\Mm \bfB^\herm\|.
\end{align}
Once $\bfC^*_y$ or $\bfS^*$ are estimated, we use its $p$-dim dominant subspace 
(for some appropriately chosen $p\in \{1, \dots, M\}$) as an estimate of the signal subspace. 
\section{{Performance Analysis and Discussion}}
In this section, we provide lower bounds on the performance of \nameSR\ and \nameCMP\ algorithms. 
{We did not make progress in the analysis of AML and RMMV due to the difficulty of characterizing the SDP solution.
Therefore, this is left as a future work.}

For the \nameCMP\ Algorithm we obtain the following performance bound.
\begin{theorem}\label{CMP_thm}
Consider the signal model given by \eqref{cont_sig_model} over a training period of length $T$. 
Then, for a given $p \in \{1, \dots, M\}$, the \nameCMP\ estimating a $p$-dim signal subspace 
achieves performance measure $\Gamma_p$  (see (\ref{eq:perf_metric})) satisfying 
\begin{align}
\bE [ \Gamma_p ] &\geq \max \Big \{1- \frac{2\sqrt{2p}}{\eta_p\sqrt{T}} (1+\frac{1}{\snr}),\, 0 \Big \},\\
\Var[ \Gamma_p] &\leq  \frac{8 p}{T \eta_p^2} (1+\frac{1}{\snr})^2,
\end{align}
where $\eta_p$ is defined in \eqref{delta_p_def}, and where $\snr$ denotes the SNR in one snapshot $t\in[T]$.  \hfill $\square$
\end{theorem}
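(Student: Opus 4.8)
The plan is to combine a deterministic, spectral-gap-free perturbation bound on the captured signal power with a non-expansiveness property of the \nameCMP\ projection, and to close with a standard fourth-moment identity for the complex-Gaussian sample covariance. Write $\bfC^*_y$ for the \nameCMP\ estimate \eqref{C_estimator} and let $\widetilde{\bfV}_p$ collect its $p$ dominant eigenvectors; the variant \eqref{S_estimator} is handled identically, since $\bfS^*$ and $\bfS^*+\sigma^2\bfI_M$ share eigenvectors and $\widehat{\bfC}_x$ is merely replaced by $\widehat{\bfC}_x-\sigma^2\bfI_m$. Because $\bfC_y=\bfS+\sigma^2\bfI_M\in\bT_+$ shares eigenvectors with $\bfS$, the optimal beamformer $\bfV_p$ also consists of the $p$ dominant eigenvectors of $\bfC_y$, and $\inp{\bfS}{\bfP}=\inp{\bfC_y}{\bfP}-\sigma^2 p$ for every rank-$p$ orthogonal projection $\bfP$; hence the numerator $\inp{\bfS}{\bfV_p\bfV_p^\herm}-\inp{\bfS}{\widetilde{\bfV}_p\widetilde{\bfV}_p^\herm}$ of $1-\Gamma_p$ equals $\inp{\bfC_y}{\bfV_p\bfV_p^\herm}-\inp{\bfC_y}{\widetilde{\bfV}_p\widetilde{\bfV}_p^\herm}$, and I will work with $\bfC_y$ from now on.

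First, for the deterministic bound I would use the variational (Ky Fan) identities $\inp{\bfC_y}{\bfV_p\bfV_p^\herm}=\max_{\bfU\in\bH(M,p)}\inp{\bfC_y}{\bfU\bfU^\herm}$ and $\inp{\bfC^*_y}{\widetilde{\bfV}_p\widetilde{\bfV}_p^\herm}=\max_{\bfU\in\bH(M,p)}\inp{\bfC^*_y}{\bfU\bfU^\herm}$. Adding and subtracting $\inp{\bfC^*_y}{\widetilde{\bfV}_p\widetilde{\bfV}_p^\herm}$ and bounding it below by $\inp{\bfC^*_y}{\bfV_p\bfV_p^\herm}$ gives
\begin{align*}
&\inp{\bfC_y}{\bfV_p\bfV_p^\herm}-\inp{\bfC_y}{\widetilde{\bfV}_p\widetilde{\bfV}_p^\herm}\\
&\qquad\le\inp{\bfC_y-\bfC^*_y}{\bfV_p\bfV_p^\herm-\widetilde{\bfV}_p\widetilde{\bfV}_p^\herm}\le\sqrt{2p}\,\|\bfC_y-\bfC^*_y\|,
\end{align*}
where the last step combines Cauchy--Schwarz with $\|\bfV_p\bfV_p^\herm-\widetilde{\bfV}_p\widetilde{\bfV}_p^\herm\|^2=2p-2\|\bfV_p^\herm\widetilde{\bfV}_p\|^2\le2p$. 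Dividing by $\inp{\bfS}{\bfV_p\bfV_p^\herm}=\eta_p\tr(\bfS)$ yields the deterministic estimate
\begin{align*}
0\le 1-\Gamma_p\le\frac{\sqrt{2p}}{\eta_p\tr(\bfS)}\,\|\bfC_y-\bfC^*_y\|,
\end{align*}
with no dependence on any spectral gap of $\bfS$.

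Next I would control $\|\bfC_y-\bfC^*_y\|$. Since $\bfK\mapsto\bfB\bfK\bfB^\herm$ is linear and, by Proposition~\ref{prop:bijection}, injective on $\bT$, the set $\bfB\bT_+\bfB^\herm$ is closed and convex in $\bC^{m\times m}$, and \eqref{C_estimator} makes $\bfB\bfC^*_y\bfB^\herm$ the Euclidean projection of $\widehat{\bfC}_x$ onto it; as $\bfC_x=\bfB\bfC_y\bfB^\herm$ lies in this set, the obtuse-angle property of projections onto convex sets gives $\|\bfC_y-\bfC^*_y\|_\bfB=\|\bfB(\bfC_y-\bfC^*_y)\bfB^\herm\|\le\|\widehat{\bfC}_x-\bfC_x\|$, and, since $\bfC_y-\bfC^*_y\in\bT$, the definition of $\alpha_\bfB(M)$ upgrades this to $\|\bfC_y-\bfC^*_y\|\le\alpha_\bfB(M)\,\|\widehat{\bfC}_x-\bfC_x\|$. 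Because $\bfx(t)$ are i.i.d.\ $\cg({\bf 0},\bfC_x)$, a standard fourth-moment identity gives $\bE\|\widehat{\bfC}_x-\bfC_x\|^2=(\tr\bfC_x)^2/T$. Since $\bfC_y$ is Toeplitz, $\tr(\bfC_x)=m([\bff]_0+\sigma^2)$, whereas $\tr(\bfS)=M[\bff]_0$ and $\snr=[\bff]_0/\sigma^2$, so $\tr(\bfC_x)/\tr(\bfS)=\tfrac mM(1+\tfrac1{\snr})$; and for coprime sampling $m\le 2\sqrt M$ and $\alpha_\bfB(M)\le\sqrt M$, hence $\alpha_\bfB(M)\,\tfrac mM\le 2$. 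Putting these together and using Jensen's inequality,
\begin{align*}
\bE[1-\Gamma_p]\le\frac{\sqrt{2p}\,\alpha_\bfB(M)}{\eta_p\tr(\bfS)}\cdot\frac{\tr(\bfC_x)}{\sqrt T}\le\frac{2\sqrt{2p}}{\eta_p\sqrt T}\Big(1+\tfrac1{\snr}\Big),
\end{align*}
and since $\Gamma_p\ge0$ the claimed mean bound follows. For the variance I would use $\Var[\Gamma_p]=\Var[1-\Gamma_p]\le\bE[(1-\Gamma_p)^2]$ together with the squared deterministic bound and $\bE\|\widehat{\bfC}_x-\bfC_x\|^2=(\tr\bfC_x)^2/T$, giving
\begin{align*}
\Var[\Gamma_p]\le\frac{2p\,\alpha_\bfB(M)^2}{\eta_p^2\,(\tr\bfS)^2}\cdot\frac{(\tr\bfC_x)^2}{T}\le\frac{8p}{T\eta_p^2}\Big(1+\tfrac1{\snr}\Big)^2.
\end{align*}

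The main obstacle is the deterministic step: one needs a perturbation bound on the \emph{captured power} that is free of any eigenvalue gap of $\bfS$, because the scattering function $\gamma$ may be clustered or continuous so that $\bfS$ need not possess a spectral gap, which rules out a direct Davis--Kahan argument on the principal angles. The resolution is to invoke the variational characterization of \emph{both} $\bfV_p$ and $\widetilde{\bfV}_p$ at once, so that only the trivially bounded Frobenius distance between two rank-$p$ projections enters. A secondary point requiring care is recognizing that \eqref{C_estimator} is literally an $\ell_2$-projection in the $m\times m$ image space — so the factor-one non-expansive bound applies rather than a weaker factor-two triangle-inequality bound — and carrying the $\sigma^2\bfI$ shift through cleanly so that the estimate, naturally phrased through $\|\bfC_y-\bfC^*_y\|$, matches the $\bfS$-based performance metric $\Gamma_p$.
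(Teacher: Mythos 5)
Your proof is correct and reaches exactly the stated constants, and it follows the same overall route as the paper: non-expansiveness of the projection in the $\|\cdot\|_\bfB$ seminorm (Proposition~\ref{norm_ineq}), the coherence bound $\alpha_\bfB(M)\le\sqrt{M}$, a spectral-gap-free perturbation bound on the captured power, and complex-Gaussian fourth moments. Two intermediate steps are executed differently, though, and the comparison is instructive. For the deterministic step, the paper (Propositions~\ref{prop:concentrate} and~\ref{prop:beam_opt}) splits the power difference through $\inp{\bfC^*_y}{\cdot}$ with a triangle inequality and bounds each rank-$p$ projection separately by $\sqrt{p}$, yielding $2\epsilon\sqrt{pM}$; you collapse it into the single inner product $\inp{\bfC_y-\bfC^*_y}{\bfV_p\bfV_p^\herm-\widetilde{\bfV}_p\widetilde{\bfV}_p^\herm}$ and use $\|\bfV_p\bfV_p^\herm-\widetilde{\bfV}_p\widetilde{\bfV}_p^\herm\|\le\sqrt{2p}$, which is tighter by a factor $\sqrt{2}$. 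For the stochastic step, the paper computes entrywise variances via Price's theorem and sums $c_k$ over the covering pairs with $i\ge i'$, giving $\sum_k c_k=m(m+1)/2\approx 2M$ terms in Proposition~\ref{c_diff_bc_lemma}, whereas your identity $\bE\|\widehat{\bfC}_x-\bfC_x\|^2=(\tr\,\bfC_x)^2/T$ counts all $m^2\approx 4M$ ordered index pairs of the Frobenius norm --- which is the honest count, since the conjugate entries with $i<i'$ contribute equally to $\|\cdot\|_\bfB^2$. These two discrepancies are each roughly a factor $\sqrt{2}$ in opposite directions, so the final bounds coincide; your version has the advantage that the moment computation is a one-line global identity and the deterministic lemma needs no case analysis. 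Both arguments share the same mild looseness in treating $m$ as bounded by $2\sqrt{M}$ for coprime sampling.
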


\begin{proof}
See Appendix \ref{sec:cmp_proof}.
\end{proof}

{
\noindent A result similar to Theorem \ref{CMP_thm} holds for the \nameSR\ Algorithm.
\begin{theorem}\label{SR_thm}
Consider the signal model \eqref{cont_sig_model} with the power distribution $\gamma(u)$ with an $M$-dim Fourier coefficients $\bff=\int_{-1}^1 \gamma(u) \bfa(u) du$. Suppose that $\xi$ in \eqref{eq:tv_pos_meas2} is sufficiently large such that $\bff$ is feasible. 
Then, for a given $p \in \{1, \dots, M\}$, the \nameSR\ estimating a $p$-dim signal subspace   
achieves performance measure $\Gamma_p$ satisfying 
\begin{align}
\Gamma_p \geq  1-\frac{4\sqrt{2p} \, \xi}{\eta_p\sqrt{T}} (1+\frac{1}{\snr}),
\end{align}
 where $\snr$ denotes the SNR in one snapshot $t\in[T]$. \hfill $\square$
\end{theorem}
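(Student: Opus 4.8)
The plan is to follow the same route as the proof of Theorem~\ref{CMP_thm}, which under the stated feasibility hypothesis is entirely deterministic. First I would extract the deterministic consequences of optimality, then upgrade a bound on the first column of the estimate to a full Frobenius-norm bound using the Toeplitz structure, and finally convert that into a bound on $1-\Gamma_p$ via a variational perturbation argument for the captured signal power.

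Step one. Since $\bfS=\bfS(\gamma)\in\bT_+$ has first column $\bfS\bfe_1=\bff$ and diagonal $[\bfS]_{11}=[\bff]_0$, the hypothesis on $\xi$ makes $\bfS$ feasible for \eqref{eq:tv_pos_meas2}, so optimality of $\bfS^*$ gives $\tr(\bfS^*)\le\tr(\bfS)$; because both are $M\times M$ Hermitian Toeplitz and hence satisfy $\tr(\bfM)=M[\bfM]_{11}$, this forces $[\bfS^*]_{11}\le[\bff]_0$. Writing the constraint for $\bfS^*$ and bounding $\|\widehat{\bff}-\bff\|\le\xi\sqrt{M/T}(\sigma^2+[\bff]_0)$ by feasibility of $\bfS$, the triangle inequality through $\widehat{\bff}$ yields
\begin{align*}
\|\bfS^*\bfe_1-\bff\|\le\xi\sqrt{\tfrac{M}{T}}\big((\sigma^2+[\bfS^*]_{11})+(\sigma^2+[\bff]_0)\big)\le 2\xi\sqrt{\tfrac{M}{T}}(\sigma^2+[\bff]_0).
\end{align*}
Step two. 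The matrix $\bfK:=\bfS^*-\bfS$ is Hermitian Toeplitz with first column $\bfk:=\bfS^*\bfe_1-\bff$, and counting the multiplicities of its entries gives $\|\bfK\|^2=M|[\bfk]_0|^2+\sum_{k\ge1}2(M-k)|[\bfk]_k|^2\le 2M\|\bfk\|^2$, hence $\|\bfS^*-\bfS\|\le\sqrt{2M}\,\|\bfS^*\bfe_1-\bff\|\le\tfrac{2\sqrt{2}\,\xi M}{\sqrt{T}}(\sigma^2+[\bff]_0)$.

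Step three. Let $\bfP=\bfV_p\bfV_p^\herm$ and $\widetilde{\bfP}=\widetilde{\bfV}_p\widetilde{\bfV}_p^\herm$ be the top-$p$ eigenprojections of $\bfS$ and of $\bfS^*$. I would split
\begin{align*}
\inp{\bfS}{\bfP}-\inp{\bfS}{\widetilde{\bfP}}=\inp{\bfS-\bfS^*}{\bfP}+\inp{\bfS^*}{\bfP-\widetilde{\bfP}}+\inp{\bfS^*-\bfS}{\widetilde{\bfP}},
\end{align*}
note that the middle term is $\le0$ since $\widetilde{\bfV}_p$ maximizes $\inp{\bfS^*}{\bfU\bfU^\herm}$ over $\bfU\in\bH(M,p)$, and bound each outer term by $\sqrt{p}\,\|\bfS-\bfS^*\|$ via Cauchy--Schwarz and $\|\bfP\|=\|\widetilde{\bfP}\|=\sqrt{p}$, so that the numerator of $1-\Gamma_p$ in \eqref{eq:perf_metric} is at most $2\sqrt{p}\,\|\bfS-\bfS^*\|$. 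Dividing by $\inp{\bfS}{\bfV_p\bfV_p^\herm}=\eta_p\tr(\bfS)=\eta_p M[\bff]_0$, inserting the step-two bound, and using $\snr=\tr(\bfS)/(M\sigma^2)=[\bff]_0/\sigma^2$ to rewrite $(\sigma^2+[\bff]_0)/[\bff]_0=1+1/\snr$, gives $1-\Gamma_p\le 4\sqrt{2p}\,\xi/(\eta_p\sqrt{T})\,(1+1/\snr)$, which is the claim.

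Since the theorem is stated conditionally on feasibility, the chain above has no genuine obstacle: it rests on two Toeplitz-structure identities ($\tr=M\times$ diagonal, and $\|\bfK\|\le\sqrt{2M}\,\|\bfK\bfe_1\|$) together with an elementary variational split. The only place where real work is needed — and would be needed for an unconditional, in-expectation version — is discharging the feasibility hypothesis, i.e., verifying $\|\bff-\widehat{\bff}\|\le\xi\sqrt{M/T}(\sigma^2+[\bff]_0)$ from $\bE\|\bff-\widehat{\bff}\|^2\le\sum_k\var[[\widehat{\bff}]_k]$ with the per-coefficient variance estimates of Appendix~\ref{estim_variance}, while absorbing the $\sigma^2$-bias of the $k=0$ Fourier coefficient into the constant $\xi$.
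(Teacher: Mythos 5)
Your proof is correct and follows essentially the same route as the paper's: feasibility of $\bff$ together with optimality of $\bfS^*$ gives the first-column bound, the Toeplitz structure upgrades it to $\|\bfS-\bfS^*\|\leq\sqrt{2M}\,\|\bff-\bff^*\|$, and a perturbation argument yields the numerator bound $2\sqrt{p}\,\|\bfS-\bfS^*\|$, after which normalization by $\eta_p M[\bff]_0$ gives the claimed constant. The only cosmetic difference is in the last step, where you use a three-term split with a nonpositive middle term, whereas the paper invokes the triangle-inequality/maximum argument of Proposition~\ref{prop:beam_opt}; both produce the same factor of $2$.
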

\begin{proof}
See Appendix \ref{sec:cmp_proof}.
\end{proof}
}

\vspace{2mm}
\noindent Some remarks about Theorem \ref{CMP_thm} and \ref{SR_thm} are in order here. 

\begin{remark}
It is seen from Theorem \ref{CMP_thm} that for $T \rightarrow \infty$, $\bE[\Gamma_p]$ converges to $1$ and 
$\var[\Gamma_p]$ tends to $0$, which shows the consistency of \nameCMP. 
{Similarly, it is not difficult to check that for large $T$, by taking $\xi \approx 1$, 
the conditions of Theorem \ref{SR_thm} hold with very high probability. 
This implies that $\lim_{T \rightarrow \infty} \Gamma_p = 1$ in probability, 
yielding the consistency of  \nameSR.} Thus, for large $T$, the $p$-dim subspace estimate obtained from both algorithms 
is as efficient as the best $p$-dim signal subspace.  \hfill $\lozenge$
\end{remark}

\begin{remark}\label{rem:snr}
Both Theorem \ref{CMP_thm} and \ref{SR_thm} indicate that even for infinite SNR, one still needs to take some measurements. 
The reason is that, even in the absence of noise, the signal 
$\bfy(t)$ in \eqref{cont_sig_model} is stochastic and both estimators need some data to discover the underlying signal covariance structure. 
It is also seen that the performance is not very sensitive to SNR when the latter is not too small.
However, for $\snr \downarrow 0$, the required training time for achieving a specific target performance
scales as $T = O(\frac{1}{\snr^2})$. 
{This may indicate that the subspace estimation is quite challenging in low-SNR channels such as mm-wave.}
\hfill $\lozenge$
\end{remark}

\begin{remark}
Let us assume that the signal power is concentrated in an $\alpha$-dim subspace. In this case, $\eta_\alpha\approx 1$, and for a moderate value of SNR, the required training time scales like $T=O(\alpha)$. Two different models can be considered for the signal. In the first model, the effective dimension $\alpha$ does not scale by increasing $M$, thus, the required training length is independent of the embedding dimension or the number of array elements $M$. In the second one, the user has a fixed angular range $\Delta \theta= \beta \pi$, for some $\beta \in (0,1)$,  for which $\alpha \approx \beta M$ scales linearly with $M$. In this case,  the required training length scales linearly in $M$.   \hfill $\lozenge$
\end{remark}

\section{Simulation Results}\label{sec:simulation}

In this section, we assess the performance of our proposed estimators via numerical simulations. The computationally efficient implementation of the proposed optimization problems is beyond the scope of the present work. 
Therefore, here we use the general-purpose CVX package \cite{grant2008cvx} for running all the convex optimizations. 

\subsection{Comparing the Performance of Subspace Estimators}\label{sec:sim_1}
We consider an array of size $M=80$ and $\theta_{\max}=60$ degrees (corresponding 
to an angular sector of $120$ degrees). We use a coprime sampling with $q_1=7, q_2=9$, where we denote the set of indices of the sampled antennas with $\clD$, where $|\clD|=19$.  Although there are still some array indices in $[M]=\{0, \dots, 79\}$ not covered by $\Delta \clD$, the simulations show that the estimators 
are quite insensitive to the presence of a few unobserved elements. We also assume that only $m=20$ RF chains are available at the BS, which would be enough to implement the coprime sampling, and to serve up to $20$ data streams in the UL or DL.

{As an example, we consider a scattering channel with AoAs in the range $\Theta= [\theta_1, \theta'_1] \cup [\theta_2, \theta'_2]$, where $\theta_1=-50$, $\theta'_1=-40$, $\theta_2=10$, and $\theta'_2=20$ degrees. We assume a uniform power  distribution over $\Theta$, thus, the total angular support is $20$ degrees. 
The AoA scattering function $\gamma(u)$ is given by
\begin{align}
\gamma(u)=\left \{ \begin{array}{ll} \frac{\kappa}{\sqrt{1-u^2}} & u\in [u_1, u'_1] \cup [u_2, u'_2]\\ 0 & \text{otherwise,} \end{array} \right.
\end{align}
where $u_i=\sin(\theta_i)$ and $u'_i=\sin(\theta_i')$, for $i=1,2$, and where $\kappa>0$ is a normalization constant. We calculate the vector of Fourier coefficients $\bff=\int_{-1}^1   \gamma(u) \bfa(u) du$, from which we construct the Toeplitz signal covariance matrix $\bfS$. The i.i.d. channel vectors in each training period are generated according to $\bfh=\bfS^{1/2} \bfn_1$, and the corresponding observation
at the array antennas is $\yv = \hv + \sigma\nv_2$, where $\bfn_1, \bfn_2 \sim \cg({\bf 0}, \bfI_M)$ are independent vectors, 
and where $\sigma^2$ denotes the noise variance.}

We compare the performance of each subspace estimator with the optimal beamformer that captures more than $95\%$ of signal's power (i.e., $\eta_p = 0.95$), 
from which we obtain the required dimension $p$.  We estimate the efficiency of each estimator, denoted by  $\Gamma_p$, via numerical simulations. 
To compare the performance of our algorithms with the state of the art, 
we have selected three candidate algorithms that we have also reviewed in Section \ref{sec:lit_review}.
\begin{itemize}
\item {\bf PETRELS.} We have implemented the algorithm introduced in \cite{chi2013petrels} with the difference that 
here we fix the data size. Hence, in every step of the algorithm we select a training sample at random  from the fixed training set and update the estimated signal subspace. 

\item {\bf Nuclear-norm minimization.}
We run the optimization problem given in \eqref{eq:nuc_norm}.
Here we optimistically assume that the algorithm knows the best value of $\epsilon$, given by $\epsilon^*=\|\widehat{\bfC}_x- \bfB \bfC_y \bfB^\herm \|$, although in reality
this would not be available and must be estimated.  

\item {\bf MMV Algorithms.}
We compare our algorithms with the state of the art MMV methods as summarized in Section \ref{intro:mmv}. 
The first algorithm runs the optimization introduced in \eqref{eq:l21_optim2},
where we consider a quantized grid of size $G=3M$ equally spaced AoAs. 
We call the algorithm grid-based MMV (\gb MMV). 
The second one is based on the off-grid techniques given by the optimization \eqref{eq:atomic_semi}.
A byproduct of this optimization is to directly obtain an estimate of the covariance of the data $\Cm^*_y = \Mm^*$, given by the matrix $\Mm^*$ that achieves the minimization
in (\ref{eq:atomic_semi}).
Then, we extract the dominant $p$-dim subspace from such estimate. 
We call this algorithm grid-less MMV (\gl MMV). We set $\epsilon'$ to its optimal value given by $\ell_2$-norm of the noise in subsampled observations $\bfX$. 
\end{itemize}
\begin{figure}[h]
\centering
\includegraphics{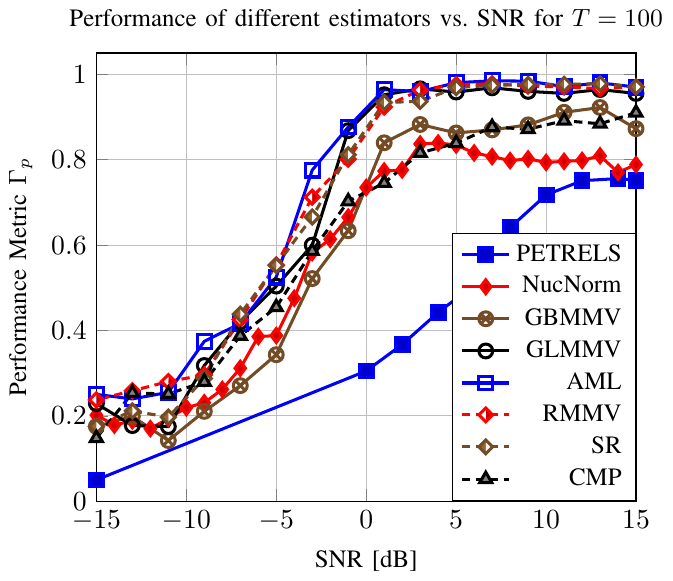}
\caption{Performance comparison of various subspace estimators versus the received SNR for training length $T=100$.}
\label{fig:perf_all_compare_vs_snr}
\end{figure}
\begin{figure}[h]
\centering
\includegraphics{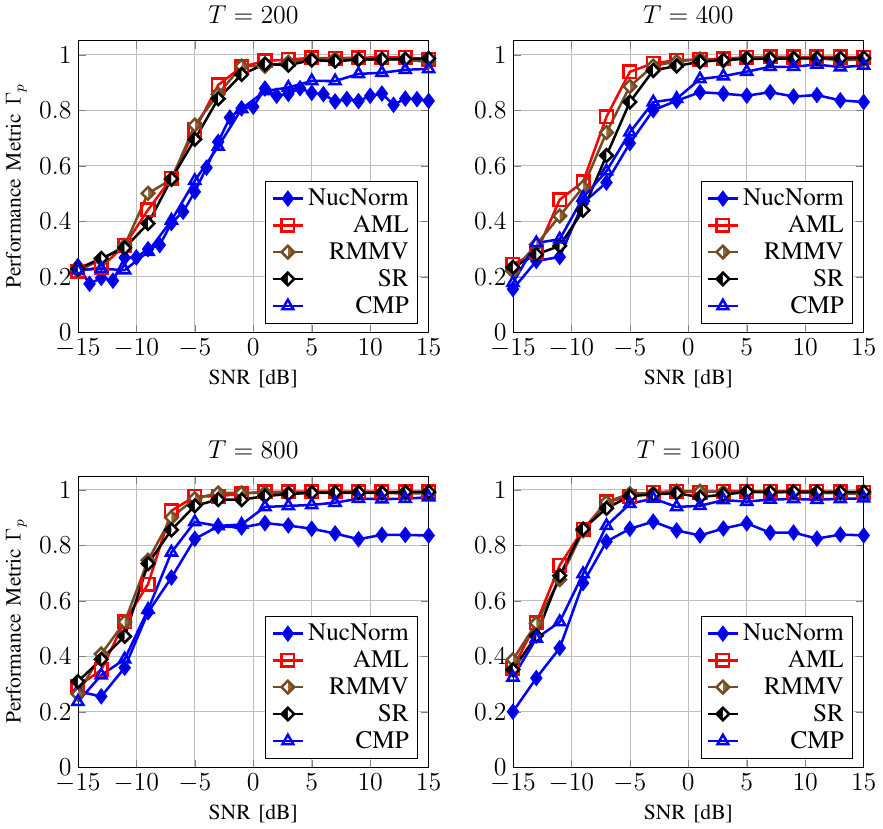}
\caption{{\small Scaling of the performance of different estimators with training length $T\in \{200,400,800,1600\}$}}
\label{fig:perf_vs_T}
\end{figure}

\noindent{\bf Performance vs. signal-to-noise ratio.}
Fig.~\ref{fig:perf_all_compare_vs_snr} compares the performance of our proposed algorithms with the ones in the literature for a range of SNR. {The curve corresponding to each algorithm is obtained by averaging over $20$ independent runs.}
It is seen that \nameML, \nameRMMV, and \nameSR\  perform comparably with the \gl MMV but they have much lower computational complexity, 
which in particular does not scale with $T$. The performance of \nameCMP\ is as good as \gb MMV and better than Nuclear-norm minimization 
especially for higher SNR, but its complexity is much lower than \gb MMV especially for large $T$. 
PETRELS does not perform very well for the fixed data size, e.g., its performance even for $T=800$ is worse than 
the other algorithms.

\noindent {\bf Performance vs. training length $T$.}
Fig.~\ref{fig:perf_vs_T} compares the scaling performance of our proposed algorithms and  Nuclear-norm minimization for different training lengths. As the performance of \nameML\ and \nameRMMV\ is comparable with the \gl MMV and better than \gb MMV and since, for large training length $T$, these algorithms 
run very slowly, we have not included them in this figure. {The results generally show that for moderate range of SNR, the performance of the proposed algorithms improves considerably by increasing $T$. However, for low SNR, the resulting improvement is quite negligible. This confirms the comment made in Remark \ref{rem:snr} about the scaling behaviour of training length $T$ with $\snr$. }

\subsection{JSDM with Subspace Estimation}\label{sec:sim_2}
\noindent{\bf Basic Setup.} As explained in the Introduction, our main motivation for estimating the users' signal subspaces 
comes from JSDM \cite{adhikary2013joint,nam2014joint,adhikary2014joint}, where the users are grouped/clustered based on the similarity of their signal subspaces so that they can be served efficiently by the BS. In this section, we demonstrate the performance of our proposed subspace estimation algorithms included as a component of a JSDM system. In particular, we consider a setup closely inspired by \cite{nam2014joint}, where users have
a uniform AoA scattering function over an interval located in the angular sector  $[-\theta_{\max}, \theta_{\max}]$ covered by the BS. 
Users are grouped by a Grassmannian quantizer with a fixed and pre-defined set of quantization points. 
Following the approach in \cite{nam2014joint}, we fix the Grassmannian quantizer such that each quantization point is a subspace
spanned by a group of adjacent columns of the $M\times M$ {\em Discrete Fourier Transform} (DFT) matrix. Once the users are partitioned into groups, a fixed number of users per group 
is served by JSDM with {\em per-group processing} (PGP) (see details in \cite{adhikary2013joint}). 
The main motivation of this paper is the HDA low-complexity implementation of  JSDM, for which the number of RF chains (and A/D converters) at the BS side is 
limited to $m \ll M$. Therefore, both the estimation of the users' subspaces and the estimation of the per-group effective channels, including 
the JSDM pre-beamforming, must be obtained by sampling no more than $m$ analog signal dimensions. 
In our example, we considere a ULA with $M=80$ elements and $\theta_{\max}=60$ degrees, and the same coprime sampling as in Section \ref{sec:sim_1}, 
and we assume that the BS can only sample $m=20$ analog RF demodulated signals. 
While in \cite{nam2014joint} it is assumed that the users' channel covariance matrices are ideally known, and the grouping by Grassmannian quantization
is performed on the subspaces extracted from the SVD of such covariance matrices, here we estimate the users' subspace using our algorithms, 
from a block of $T$ i.i.d. noisy channel snapshots captured during the training period, as explained before. Then, we apply the same grouping scheme
and DFT pre-beamforming scheme to both the estimated case and the ideally known case, and compare the results in terms of achieved total sum-rate, averaged over a sufficiently large number of independent channel realizations. 

\noindent{\bf Signal Model.} We considered a collection of users $\clK=\{1,\dots, K\}$ of size $K=|\clK|=200$. 
The AoA scattering function of user $k$ is a uniform distribution in the interval $[\theta_k-\frac{\Delta \theta _k}{2}, \theta_k + \frac{\Delta \theta_k}{2}]$, 
corresponding to the following power distribution in the $u$-domain (recall that $u=\frac{\sin(\theta)}{\sin(\theta_{\max})}$):
\begin{align*}
\gamma_k(u)=\left \{ \begin{array}{ll} \frac{\chi}{\sqrt{1-u^2}} & u\in \big[\frac{\sin(\theta_{k}-\frac{\Delta \theta_k}{2})}{\sin(\theta_{\max})}, \frac{\sin(\theta_{k}+\frac{\Delta \theta_k}{2})}{\sin(\theta_{\max})}\big],\\ 0 & \text{otherwise,} \end{array} \right.
\end{align*}
where $\chi>0$ is a normalization constant. 
The users' angular spread (same for all users) is $\Delta \theta_k=\Delta \theta=20$ degress, and the center-AoA $\theta_k$ is i.i.d. and uniformly 
distributed in $[-\theta_{\max}+\frac{\Delta \theta}{2},  \theta_{\max}-\frac{\Delta \theta}{2}]$. 
Letting $\Delta u_k$ indicate the support size of the power distribution $\gamma_k(u)$,
we define $\Delta u = \max_{k \in \clK} \Delta u_k$ as the maximum support size of the users in the $u$-domain, 
where $\Delta u=\frac{2 \sin (\Delta \theta/2)}{\sin(\theta_{\max})} \approx 0.4$.

\noindent{\bf User Grouping.} 
The Grassmanian quantizer is obtained by following  \cite{nam2014joint}.
We divide the domain $u \in [-1,1]$ into intertwined groups $\clG=\{1, \dots, G\}$, with $G=9$, as illustrated in Fig.~\ref{fig:user_cluster}. 
We denote the $u$-support of group $g\in \clG$ with $\clU_g \subset [-1,1]$. Since the length of $\clU_g$ satisfies  $|\clU_g|\approx \Delta u=0.4$, 
due to the intertwined structure of the groups (see Fig.~\ref{fig:user_cluster}), we expect that $\gamma_k(u)$ for each user $k$ is well-aligned with 
at least one $\clU_g$.  
\begin{figure}[h]
\centering
\includegraphics{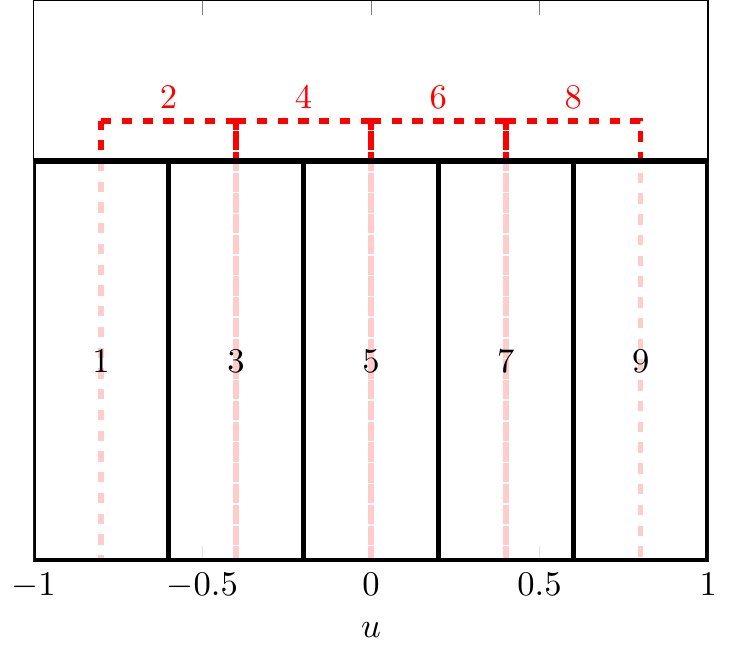}
\caption{Grouping of the users in the $u$-domain.}
\label{fig:user_cluster}
\end{figure}
The signal subspace for group $g \in \clG$ is simply given by a submatrix $\Fm_g$ of the $M \times M$ DFT matrix as follows. 
We define the discrete grid $\clU^\mathsf{disc}=\{-1, -1+\frac{2}{M}, \dots, 1-\frac{2}{M}\}$, and set the columns of $\Fm_g$ to the normalized 
array steering vectors $\frac{1}{\sqrt{M}} \bfa(u)$ for all $u \in \clU^\mathsf{disc} \cap \clU_g$. In our case, 
the matrix $\Fm_g$ has dimension $M\times q_g$ with $q_g=M/5=16$. Therefore,  the JSDM pre-beamforming matrices $\Bm_g$ (see notation in 
\cite{adhikary2013joint}) are simply given by $\Bm_g = \Fm_g$

We cluster the users into groups in $\clG$, where each user $k\in \clK$ is assigned to the group $i_k\in \clG$, where $i_k=\argmax _{g \in \clG} \inp{\bfS_k}{\Fm_g\Fm_g^\herm}$ corresponds to the group whose subspace captures the maximum amount of power in $\bfS_k$. 
This can be seen as an improved weighted version of the chordal distance between subspaces used in  \cite{nam2014joint}, and reduces to 
\cite{nam2014joint} when the non-zero eigenvalues of $\Sm_k$ are constant. 
As said before, we apply the same clustering/quantization scheme both to the ideally known set $\{\Sm_k\}$ and to the estimated set of user covariances
$\{\widehat{\bfS}_k\}$. In this example, we used 
the \nameSR\ Algorithm in \eqref{eq:tv_pos_meas2} with tuning parameter $\xi=2$. Since the performance of our proposed algorithms
in terms of $\Gamma_p$ is similar (see Section \ref{sec:sim_1}), we expect that also the other proposed algorithms yield similar results in terms of JSDM sum-rate
as the ones reported here for \nameSR.

\noindent{\bf Beamforming and Scheduling.}
After clustering the users, JSDM with PGP is applied. Following \cite{nam2014joint},
we divide the groups $\clG$ into two subsets $\clG_1=\{1,3,5,7,9\}$ and $\clG_2=\{2,4,6,8\}$ with intertwined angular supports (see Fig.~\ref{fig:user_cluster}), such that within each subset the groups have mutually orthogonal matrices $\Fm_g$. The two group subsets are served in orthogonal resource blocks, while 
the groups within each subset are served using spatial multiplexing. 

For clarity, we focus on subset $\clG_1$ while the scheme for $\clG_2$ follows immediately. 
Let $\bfh_{g_\ell}$ be the channel vector of the $\ell$-th scheduled user in group $g\in \clG_1$, and let $\bfH_g=[\bfh_{g_1}, \dots, \bfh_{g_{S_g}}]$ 
and $\Hmat=[\bfH_1, \bfH_3, \dots]$ be the channel matrix of the scheduled users in group $g$, and the channel matrix of all groups combined 
in a given coherence block. The BS wishes to serve $S=\sum_{g\in \clG_1} S_g$ data streams (users), where $S_g \leq q_g$ is the number 
of users scheduled in group $g$. We suppose that the scheduled users in group $g$ are selected randomly from the set of users assigned to group $g$ by the grouping algorithm. Since we have only $m=20$ RF chains, we set $S_g=4$, for $g \in \clG_1$ ($S_g=5$ for $g\in \clG_2$). 
The JSDM two-stage precoder is given by $\bfV=\bfU \bfR$, where $\bfU=[\Fm_1, \Fm_3, \dots, \Fm_{G_1}]$ is the $M\times q$ DFT 
pre-beamforming matrix with $q=\sum_{g\in \clG_1} q_g$, and where $\bfR \in \bC^{q\times S}$ is the  baseband (implemented in the digital domain)
MIMO multiuser precoding matrix in block-diagonal form according to the PGP scheme. 
The MIMO precoding matrix $\bfR$ depends on the instantaneous realizations of the reduced dimensional projected 
channel matrices $\Hproj_{gg} = \Fm_g^\herm \bfH_{g}$ for $g \in \clG_1$. The scheduled users in each group are served using linear 
zero-forcing beamforming (ZFBF) matrix  obtained as the column-normalized version of the pseudo-inverse of the projected channel matrix 
$\Hproj_{gg}$. Further details can be found in \cite{adhikary2013joint} and are omitted here due to space limitation.

\begin{figure}[t]
\centering
\includegraphics{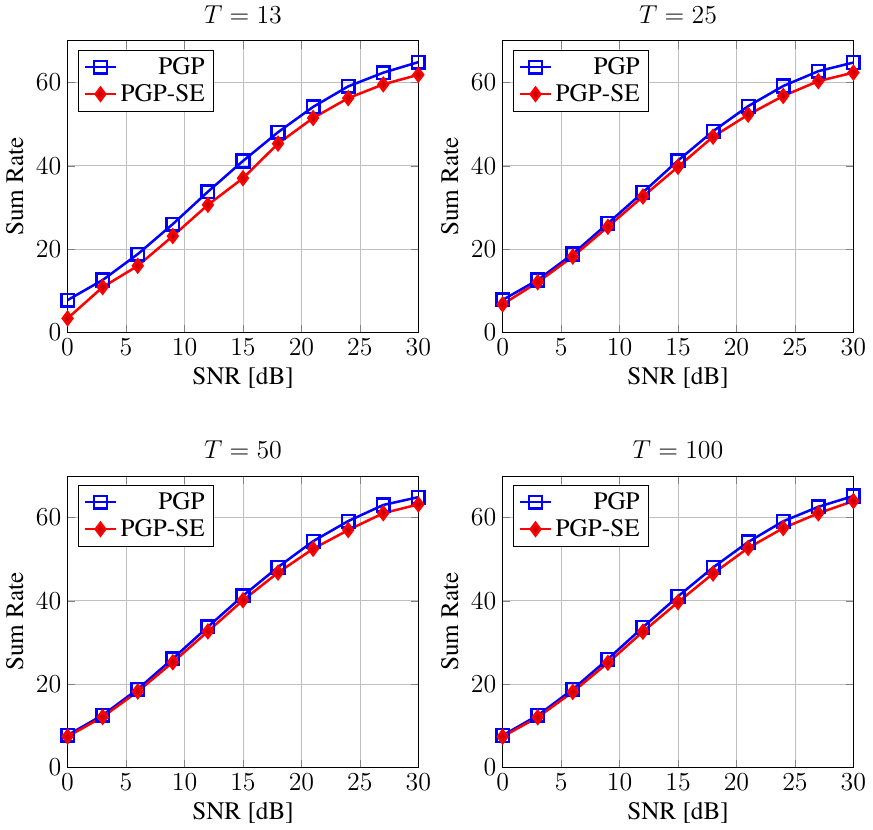}
\caption{Sum-rate vs. SNR performance of a JSDM  with PGP with exact subspace knowledge (PGP), 
and comparison with a JSDM with PGP with subspace estimation (PGP-SE) for training lengths $T \in \{13, 25, 50, 100\}$.}
\label{fig:perf_jsdm_dft}
\end{figure}

\noindent{\bf Sum-rate performance and simulation results.}
Let $\bfU$ and $\bfR$ be the pre-beamforming and the MIMO precoding matrices, and let $\bfd \in \bC^S$ be the $S$-dim vector of $S$ 
symbols corresponding to $S$ data streams. We normalize the symbol energy and the users' channel vectors such that $\bE[\bfd \bfd^\herm]=\bfI_S$ and $\bE[\|\bfh_k\|^2]=1$, for all $k \in \clK$.
When $\bfd$ is transmitted in the DL, the received signal at the user
side is given by $\bfy=\Hmat^\herm \bfU \bfR \bfd + \bfn$, where $\bfn\sim \cg({\bf 0}, \frac{1}{\snr} \bfI_S)$ is the $S$-dim vector of additive noise. 
This is equivalent to an $S \times S$ MIMO multiuser interference channel given by the transfer matrix $\bfT:=\Hmat^\herm \bfU \bfR$. Treating the interference as noise, the \textit{signal-to-interference-plus-noise} (SINR) for the data stream $s$ is given by 
\begin{align}
\sinr_s=\frac{|[\bfT]_{s,s}|^2}{\sum_{s'\neq s} |[\bfT]_{s,s'}|^2 + \frac{1}{\snr}}.
\end{align}
The corresponding achieved sum-rate in the block is given by
\begin{align}
\clC^{\mathsf{sum}}(\Hmat, \snr)=\sum_{s=1}^S \frac{1}{2} \log_2(1+ \sinr_s).
\end{align}
In simulations, we evaluate $\clC^{\mathsf{sum}}(\Hmat, \snr)$ for $1000$ independent realizations of the channel 
matrix $\Hmat$, and plot the average sum-rate, which yields the ``ergodic'' rate achieved via coding over many fading blocks, and ideally adapting the rate of the data streams in each block. 
For simplicity, in our simulations we have used the same value $\snr$ both for DL data transmission and for the UL training, in order to estimate the users' subspaces. In practice, UL and DL SNRs may differ. However, this does not change the qualitative conclusions of our results, since
we can always compensate for a lower UL SNR by increasing the subspace estimation training length $T$. In order to clearly put in evidence 
the effect of the subspace estimation, we calculated the achievable sum-rate under the assumption that the projected channel matrices $\Hproj_{gg}$ are perfectly known at the BS. In general, a dedicated UL training per group must be implemented, from which $\Hproj_{gg}$ is estimated via UL-DL reciprocity \cite{shepard2012argos,rogalin2014scalable}. We did not include here the projected channel estimation phase since 
it incurs a small performance degradation with respect to the ideal case in the practically relevant range of SNR (see analysis in \cite{adhikary2013joint}), 
and would have the effect of ``blurring'' the dependence of the performance with respect to the subspace estimation, which is the focus of this paper. 

Fig.~\ref{fig:perf_jsdm_dft} shows the JSDM sum-rate vs. SNR for the system described before. 
It is seen that subspace estimation even for very short training length $T$ does not incur any significant loss in terms of achievable rate.

\section{Conclusion}
In this paper, we studied the estimation of the user signal subspace in a MIMO wireless scenario where a transmitter (user) with a single antenna 
sends training symbols, and a receiver (BS) equipped with an array with $M$ antennas obtains noisy linear sketches of the corresponding channel vector, modeled as an i.i.d. (in time) and correlated (in the antenna domain) vector Gaussian random process. 
Our algorithms require sampling only $2\sqrt{M}$ antenna array elements and return a $p$-dimensional estimated subspace capturing almost the 
same amount of signal power as the best possible $p$-dimensional subspace.
We analyzed the performance of some of the proposed algorithms and compared it  
with that of state-of-the art methods in the literature via numerical simulations.

{We also illustrated in details how the users' subspace information can be exploited in a JSDM hybrid digital-analog 
implementation of a massive MIMO system. 
We provided numerical simulations showing that in terms of the achieved ergodic sum-rate, such a JSDM system in which the users' subspaces are estimated through
our proposed algorithms performs almost indistinguishably from a scheme that assumes ideal knowledge of the users' channel covariances.
This indicates that the proposed algorithms are well suited for JSDM with HDA implementation with a small number $m \ll M$ of 
RF chains (and A/D converters), 
where the group-separating beamformer is implemented in the analog (RF) domain, 
and the multiuser precoding is implemented in the digital (baseband) domain.} 


\appendices
%
%

\section{Overview of Complex Gaussian Random Variables}\label{app:gaus}
\noindent In this appendix, we review some of the properties of the complex circularly symmetric Gaussian random variables that we need in this paper.

\begin{proposition}\label{prop:circ_gaus}
Let $X=X_r+j X_i$ and $Y=Y_r+j Y_i$ be two zero-mean unit-variance circularly symmetric complex-valued Gaussian  variables with a correlation coefficient $\rho=\rho_r+j \rho_i$. Then we have:
\begin{enumerate}
\item $\bE[XX^*]=\bE[YY^*]=1$, and $\bE[XX]=\bE[YY]=\bE[XY]=0$. 
\item $\bE[X_r^2]=\bE[X_i^2]=\bE[Y_r^2]=\bE[Y_i^2]=\frac{1}{2}$.
\item $\bE[X_rX_i]=\bE[Y_rY_i]=0$, $\bE[X_rY_r]=\bE[X_iY_i]=\frac{\rho_r}{2}$, and $\bE[X_rY_i]=-\bE[X_iY_r]=\frac{\rho_i}{2}$.
\end{enumerate}
\end{proposition}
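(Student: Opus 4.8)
The plan is to reduce everything to two structural facts: the normalization $\bE[XX^*]=\bE[YY^*]=1$ (which is just the unit-variance hypothesis), and the rotational invariance that defines circular symmetry, namely $e^{j\phi}X\eqdist X$, $e^{j\phi}Y\eqdist Y$, and jointly $(e^{j\phi}X,e^{j\phi}Y)\eqdist(X,Y)$ for every real $\phi$. Given these, item (1) is immediate: the first two equalities are the normalization, and for the pseudo-covariances one writes $\bE[X^2]=\bE[(e^{j\phi}X)^2]=e^{2j\phi}\bE[X^2]$, valid for all $\phi$, which forces $\bE[X^2]=0$; the same argument gives $\bE[Y^2]=0$, and joint invariance gives $\bE[XY]=e^{2j\phi}\bE[XY]=0$.

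Items (2) and (3) then follow by expanding into real and imaginary parts and matching coefficients. From $\bE[X^2]=0$ and $X^2=(X_r^2-X_i^2)+2jX_rX_i$ one reads off $\bE[X_r^2]=\bE[X_i^2]$ and $\bE[X_rX_i]=0$; combining the first with $\bE[X_r^2]+\bE[X_i^2]=\bE[XX^*]=1$ yields $\bE[X_r^2]=\bE[X_i^2]=\tfrac12$, and likewise for $Y$, which is (2) together with the first identities of (3). For the cross terms I would use the two complex constraints $\bE[XY]=0$ and $\bE[X^*Y]=\rho$ (taking the correlation coefficient in the convention $\rho=\bE[X^*Y]$). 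Writing out $XY$ and $X^*Y$ in terms of $X_r,X_i,Y_r,Y_i$ and separating real and imaginary parts produces the four scalar equations
\begin{align*}
\bE[X_rY_r]+\bE[X_iY_i]=\rho_r,\quad \bE[X_rY_i]-\bE[X_iY_r]=\rho_i,\\
\bE[X_rY_r]-\bE[X_iY_i]=0,\quad \bE[X_rY_i]+\bE[X_iY_r]=0,
\end{align*}
whose (triangular) solution is exactly $\bE[X_rY_r]=\bE[X_iY_i]=\tfrac{\rho_r}{2}$ and $\bE[X_rY_i]=-\bE[X_iY_r]=\tfrac{\rho_i}{2}$.

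There is no real obstacle here — the entire argument is bookkeeping. The only points that deserve a line of care are: fixing the convention for the correlation coefficient, since the opposite choice $\rho=\bE[XY^*]$ flips the sign of $\rho_i$ in item (3); and being consistent about which pair of real scalar equations is extracted from each complex identity, so that the linear system above has the stated form. If one prefers, items (2)–(3) can alternatively be obtained by recalling that a jointly circularly symmetric Gaussian vector $(X,Y)$ is equivalent to a real Gaussian vector $(X_r,X_i,Y_r,Y_i)$ whose covariance is determined by $\bE[XX^*],\bE[YY^*],\bE[X^*Y]$ together with the vanishing pseudo-covariances; reading that covariance matrix off gives all the entries at once.
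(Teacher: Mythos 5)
Your proof is correct. Note that the paper itself offers no proof of this proposition: it appears in Appendix~A as a review of standard properties of circularly symmetric Gaussians, stated without argument, so there is nothing to compare against line by line. Your derivation --- rotational invariance forces the pseudo-covariances $\bE[X^2]=\bE[Y^2]=\bE[XY]$ to vanish, and then items (2)--(3) drop out of matching real and imaginary parts of $\bE[XX^*]=1$, $\bE[XY]=0$, and $\bE[X^*Y]=\rho$ --- is the standard one and fills that gap cleanly. Your care about the convention is warranted and, as it happens, resolved in your favor: in the proof of the paper's Proposition~\ref{prop:circ_gaus2} the authors compute $\bE[XY^*]=\rho_r-j\rho_i=\rho^*$, i.e.\ they implicitly take $\rho=\bE[X^*Y]$, which is exactly the convention you adopted, so the signs in item (3) come out consistent with the paper. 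The only (minor) point worth making explicit is that $\bE[XY]=0$ requires \emph{joint} circular symmetry of the pair $(X,Y)$, not merely marginal circular symmetry of each variable; you state this hypothesis, and it is the intended reading of the proposition.
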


\noindent We also need the following proposition for real-valued Gaussian variables.
\begin{proposition}[Price's Theorem \cite{price1958useful}]\label{prop:gaus_diff}
Let $Z$ and $W$ be two real-valued $\clN(0,1)$ Gaussian variables with a covariance $\rho$. Let $g(z,w)$ be a differentiable function of $(z,w)$, and $I(\rho)=\bE[g(Z,W)]$. Then $\frac{d}{d\rho}I=\bE\Big [\frac{{\partial}^2}{\partial z \partial w}g(Z,W)\Big ]$. \hfill $\square$
\end{proposition}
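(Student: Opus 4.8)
The plan is to reduce Price's Theorem to a single diffusion-type identity for the bivariate Gaussian kernel and then integrate by parts twice. Write $p_\rho(z,w)$ for the joint density of $(Z,W)$,
\begin{align*}
p_\rho(z,w) = \frac{1}{2\pi\sqrt{1-\rho^2}}\exp\!\left(-\frac{z^2 - 2\rho z w + w^2}{2(1-\rho^2)}\right),
\end{align*}
so that $I(\rho) = \int_{\bR^2} g(z,w)\, p_\rho(z,w)\, dz\, dw$ for $|\rho| < 1$. The crux is the identity
\begin{align*}
\frac{\partial}{\partial\rho}\, p_\rho(z,w) = \frac{\partial^2}{\partial z\, \partial w}\, p_\rho(z,w).
\end{align*}
I would prove this by Fourier transform: the characteristic function of $(Z,W)$ is $\phi_\rho(s,t) = \exp\!\big(-\tfrac{1}{2}(s^2 + 2\rho s t + t^2)\big)$, so $\partial_\rho \phi_\rho = -s t\,\phi_\rho$, and this is exactly the Fourier transform of $\partial_z\partial_w p_\rho$ (each $\partial_z$ brings down a factor $-i s$, each $\partial_w$ a factor $-i t$). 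Injectivity of the Fourier transform gives the identity; alternatively one verifies it by differentiating the Gaussian kernel directly, a short computation.

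Granting the identity, the argument is short. Differentiating $I$ under the integral sign and substituting,
\begin{align*}
\frac{d}{d\rho} I(\rho) = \int_{\bR^2} g(z,w)\, \frac{\partial}{\partial\rho} p_\rho(z,w)\, dz\, dw = \int_{\bR^2} g(z,w)\, \frac{\partial^2}{\partial z\,\partial w} p_\rho(z,w)\, dz\, dw,
\end{align*}
and then integrating by parts once in $z$ and once in $w$ --- the boundary terms vanish since $p_\rho$ and its first partials decay faster than any polynomial --- we obtain
\begin{align*}
\frac{d}{d\rho} I(\rho) = \int_{\bR^2} \frac{\partial^2}{\partial z\,\partial w} g(z,w)\, p_\rho(z,w)\, dz\, dw = \bE\!\left[\frac{\partial^2}{\partial z\,\partial w} g(Z,W)\right],
\end{align*}
which is the assertion.

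The only real work is technical. To interchange $\tfrac{d}{d\rho}$ with the integral and to discard the boundary terms one needs mild growth control on $g$ and its derivatives, so ``differentiable'' in the statement is best read as ``$g$ possesses a mixed second partial and $g$ together with the relevant derivatives is of at most polynomial growth''. I would handle both points by dominated convergence: $\partial_\rho p_\rho$ equals $p_\rho$ times a polynomial in $(z,w)$ whose coefficients stay bounded for $\rho$ in a compact subinterval of $(-1,1)$, so against a polynomially growing $g$ the dominating function is Gaussian times polynomial, hence integrable, and the same estimate kills the boundary contributions in the integration by parts. A slicker route that avoids the PDE altogether is to write $Z = X$, $W = \rho X + \sqrt{1-\rho^2}\,Y$ with $X,Y$ independent standard normals, differentiate in $\rho$ inside the expectation, and invoke the Gaussian (Stein) integration-by-parts identity: since $\tfrac{\partial W}{\partial\rho} = X - \tfrac{\rho}{\sqrt{1-\rho^2}}\,Y$ is uncorrelated with $W$ and has unit covariance with $Z$, the term $\bE\big[g_w(Z,W)\,\tfrac{\partial W}{\partial\rho}\big]$ collapses directly to $\bE\big[g_{zw}(Z,W)\big]$, giving the result.
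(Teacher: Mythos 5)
Your proof is correct. Note that the paper itself offers no proof of this proposition --- it is imported by citation from Price's 1958 paper --- so there is no in-paper argument to diverge from; your derivation via the kernel identity $\frac{\partial}{\partial\rho}p_\rho=\frac{\partial^2}{\partial z\,\partial w}p_\rho$ (verified through the characteristic function) followed by two integrations by parts is precisely the classical proof of Price's theorem. Your caveat that ``differentiable'' must be strengthened to include existence of the mixed second partial and polynomial growth control is well taken, though it is harmless here since the paper only ever applies the result to the polynomial $g(z,w)=z^2w^2$ in the subsequent proposition on fourth moments, for which both the interchange of derivative and integral and the vanishing of boundary terms are immediate; the Stein-identity alternative you sketch is also valid and is arguably the cleanest route for that particular application.
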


\noindent Using Proposition \ref{prop:gaus_diff}, we can prove the following result.

\begin{proposition}\label{prop:circ_gaus2}
Let $X$ and $Y$ be as in Proposition \ref{prop:circ_gaus}. Then, we have
\begin{enumerate}
\item $\bE[X_i^2 Y_r^2]=\bE[X_r^2 Y_i^2]=\frac{1+2 \rho_i^2}{4}$ and $\bE[X_i^2 Y_i^2]=\bE[X_r^2 Y_r^2]=\frac{1+2 \rho_r^2}{4}$.
\item $\bE[|X Y^*|^2]=\bE[|X|^2] \bE[|Y|^2] + |\bE[X Y^*]|^2 $.
\end{enumerate}
\end{proposition}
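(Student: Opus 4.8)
The plan is to derive part (1) from Price's Theorem (Proposition \ref{prop:gaus_diff}) applied to the fourth-moment function $g(z,w)=z^2w^2$, and then to obtain part (2) by simply expanding $|XY^*|^2=(X_r^2+X_i^2)(Y_r^2+Y_i^2)$ and summing the four terms produced by part (1).

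For part (1), first I would record the relevant second moments from Proposition \ref{prop:circ_gaus}: $\bE[X_r^2]=\bE[X_i^2]=\bE[Y_r^2]=\bE[Y_i^2]=\tfrac12$, while $\bE[X_rY_r]=\bE[X_iY_i]=\tfrac{\rho_r}{2}$ and $\bE[X_iY_r]=-\bE[X_rY_i]=-\tfrac{\rho_i}{2}$. To fit the unit-variance hypothesis of Proposition \ref{prop:gaus_diff}, I would rescale: for the pair $(X_r,Y_r)$ set $Z=\sqrt2\,X_r$, $W=\sqrt2\,Y_r$, so that $Z,W$ are standard real Gaussians with covariance $r:=\bE[ZW]=\rho_r$. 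Writing $I(r)=\bE[Z^2W^2]$ and using $\frac{\partial^2}{\partial z\partial w}(z^2w^2)=4zw$, Price's Theorem gives $\frac{d}{dr}I(r)=4\,\bE[ZW]=4r$, hence $I(r)=I(0)+2r^2$; since at $r=0$ the variables are independent, $I(0)=\bE[Z^2]\bE[W^2]=1$, so $I(r)=1+2r^2$. Dividing by $4$ yields $\bE[X_r^2Y_r^2]=\tfrac{1+2\rho_r^2}{4}$. The identical computation for $(X_i,Y_i)$ (covariance $\rho_r$) gives $\bE[X_i^2Y_i^2]=\tfrac{1+2\rho_r^2}{4}$, and for $(X_i,Y_r)$ or $(X_r,Y_i)$ (covariance $\mp\rho_i$) the answer depends only on the square of the covariance, giving $\bE[X_i^2Y_r^2]=\bE[X_r^2Y_i^2]=\tfrac{1+2\rho_i^2}{4}$.

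For part (2), I would write $|XY^*|^2=|X|^2|Y|^2=(X_r^2+X_i^2)(Y_r^2+Y_i^2)$ and take expectations term by term using part (1):
\begin{align*}
\bE[|XY^*|^2]&=\bE[X_r^2Y_r^2]+\bE[X_r^2Y_i^2]+\bE[X_i^2Y_r^2]+\bE[X_i^2Y_i^2]\\
&=\tfrac{1+2\rho_r^2}{4}+\tfrac{1+2\rho_i^2}{4}+\tfrac{1+2\rho_i^2}{4}+\tfrac{1+2\rho_r^2}{4}=1+\rho_r^2+\rho_i^2=1+|\rho|^2.
\end{align*}
Finally, since $X,Y$ have unit variance, $\bE[|X|^2]=\bE[|Y|^2]=1$ and the correlation coefficient satisfies $\bE[XY^*]=\rho$, so $|\bE[XY^*]|^2=|\rho|^2$; therefore $\bE[|XY^*|^2]=1+|\rho|^2=\bE[|X|^2]\bE[|Y|^2]+|\bE[XY^*]|^2$, as claimed.

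The only delicate point, and the place where an error is most likely to creep in, is bookkeeping: correctly reading off the signs and factors of $\tfrac12$ in the covariances of the four real pairs from Proposition \ref{prop:circ_gaus}, and carrying the $\sqrt2$ rescaling through Price's Theorem so that the initial condition $I(0)=1$ and the slope $4r$ are applied to the right normalized variables. Once those are pinned down, everything else is a one-line integration of a linear ODE plus an expansion.
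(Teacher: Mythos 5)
Your proof is correct and follows essentially the same route as the paper: Price's theorem applied to $g(z,w)=z^2w^2$ on the rescaled standard-Gaussian pairs $(\sqrt{2}X_r,\sqrt{2}Y_r)$, etc., integration of the resulting ODE $\frac{d}{dr}I=4r$ with the independent case $I(0)=1$ as initial condition, and then termwise expansion of $(X_r^2+X_i^2)(Y_r^2+Y_i^2)$ for part (2). The only nit is that with the sign conventions of Proposition \ref{prop:circ_gaus} one actually gets $\bE[XY^*]=\rho_r-j\rho_i=\rho^*$ rather than $\rho$ (as the paper itself computes), but since only $|\bE[XY^*]|^2=|\rho|^2$ enters the final identity, this does not affect the conclusion.
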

\begin{proof}
For simplicity, we prove only one of the identities in part 1. Let us consider $\bE[X_r^2 Y_i^2]$. Note that from the properties of the complex Gaussian variables, it results that $(Z,W)=(\sqrt{2}X_r,\sqrt{2}Y_i)$ are jointly Gaussian $\sfN(0,1)$ random variables with covariance $\rho_i$. This implies that $g(\rho_i)=4\bE[X_i^2 Y_r^2]=\bE[Z^2 W^2]$ is a function of $\rho_i$. Applying the Price's theorem in Proposition \ref{prop:gaus_diff}, we have
\begin{align}
\frac{d}{d\rho_i} g= 4\, \bE[Z W]=4\rho_i,
\end{align}
which implies that $g(\rho_i)=2 \rho_i^2 + \kappa$, where $\kappa$ is a constant. For $\rho_i=0$, the random variables $(Z,W)$ are independent from each other, and $g(0)=\bE[Z^2 W^2]=\bE[Z^2]\bE[W^2]=1$, which implies that $\kappa=1$. Hence, we have $g(\rho_i)=2\rho_i^2 +1$, which implies that $\bE[X_i^2 Y_r^2]=\frac{1+2 \rho_i^2}{4}$. 

To prove part 2, note that $|XY^*|^2= (X_r^2+X_i^2)(Y_r^2+Y_i^2)$ can be expanded into four  terms whose expected values can be computed using Price's theorem, where we obtain
\begin{align}\label{eq:gaus_eq1}
\bE[|XY^*|^2]&= 2(\frac{1+2\rho_i^2}{4}) + 2(\frac{1+2 \rho_r^2}{4})=1+|\rho|^2.
\end{align}
Moreover, we have
\begin{align}\label{eq:gaus_eq2}
\bE[XY^*]&=\bE[X_r Y_r + X_i Y_i] + j \bE[X_i Y_r - X_r Y_i]\nonumber\\
&= 2(\frac{\rho_r}{2}) + j 2(-\frac{\rho_i}{2})=\rho_r -j \rho_i=\rho^*.
\end{align}
The result follows from \eqref{eq:gaus_eq1}, \eqref{eq:gaus_eq2} and $\bE[|X|^2]=\bE[|Y|^2]=1$.
\end{proof}


\section{Analysis and Proof Techniques}\label{analysis_proof}

\subsection{Statistics of the Subsampled Signal}\label{estimator_signal}
%
%
From the signal model in \eqref{cont_sig_model}, it is seen that the received signal $\bfy(t)$ is a complex Gaussian  vector with covariance matrix $\bfC_y=\bfS(\gamma)+\sigma^2 \bfI_M$. Since we assume that $\bfy(t)$ is independent across different snapshots $t \in [T]$, this completely specifies its statistics. Similarly, it results that the statistics of the sketches $\bfx(t)=\bfB\bfy(t)$ is fully specified with the covariance matrix $\bfC_x=\bfB \bfC_y \bfB^\herm$. For the coprime sampling matrix $\bfB$, and for $i,j\in \{1,\dots, m\}$ with $i\geq j$, we obtain
\begin{align}\label{eq:sample_x_stat}
[\bfC_x]_{i,j}&=[\bfS(\gamma)]_{d_i,d_j} + \sigma^2 \delta_{ij}\nonumber\\
&=[\bff]_{d_i-d_j} + \sigma^2 \delta_{ij}:=[\bfg]_{d_i-d_j}.
\end{align}
where $d_i \in \clD$ is the $i$-th largest index of the sampled antennas as in Section \ref{coprime_subsampling}, and where $[\bff]_k=\int_{-1}^1 \gamma(u) e^{jk \pi u} du$ denotes the $k$-th Fourier coefficient of $\gamma$. Notice that the SNR is simply given by $\snr=\frac{[\bff]_0}{\sigma^2}$. 

Now consider a specific $k\in [M]$ and let $d_i, d_j \in \clD$ be such that $k=d_i-d_j$. Since, as in Section \ref{coprime_subsampling}, we assume that $\clD$ is a complete cover for $[M]$, such $d_i$ and $d_j$ exist. Let us also define $[\widehat{\bfg}]_k=[\widehat{\bfC}_x]_{i,j}$. The following proposition charactereizes the mean and the variance of $[\widehat{\bfg}]_k$. The proof uses the properties of the complex Gaussian variables reviewed in Appendix \ref{app:gaus}.

\begin{proposition}\label{estim_variance}
Let $k \in [M]$ with $k=d_i-d_j$, and let $[\widehat{\bfg}]_k=[\widehat{\bfC}_x]_{i,j}$. Then, we have
$
\bE\Big [[\widehat{\bfg}]_k\Big ]=[\bfg]_k, \var\Big [[\widehat{\bfg}]_k\Big ]=\frac{(\sigma^2+[\bff]_0)^2}{T}=\frac{\sigma^4(1+\snr)^2}{T}.
$ \hfill $\square$
\end{proposition}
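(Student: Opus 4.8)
The plan is to expand $[\widehat{\bfg}]_k$ as an empirical average over the $T$ i.i.d.\ snapshots and reduce the whole statement to a single second-order moment computation for a pair of jointly circularly symmetric complex Gaussian variables, which is then dispatched by Proposition~\ref{prop:circ_gaus2}. First I would write $[\widehat{\bfg}]_k = [\widehat{\bfC}_x]_{i,j} = \frac{1}{T}\sum_{t=1}^T x_i(t)\, x_j(t)^*$, where $x_i(t) = [\bfx(t)]_i$ and the vectors $\bfx(t) = \bfB\bfy(t)$ are i.i.d., zero-mean, circularly symmetric complex Gaussian with covariance $\bfC_x = \bfB\bfC_y\bfB^\herm$. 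By \eqref{eq:sample_x_stat} one has $[\bfC_x]_{i,j} = [\bfg]_{d_i - d_j} = [\bfg]_k$ and, specializing to $i=j$ there, $[\bfC_x]_{i,i} = [\bfC_x]_{j,j} = [\bff]_0 + \sigma^2$. Unbiasedness is then immediate: $\bE[[\widehat{\bfg}]_k] = \bE[x_i(1) x_j(1)^*] = [\bfC_x]_{i,j} = [\bfg]_k$.

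For the variance, independence of the snapshots gives $\var[[\widehat{\bfg}]_k] = \frac{1}{T}\var[x_i(1)x_j(1)^*]$, so it suffices to compute the variance from a single snapshot. Here I would normalize, setting $X = x_i(1)/\sqrt{[\bff]_0 + \sigma^2}$ and $Y = x_j(1)/\sqrt{[\bff]_0 + \sigma^2}$; by Proposition~\ref{prop:circ_gaus} these are zero-mean, unit-variance, circularly symmetric complex Gaussian with correlation coefficient $\rho = [\bfg]_k / ([\bff]_0 + \sigma^2)$. Then $\var[x_i(1)x_j(1)^*] = ([\bff]_0+\sigma^2)^2\big(\bE[|XY^*|^2] - |\bE[XY^*]|^2\big)$, and Proposition~\ref{prop:circ_gaus2}(2) gives $\bE[|XY^*|^2] = \bE[|X|^2]\bE[|Y|^2] + |\bE[XY^*]|^2 = 1 + |\rho|^2$, so the bracket equals $1 + |\rho|^2 - |\rho|^2 = 1$. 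Hence $\var[x_i(1)x_j(1)^*] = ([\bff]_0 + \sigma^2)^2$ regardless of $k$, and dividing by $T$ together with $[\bff]_0 + \sigma^2 = \sigma^2(1+\snr)$, $\snr = [\bff]_0/\sigma^2$, yields the claimed $\frac{([\bff]_0+\sigma^2)^2}{T} = \frac{\sigma^4(1+\snr)^2}{T}$.

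There is no genuine obstacle: the only substantive ingredient is the fourth-moment identity for circular Gaussians, which is precisely Proposition~\ref{prop:circ_gaus2}(2) (itself obtained from Price's theorem, Proposition~\ref{prop:gaus_diff}). The point worth stressing in the write-up is that the two $|\rho|^2$-dependent contributions cancel exactly, so the per-coefficient variance is the \emph{same} for every $k$ --- this is exactly what makes the aggregate bound $\|\inp{\gamma}{\bfa} - \widehat{\bff}\|^2 \lesssim M([\bff]_0+\sigma^2)^2/T$ used in the \nameSR\ constraint come out cleanly. I would also note in passing that the case $k=0$ (where $x_i(1)x_j(1)^* = |x_i(1)|^2$ and $\rho = 1$) is covered by the same computation, consistent with $|x_i(1)|^2$ being exponential with variance $([\bff]_0+\sigma^2)^2$.
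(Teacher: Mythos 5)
Your proof is correct and follows essentially the same route as the paper's: reduce the variance to a single snapshot by independence, then invoke the fourth-moment identity of Proposition~\ref{prop:circ_gaus2}(2) so that the $|\rho|^2$ terms cancel and only $\bE[|X|^2]\bE[|Y|^2]=([\bff]_0+\sigma^2)^2$ survives. The only (harmless) difference is that you explicitly normalize to unit-variance variables before applying the lemma, whereas the paper applies the identity directly to $[\bfy(t)]_{d_i}$ and $[\bfy(t)]_{d_j}$.
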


\begin{proof}
Taking the expectation, we have
\begin{align}
\bE\Big [[\widehat{\bfg}]_k\Big ]&=\bE\Big [ [\widehat{\bfC}_x]_{i,j}\Big ]=[\bfC_x]_{i,j}\nonumber\\
&=[\bfC_y]_{d_i,d_j}= [\bff]_{d_i-d_j} + \sigma^2 \delta_{ij}=[\bfg]_k.
\end{align}
Since the observations $\bfy(t)$, and as a result $\bfx(t)$, are independent for $t\in[T]$, and 
{\small
\begin{align}
[\widehat{\bfC}_x]_{i,j}=\frac{1}{T} \sum_{t=1}^T [\bfx(t)]_i [\bfx(t)]_j^*=\frac{1}{T} \sum_{t=1}^T [\bfy(t)]_{d_i} [\bfy(t)]_{d_j}^*,
\end{align}}%
it results that $\var\Big[[\widehat{\bfg}]_k\Big ]= \frac{1}{T} \var \Big [ [\bfy(t)]_{d_i} [\bfy(t)]_{d_j}^*\Big ]$. Hence, using the properties of the complex Gaussian variables proved in Proposition \ref{prop:circ_gaus2}, we obtain
{\small \begin{align}
&\var\Big [ [\bfy(t)]_{d_i} [\bfy(t)]_{d_j}^*\Big ]\nonumber\\
&=\bE\Big [\big |[\bfy(t)]_{d_i} [\bfy(t)]_{d_j}^*\big |^2\Big ] 
- \Big |\bE \Big [ [\bfy(t)]_{d_i} [\bfy(t)]_{d_j}^* \Big ] \Big |^2\nonumber\\
&=\bE\Big [ \big |[\bfy(t)]_{d_i}\big |^2 \Big ] \bE\Big [\big |[\bfy(t)]_{d_j}\big |^2 \Big ] + \Big |\bE \big [ [\bfy(t)]_{d_i} [\bfy(t)]_{d_j}^* \big ] \Big |^2 \nonumber\\
&-\Big |\bE \big [ [\bfy(t)]_{d_i} [\bfy(t)]_{d_j}^* \big ] \Big |^2
=  ([\bff]_0+\sigma^2)^2=\sigma^4(1+\snr)^2,
\end{align}}%
where $\snr=\frac{[\bff]_0}{\sigma^2}$ as before. Hence, $\var\Big [[\widehat{\bfg}]_k\Big ]= \frac{1}{T} \sigma^4(1+\snr)^2$. 
\end{proof}

\subsection{Analysis of the Performance of the \nameCMP\ and \nameSR\ Algorithms}\label{sec:cmp_proof}

In this section, we analyze the performance of the \nameCMP\ and \nameSR. First, we need the following preliminary results.  

\begin{proposition}\label{norm_ineq}
Let $\widehat{\bfC}_y$ be the sample covariance of the signal $\bfy(t)$, $t \in [T]$, and let $\bfC^*_y$ be the \nameCMP\ estimate given by \eqref{C_estimator} or equivalently \eqref{C_estimator2}. Let $\bfC' \in \bT_+$ be an arbitrary Hermitian PSD Toeplitz matrix. Then $\max\Big\{ \|\widehat{\bfC}_y - \bfC^*_y\|_\bfB, \|\bfC^*_y-\bfC'\|_\bfB \Big \} \leq \|\widehat{\bfC}_y- \bfC' \|_\bfB$. \hfill $\square$
\end{proposition}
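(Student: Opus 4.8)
The plan is to read Proposition~\ref{norm_ineq} as the classical \emph{obtuse-angle} characterization of a metric projection onto a closed convex set, now transported to the positive-semidefinite Hermitian form $\inpb{\cdot}{\cdot}$ and the closed convex cone $\bT_+$. By definition $\bfC^*_y$ minimizes $\Mm \mapsto \|\widehat{\bfC}_y - \Mm\|_\bfB$ over $\Mm \in \bT_+$, and $\|\bfK\|_\bfB^2 = \inpb{\bfK}{\bfK}$ arises from a sesquilinear form that is positive semidefinite on all $M\times M$ matrices (and genuinely definite once restricted to $\bT_+$, by Proposition~\ref{prop:bijection}); hence all the usual inner-product expansions are available, and we never need $\|\cdot\|_\bfB$ to separate matrices lying outside $\bT_+$, so in particular it is harmless that $\widehat{\bfC}_y$ itself is not Toeplitz. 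Existence of the minimizer $\bfC^*_y$ is not at stake: it is already asserted where the \nameCMP\ algorithm is defined.

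First I would establish the variational inequality
\[
\Re\,\inpb{\widehat{\bfC}_y - \bfC^*_y}{\bfC' - \bfC^*_y} \leq 0 \qquad \text{for every } \bfC' \in \bT_+ .
\]
Fix such a $\bfC'$. By convexity of $\bT_+$, the segment $\bfC^*_y + t(\bfC' - \bfC^*_y)$ stays in $\bT_+$ for $t \in [0,1]$, so
\begin{align*}
\phi(t) &:= \| (\widehat{\bfC}_y - \bfC^*_y) - t(\bfC' - \bfC^*_y) \|_\bfB^2 \\
&= \|\widehat{\bfC}_y - \bfC^*_y\|_\bfB^2 - 2t\,\Re\,\inpb{\widehat{\bfC}_y - \bfC^*_y}{\bfC' - \bfC^*_y} + t^2\,\|\bfC' - \bfC^*_y\|_\bfB^2
\end{align*}
is a real quadratic in $t$ satisfying $\phi(t) \geq \phi(0)$ on $[0,1]$ by optimality of $\bfC^*_y$; therefore $\phi'(0) \geq 0$, which is precisely the displayed inequality.

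Then I would expand $\widehat{\bfC}_y - \bfC' = (\widehat{\bfC}_y - \bfC^*_y) + (\bfC^*_y - \bfC')$ inside the $\bfB$-form:
\[
\|\widehat{\bfC}_y - \bfC'\|_\bfB^2 = \|\widehat{\bfC}_y - \bfC^*_y\|_\bfB^2 + 2\,\Re\,\inpb{\widehat{\bfC}_y - \bfC^*_y}{\bfC^*_y - \bfC'} + \|\bfC^*_y - \bfC'\|_\bfB^2 .
\]
The cross term equals $-\Re\,\inpb{\widehat{\bfC}_y - \bfC^*_y}{\bfC' - \bfC^*_y} \geq 0$ by the variational inequality, so $\|\widehat{\bfC}_y - \bfC'\|_\bfB^2 \geq \|\widehat{\bfC}_y - \bfC^*_y\|_\bfB^2 + \|\bfC^*_y - \bfC'\|_\bfB^2$; since both summands are nonnegative, each is itself bounded by $\|\widehat{\bfC}_y - \bfC'\|_\bfB^2$, and taking square roots yields the claimed $\max$-inequality. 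I do not anticipate a genuine obstacle: the only point that needs attention is keeping every step inside the positive-semidefinite, sesquilinear structure of $\inpb{\cdot}{\cdot}$, in particular retaining the explicit $\Re$ in the cross terms so that the Hermitian (rather than real-symmetric) nature of the form causes no trouble.
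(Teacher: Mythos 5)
Your proposal is correct and follows essentially the same route as the paper: the variational (obtuse-angle) inequality for the projection onto the convex cone $\bT_+$, followed by the Pythagorean expansion of $\|\widehat{\bfC}_y-\bfC'\|_\bfB^2$; you merely make explicit the one-sided derivative argument that the paper asserts informally, and note (harmlessly, since the form is real on Hermitian matrices) the $\Re$ in the cross term.
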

\begin{proof}
The inequality $ \|\widehat{\bfC}_y - \bfC^*_y\|_\bfB \leq \|\widehat{\bfC}_y- \bfC' \|_\bfB$ simply follows from the definition of $\bfC^*_y$ as the projection of $\widehat{\bfC}_y$ onto the space $\bT_+$. Thus, we only need to prove the other inequality  $\|\bfC^*_y - \bfC'\|_\bfB \leq \|\widehat{\bfC}_ y -\bfC'\|_\bfB$. To prove this, note that the seminorm $\|.\|_\bfB$ is defined from a PSD bilinear form. As $\bfC'$ itself belongs to $\bT_+$, it is not difficult to see that at the projection $\bfC^*_y$, the vector $\bfC'-\bfC^*_y$ is a feasible direction to move because from the convexity of the space $\bT_+$, it results that $\bfC^*_y+ \alpha (\bfC'-\bfC^*_y) \in \bT_+$ for any $\alpha \in [0,1]$. Thus, from the optimality of $\bfC^*_y$, it results that $\inpb{\widehat{\bfC}_y - \bfC^*_y}{\bfC'-\bfC^*_y} \leq 0$. This implies that  
\begin{align}
\|\widehat{\bfC}_y- \bfC' \|_\bfB^2 &= \|\widehat{\bfC}_y- \bfC^*_y+\bfC^*_y-\bfC' \|_\bfB^2\\
&= \|\widehat{\bfC}_y- \bfC^*_y \|_\bfB^2 + \|\bfC^*_y - \bfC'\|_\bfB^2\\
& - 2 \inpb{\widehat{\bfC}_y- \bfC^*_y}{\bfC' - \bfC^*_y}\\
&\geq \|\widehat{\bfC}_y- \bfC^*_y \|_\bfB^2 + \|\bfC^*_y - \bfC'\|_\bfB^2,
\end{align}
from which the desired inequality $\|\bfC^*_y - \bfC'\|_\bfB \leq \|\widehat{\bfC}_y- \bfC'\|_\bfB$ results. Combining the two inequalities, gives the proof.
\end{proof}

\vspace{2mm}
\begin{proposition}\label{prop:concentrate}
Let $\bfC^*_y$ and $\bfC_y$ be as defined before. Suppose $\|\bfC^*_y-\bfC_y\|_\bfB \leq \epsilon$ and let $\bfV\in \bH(M,p)$ be an arbitrary $M\times p$ matrix with $\bfV^\herm \bfV=\bfI_p$. Then, $|\inp{\bfC_y}{\bfV\bfV^\herm}- \inp{\bfC^*_y}{\bfV\bfV^\herm}| \leq  \epsilon \sqrt{p M}$. \hfill $\square$
\end{proposition}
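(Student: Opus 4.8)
The plan is to reduce the statement to a single application of Cauchy--Schwarz together with the coherence bound on $\alpha_\bfB(M)$ already recorded for the \nameCMP\ algorithm. Set $\bfK \eqdef \bfC_y - \bfC^*_y$. Since both $\bfC_y$ and $\bfC^*_y$ lie in $\bT_+$, their difference is a Hermitian Toeplitz matrix, i.e.\ $\bfK \in \bT$ (though not necessarily in $\bT_+$). The quantity to control is $|\inp{\bfC_y}{\bfV\bfV^\herm} - \inp{\bfC^*_y}{\bfV\bfV^\herm}| = |\inp{\bfK}{\bfV\bfV^\herm}|$, and Cauchy--Schwarz in the Frobenius inner product gives $|\inp{\bfK}{\bfV\bfV^\herm}| \leq \|\bfK\|\,\|\bfV\bfV^\herm\|$.

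Next I would evaluate the two factors. For the second, using $\bfV^\herm\bfV = \bfI_p$ and the fact that $\bfV\bfV^\herm$ is Hermitian, $\|\bfV\bfV^\herm\|^2 = \tr(\bfV\bfV^\herm\bfV\bfV^\herm) = \tr(\bfV(\bfV^\herm\bfV)\bfV^\herm) = \tr(\bfV^\herm\bfV) = \tr(\bfI_p) = p$, so $\|\bfV\bfV^\herm\| = \sqrt{p}$. For the first factor I would invoke the definition $\alpha_\bfB(M) = \max_{\bfK\in\bT}\|\bfK\|/\|\bfK\|_\bfB$ and the bound $\alpha_\bfB(M) \leq \sqrt{M}$, both stated in the description of \nameCMP: since $\bfK \in \bT$, this yields $\|\bfK\| \leq \alpha_\bfB(M)\,\|\bfK\|_\bfB \leq \sqrt{M}\,\|\bfK\|_\bfB \leq \sqrt{M}\,\epsilon$, where the last step is the hypothesis $\|\bfC^*_y - \bfC_y\|_\bfB \leq \epsilon$. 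Combining, $|\inp{\bfK}{\bfV\bfV^\herm}| \leq \sqrt{M}\,\epsilon \cdot \sqrt{p} = \epsilon\sqrt{pM}$, which is exactly the claim.

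There is essentially no obstacle here; the proof is two lines. The only point requiring a word of care is that $\|\cdot\|_\bfB$ is merely a seminorm on all of $\bT$ (it is a genuine norm only on $\bT_+$), but this is harmless: $\bfK$ is used only as an element of the linear space $\bT$, and the inequality $\|\bfK\| \leq \alpha_\bfB(M)\,\|\bfK\|_\bfB$ holds by the very definition of $\alpha_\bfB(M)$ as a maximum over $\bT$. I would also note in passing that the same argument gives the slightly sharper bound $|\inp{\bfC_y}{\bfV\bfV^\herm} - \inp{\bfC^*_y}{\bfV\bfV^\herm}| \leq \epsilon\,\alpha_\bfB(M)\sqrt{p}$, which is what is actually used downstream when a tighter coherence estimate is available.
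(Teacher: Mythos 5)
Your proof is correct and is essentially identical to the paper's: the same Cauchy--Schwarz step, the same computation $\|\bfV\bfV^\herm\|=\sqrt{p}$, and the same use of the coherence bound $\|\bfC_y-\bfC^*_y\|\leq\alpha_\bfB(M)\|\bfC_y-\bfC^*_y\|_\bfB\leq\sqrt{M}\,\epsilon$. Your remark that $\|\cdot\|_\bfB$ is only a seminorm on $\bT$ but that this is harmless because $\alpha_\bfB(M)$ is defined as a maximum over all of $\bT$ is a correct and worthwhile clarification that the paper leaves implicit.
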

\begin{proof}
Using the Cauchy-Schwartz inequality, we have:
\begin{align}
|\inp{\bfC_y}{\bfV\bfV^\herm}&- \inp{\bfC^*_y}{\bfV\bfV^\herm}|= |\inp{\bfC_y-\bfC^*_y}{\bfV\bfV^\herm}|\nonumber\\
& \leq \|\bfC_y-\bfC^*_y\| \sqrt{\tr\big (\bfV\bfV^\herm \bfV\bfV^\herm \big )}\nonumber\\
&= \|\bfC_y-\bfC^*_y\|\sqrt{p}\nonumber\\
&\stackrel{(a)}{\leq}  \alpha_\bfB(M) \|\bfC_y - \bfC^*_y\|_\bfB \, \sqrt{p}\leq  \epsilon \sqrt{p M},
\end{align}
where in $(a)$ we used the coherence parameter of the coprime matrix $\alpha_\bfB(M) \leq \sqrt{M}$.
\end{proof}

After finding the projection $\bfC^*_y$, we use its $p$-dim dominant subspace to design a beamformer matrix for the received signal $\bfy(t)$, $t\in[T]$. Let
$\bfC^*_y=\bfU \mathbf{\Lambda} \bfU^\herm$ be the SVD of $\bfC^*_y$, and let $\bfV\in \bH(M,p)$ be the matrix consisting of the  first $p$ columns of $\bfU$. If the estimate $\bfC^*_y$ is very close to the $\bfC_y$, then we expect that $\bfV$, in terms of capturing the signal power, be a good approximation of the dominant $p$-dim subspace of the signal. This has been formalized in the following propositions.

\vspace{2mm} 
\begin{proposition}\label{prop:beam_opt}
 Let $\bfC^*_y$ and $\bfC_y$ be as defined before and let $\bfS$ be the signal covariance matrix, where we have $\bfC_y=\bfS + \sigma^2 \bfI_M$. Assume that $\|\bfC^*_y-\bfC_y\|_\bfB \leq \epsilon$. Let $\bfC_y=\bfU\mathbf{\Lambda} \bfU^\herm$ and $\bfC^*_y=\widetilde{\bfU} \widetilde{\mathbf{\Lambda}} \widetilde{\bfU}^\herm$ be the SVD of $\bfC_y$ and $\bfC^*_y$ respectively. Let $\bfV$ and $\widetilde{\bfV}$ be $M\times p$ matrices consisting of the first $p$ columns of $\bfU$ and $\widetilde{\bfU}$. Then, 
$
 |\inp{\bfS}{\bfV\bfV^\herm} - \inp{\bfS}{\widetilde{\bfV}\widetilde{\bfV}^\herm}|\leq 2 \epsilon \sqrt{p M}.
$ \hfill $\square$
\end{proposition}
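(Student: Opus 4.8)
The plan is to exploit two facts: first, that adding $\sigma^2\bfI_M$ does not change eigenvectors, so $\bfV$ is simultaneously the top-$p$ eigenbasis of $\bfS$ and of $\bfC_y$; and second, that for any $\bfU\in\bH(M,p)$ the quantity $\inp{\bfC_y}{\bfU\bfU^\herm}$ and $\inp{\bfS}{\bfU\bfU^\herm}$ differ by the constant $\sigma^2\tr(\bfU\bfU^\herm)=\sigma^2 p$. Consequently the difference we want to control,
\begin{align*}
\inp{\bfS}{\bfV\bfV^\herm}-\inp{\bfS}{\widetilde{\bfV}\widetilde{\bfV}^\herm}
= \inp{\bfC_y}{\bfV\bfV^\herm}-\inp{\bfC_y}{\widetilde{\bfV}\widetilde{\bfV}^\herm},
\end{align*}
so it suffices to bound the right-hand side by $2\epsilon\sqrt{pM}$.

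Next I would set up a four-term chain. Since $\bfV$ spans the dominant $p$-dim subspace of $\bfC_y$, it maximizes $\inp{\bfC_y}{\bfU\bfU^\herm}$ over $\bfU\in\bH(M,p)$; likewise $\widetilde{\bfV}$ maximizes $\inp{\bfC^*_y}{\bfU\bfU^\herm}$. This immediately gives $\inp{\bfC_y}{\bfV\bfV^\herm}\geq \inp{\bfC_y}{\widetilde{\bfV}\widetilde{\bfV}^\herm}$, so the difference above is nonnegative. For the upper bound, I would write
\begin{align*}
\inp{\bfC_y}{\bfV\bfV^\herm}
&\leq \inp{\bfC^*_y}{\bfV\bfV^\herm}+\epsilon\sqrt{pM}\\
&\leq \inp{\bfC^*_y}{\widetilde{\bfV}\widetilde{\bfV}^\herm}+\epsilon\sqrt{pM}\\
&\leq \inp{\bfC_y}{\widetilde{\bfV}\widetilde{\bfV}^\herm}+2\epsilon\sqrt{pM},
\end{align*}
where the first and third steps invoke Proposition \ref{prop:concentrate} (applied with $\bfV$ and with $\widetilde{\bfV}$ respectively, using the hypothesis $\|\bfC^*_y-\bfC_y\|_\bfB\leq\epsilon$), and the middle step is the optimality of $\widetilde{\bfV}$ for $\bfC^*_y$. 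Combining the two displays yields
$0\leq \inp{\bfC_y}{\bfV\bfV^\herm}-\inp{\bfC_y}{\widetilde{\bfV}\widetilde{\bfV}^\herm}\leq 2\epsilon\sqrt{pM}$,
which is exactly the claim once we pass back to $\bfS$ via the constant-shift identity.

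There is no real obstacle here — the argument is a short sandwiching/variational argument. The only points to be careful about are the bookkeeping: verifying that the $\sigma^2 p$ terms cancel exactly (they do, since $\tr(\bfV\bfV^\herm)=\tr(\widetilde{\bfV}\widetilde{\bfV}^\herm)=p$), and making sure Proposition \ref{prop:concentrate} is invoked with the right fixed unitary matrix in each step (it is stated for an arbitrary fixed $\bfV\in\bH(M,p)$, so it applies verbatim to both $\bfV$ and $\widetilde{\bfV}$). The eigenvector-invariance observation $\bfC_y=\bfS+\sigma^2\bfI_M\Rightarrow$ same eigenbasis is the conceptual key that lets us reduce everything to the single inner product $\inp{\bfC_y}{\cdot}$ and chain cleanly through $\bfC^*_y$.
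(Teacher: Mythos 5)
Your proof is correct and follows essentially the same route as the paper's: both reduce to $\inp{\bfC_y}{\cdot}$ via the constant shift $\sigma^2 p$, invoke Proposition~\ref{prop:concentrate} twice (once with $\bfV$, once with $\widetilde{\bfV}$), and use the variational optimality of $\widetilde{\bfV}$ for $\bfC^*_y$. Your three-step sandwich chain is a slightly cleaner presentation than the paper's triangle-inequality-plus-maximum bookkeeping, but the underlying decomposition of the error into two $\epsilon\sqrt{pM}$ terms is identical.
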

\begin{proof}
First note that $\bfC_y$ and $\bfS$ differ by a multiple of identity matrix $\bfI_M$. As $\inp{\bfI_M}{\bfV \bfV^\herm}=\inp{\bfI_M}{\widetilde{\bfV}\widetilde{\bfV}^\herm}=p$, we can equivalently prove the following inequality
$
 |\inp{\bfC_y}{\bfV\bfV^\herm} - \inp{\bfC_y}{\widetilde{\bfV}\widetilde{\bfV}^\herm}|\leq 2 \epsilon \sqrt{p M}
$.

From the triangle inequality, $|\inp{\bfC_y}{\bfV\bfV^\herm} - \inp{\bfC_y}{\widetilde{\bfV}\widetilde{\bfV}^\herm}|$ can be upper bounded by 
\begin{align*}
|\inp{\bfC_y}{\bfV\bfV^\herm} - \inp{\bfC^*_y}{{\bfV}{\bfV}^\herm}|+|\inp{\bfC^*_y}{{\bfV}{\bfV}^\herm} - \inp{\bfC_y}{\widetilde{\bfV}\widetilde{\bfV}^\herm}|.
\end{align*}
Using Proposition \ref{prop:concentrate}, the first term is less than $\epsilon \sqrt{pM}$. 
For the second term, note that $0 \leq \inp{\bfC_y}{\widetilde{\bfV}\widetilde{\bfV}^\herm} \leq \inp{\bfC_y}{\bfV\bfV^\herm}$ because from the SVD, $\bfV$ is the best $p$-dim beamformer for $\bfC_y$. Similarly, we have 
$0 \leq \inp{\bfC^*_y}{\bfV\bfV^\herm} \leq \inp{\bfC^*_y}{\widetilde{\bfV}\widetilde{\bfV}^\herm}$.
Consequently, we can always make $|\inp{\bfC^*_y}{{\bfV} {\bfV}^\herm}  -\inp{\bfC_y}{\widetilde{\bfV} \widetilde{\bfV}^\herm} |$ larger by either changing $\bfV$ into $\widetilde{\bfV}$ or vice-versa. This implies that
$|\inp{\bfC^*_y}{{\bfV} {\bfV}^\herm}  -\inp{\bfC_y}{\widetilde{\bfV} \widetilde{\bfV}^\herm} |$ is always smaller than the maximum of $|\inp{\bfC^*_y}{\bfV\bfV^\herm}  -\inp{\bfC_y}{\bfV \bfV^\herm} |$ and $|\inp{\bfC^*_y}{\widetilde{\bfV} \widetilde{\bfV}^\herm}  -\inp{\bfC_y}{\widetilde{\bfV} \widetilde{\bfV}^\herm} |$, where from Proposition \ref{prop:concentrate} both terms are smaller than $\epsilon \sqrt{p M}$. Combining with the first upper bound, we have
\begin{align}\label{eq:max_rel}
|\inp{\bfC^*_y}{\widetilde{\bfV} \widetilde{\bfV}^\herm}  -\inp{\bfC_y}{\bfV \bfV^\herm} | \leq 2\epsilon \sqrt{p M},
\end{align}
which is the desired result.
\end{proof} 

\begin{remark}
Notice that $\bfV$ is the optimal $p$-dim beamformer for $\bfS$ (or equivalently $\bfC_y$). Proposition \ref{prop:beam_opt} implies that the optimal beamformer for the estimate covariance matrix $\bfC^*_y$ is $2\epsilon \sqrt{p M}$-optimal for $\bfS$. \hfill $\lozenge$
\end{remark}
We also need the following result. 
\begin{proposition}\label{c_diff_bc_lemma}
Let $\bfC^*_y$ and $\bfC_y$ be as before. Then 
\begin{align}
\bE\Big [\|\bfC^*_y-\bfC_y\|_\bfB ^2\Big ] \leq \frac{2 M\sigma^4(1+\snr)^2}{T},
\end{align}
which also implies $\bE\Big [\|\bfC^*_y-\bfC_y\|_\bfB\Big ] \leq \frac{\sigma^2 \sqrt{2 M}(1+\snr)}{\sqrt{T}}$. \hfill $\square$
\end{proposition}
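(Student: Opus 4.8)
The plan is to reduce the claimed bound to the entrywise variance of the subsampled sample covariance, which is already supplied by Proposition~\ref{estim_variance}. First, since the true covariance $\bfC_y$ lies in $\bT_+$, I would invoke Proposition~\ref{norm_ineq} with its free matrix $\bfC'$ taken to be $\bfC_y$ itself; this yields the deterministic inequality $\|\bfC^*_y - \bfC_y\|_\bfB \le \|\widehat{\bfC}_y - \bfC_y\|_\bfB$, whose right-hand side no longer refers to the (only implicitly defined) projection $\bfC^*_y$. It therefore suffices to control $\bE\big[\|\widehat{\bfC}_y - \bfC_y\|_\bfB^2\big]$.

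Next, unwinding the definition of the $\bfB$-seminorm, $\|\widehat{\bfC}_y - \bfC_y\|_\bfB^2 = \|\bfB(\widehat{\bfC}_y - \bfC_y)\bfB^\herm\|^2 = \|\widehat{\bfC}_x - \bfC_x\|^2 = \sum_{i,j=1}^m \big|[\widehat{\bfC}_x]_{ij} - [\bfC_x]_{ij}\big|^2$, a Frobenius sum over the $m^2$ entries. Taking expectations (only linearity is used here, not independence of the entries), this becomes $\sum_{i,j}\var\big([\widehat{\bfC}_x]_{ij}\big)$. Each entry $[\widehat{\bfC}_x]_{ij} = \frac{1}{T}\sum_{t=1}^T [\bfy(t)]_{d_i}[\bfy(t)]_{d_j}^{*}$ is unbiased for $[\bfC_x]_{ij}=[\bfg]_{d_i-d_j}$, and Proposition~\ref{estim_variance} — whose proof, one checks, goes through verbatim for the diagonal case $i=j$ as well — gives $\var([\widehat{\bfC}_x]_{ij}) = \sigma^4(1+\snr)^2/T$ uniformly in $(i,j)$. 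Summing the $m^2$ equal contributions and using the size bound $m^2 \le 2M$ for the coprime sampling set $\clD$, one gets $\bE\big[\|\bfC^*_y - \bfC_y\|_\bfB^2\big] \le 2M\sigma^4(1+\snr)^2/T$. The second asserted inequality is then immediate from Jensen's inequality, $\bE[X] \le \sqrt{\bE[X^2]}$ applied to $X = \|\bfC^*_y - \bfC_y\|_\bfB$.

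No step is conceptually difficult; the work is all bookkeeping, and the single place to be careful is the constant. The factor $2M$ comes entirely from the count of matrix entries $m^2$: since $|\clD| = m = O(\sqrt{M})$ this is $O(M)$, so one should pin down the precise constant ($m^2 \le 2M$) for the sampling set actually in use, adjusting otherwise. The only other small point is to confirm that Proposition~\ref{estim_variance} really governs every entry of $\widehat{\bfC}_x$, including the real diagonal entries $[\widehat{\bfC}_x]_{ii}$ — for which $\var\big(|[\bfy(t)]_{d_i}|^2\big) = ([\bff]_0+\sigma^2)^2$ follows from part~2 of Proposition~\ref{prop:circ_gaus2} with $X=Y$ — so that no summand in $\sum_{i,j}\var\big([\widehat{\bfC}_x]_{ij}\big)$ exceeds $\sigma^4(1+\snr)^2/T$.
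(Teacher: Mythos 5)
Your plan follows the paper's own route step for step: reduce to the sample covariance via Proposition~\ref{norm_ineq} with $\bfC'=\bfC_y$, control each entry of $\widehat{\bfC}_x-\bfC_x$ by the variance formula of Proposition~\ref{estim_variance} (your check that the diagonal entries obey the same formula, via part~2 of Proposition~\ref{prop:circ_gaus2} with $X=Y$, is correct and worth making explicit), and finish with Jensen. The one step that fails is the final count. For the coprime set one has $m=|\clD|\approx 2\sqrt{M}$, so $m^2\approx 4M$; the inequality $m^2\le 2M$ that you invoke is false (in the paper's own example, $M=80$ and $m=19$ give $m^2=361>160=2M$). What your computation actually establishes is
\begin{align*}
\bE\big[\|\bfC^*_y-\bfC_y\|_\bfB^2\big]\;\le\; m^2\,\frac{\sigma^4(1+\snr)^2}{T}\;\approx\;\frac{4M\sigma^4(1+\snr)^2}{T},
\end{align*}
i.e.\ twice the stated constant.

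The discrepancy is worth understanding rather than hiding. The paper's proof expands $\|\widehat{\bfC}_y-\bfC_y\|_\bfB^2$ as $\sum_{k=0}^{M-1}c_k\,\var\big[[\widehat{\bfg}]_k\big]$, which sums only over the ordered pairs $(i,j)$ with $i\ge j$; there are $\sum_k c_k=m(m+1)/2\approx 2M$ of these, which is where the constant $2M$ comes from. Your expansion $\sum_{i,j=1}^m\big|[\widehat{\bfC}_x]_{ij}-[\bfC_x]_{ij}\big|^2$ is the literal Frobenius norm of the $m\times m$ matrix and has $m^2$ terms; since the difference matrix is Hermitian, the full sum is essentially twice the lower\hbox{-}triangular one. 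So your accounting is the faithful one, and it exposes that the stated constant $2M$ rests on a half\hbox{-}count (absorbed into the paper's ``$\approx$''). Either carry the honest constant $m^2\approx 4M$ through to the end (which changes nothing qualitatively downstream, only constants in Theorems~\ref{CMP_thm} and~\ref{SR_thm}), or reproduce the paper's bound while flagging that it counts only the pairs $i\ge j$ --- but do not bridge the gap with $m^2\le 2M$, which is simply not true for this sampling set.
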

\begin{proof}
First notice that, we have 
\begin{align}
\bE\Big [&\|\bfC^*_y-\bfC_y \|_\bfB^2\Big ] \stackrel{(a)}{\leq} \bE\Big [\|\widehat{\bfC}_y-\bfC_y \|_\bfB^2\Big ]\\
&= \sum_{k=0}^{M-1} c_k \bE\Big [ \big |[\widehat{\bfg}]_k - [\bfg]_k\big|^2\Big ]\\
&= \sum_{k=0}^{M-1} c_k \Var\Big [[\widehat{\bfg}]_k\Big ]\stackrel{(b)}{=} \frac{\sigma^4(1+\snr)^2}{T} \sum_{k=0}^{M-1} c_k \\
&= \frac{m(m+1)}{2} \frac{\sigma^4 (1+\snr)^2}{T}\approx \frac{2 M\sigma^4 (1+\snr)^2}{T},
\end{align}
where in $(a)$ we used the inequality proved in Proposition \ref{norm_ineq} by replacing $\bfC'=\bfC_y$, where $[\bfg]_k$ are as in \eqref{eq:sample_x_stat}, where $c_k$ denotes the covering number of $k\in [M]$ by the coprime sampling $\clD$ with $m=|\clD|=O(2 \sqrt{M})$, and where $(b)$ results from Proposition \ref{estim_variance}. The other result simply follows from the identity $\Big \{\bE\Big [\|\bfC^*_y-\bfC_y \|_\bfB\Big ]\Big \}^2\leq \bE\Big [\|\bfC^*_y-\bfC_y \|_\bfB^2\Big ]$.
\end{proof}

\vspace{2mm}
\noindent{\bf Proof of Theorem \ref{CMP_thm}:} Using the definition of $\Gamma_p$ and taking the expectation value we obtain 
\begin{align}
\bE[\Gamma_p] &= 1- \bE \Big [\frac{ \inp{\bfS}{\bfV \bfV^\herm} -\inp{\bfS}{\widetilde{\bfV}\widetilde{\bfV}^\herm} }{\inp{\bfS}{\bfV\bfV^\herm}}\Big ]\\
&\stackrel{(a)}{=}1- \bE \Big [\frac{| \inp{\bfS}{\widetilde{\bfV}\widetilde{\bfV}^\herm} - \inp{\bfS}{\bfV \bfV^\herm} |}{\inp{\bfS}{\bfV\bfV^\herm}}\Big ]\\
&\stackrel{(b)}{=}1- \frac{\bE \big [| \inp{\bfC_y}{\widetilde{\bfV}\widetilde{\bfV}^\herm} - \inp{\bfC_y}{\bfV \bfV^\herm} | \big ]}{\eta_p \tr(\bfS)}\\
&\stackrel{(c)}{\geq} 1- \frac{2 \sqrt{p M}\ \bE \big [\|\bfC^*_y-\bfC_y \|_\bfB\big ]}{\eta_p\, M\, [\bff]_0 } \\
&\stackrel{(d)}{\geq} 1- \frac{2 \sigma^2 \sqrt{2p} M (1+\snr)}{\eta_p\, M\, [\bff]_0 \sqrt{T}} \\
&\geq 1- \frac{2\sqrt{2 p}}{\eta_p\sqrt{T}} (1+\frac{1}{\snr}),
\end{align}
where in $(a)$ we used $\inp{\bfS}{\widetilde{\bfV}\widetilde{\bfV}^\herm} \leq \inp{\bfS}{\bfV \bfV^\herm}$, in $(b)$ we used $\bfC_y=\bfS+ \sigma^2\bfI_M$ and $\tr(\bfV\bfV^\herm)=\tr(\widetilde{\bfV}\widetilde{\bfV}^\herm)$, in $(c)$ we used Proposition \ref{prop:beam_opt}, and finally in $(d)$ we used Proposition \ref{c_diff_bc_lemma}.
As $\Gamma_p \geq 0$, this implies that $\bE\big [\Gamma_p\big ] \geq \max \Big \{1- \frac{2\sqrt{2 p}}{\eta_p\sqrt{T}} (1+\frac{1}{\snr}),\, 0 \Big \}$. In a similar way, we obtain

\begin{align}
\Var[\Gamma_p] &= \Var \Big [ \frac{| \inp{\bfS}{\widetilde{\bfV}\widetilde{\bfV}^\herm} - \inp{\bfS}{\bfV \bfV^\herm} |}{\inp{\bfS}{\bfV\bfV^\herm}} \Big ]\nonumber\\
&= \frac{\Var \Big [ | \inp{\bfC_y}{\widetilde{\bfV}\widetilde{\bfV}^\herm} - \inp{\bfC_y}{\bfV \bfV^\herm} | \Big ]}{\eta_p^2\,  \tr(\bfS)^2}\nonumber\\
&\leq \frac{\bE \Big [ | \inp{\bfC_y}{\widetilde{\bfV}\widetilde{\bfV}^\herm} - \inp{\bfC_y}{\bfV \bfV^\herm} |^2 \Big ]}{\eta_p^2\,  \tr(\bfS)^2}\nonumber\\
&\leq \frac{\bE \Big [ \Big (2 \sqrt{p M} \| \bfC^*_y -\bfC_y \|_\bfB\Big )^2\Big ]}{\eta_p^2\,  \tr(\bfS)^2}\nonumber\\
&= \frac{ 4 p M\ \bE \big [\| \bfC^*_y -\bfC_y \|_\bfB^2\big ] }{\eta_p^2\,  M^2 [\bff]_0^2}\nonumber\\
&\stackrel{(a)}{=} \frac{8 p \, M^2 (\snr+1)^2\sigma^4}{T \eta_p ^2 \, M^2\,  [\bff]_0^2}= \frac{8 p}{T \eta_p^2} (1+\frac{1}{\snr})^2,
\end{align} 
where in $(a)$ we used Proposition \ref{c_diff_bc_lemma} and the fact that $\snr=\frac{[\bff]_0}{\sigma^2}$. This completes the proof.

{
\noindent{\bf Proof of Theorem \ref{SR_thm}}:
Let $\bff$ be the Fourier coefficient of $\gamma$ and let $\widehat{\bff}$ be its estimate as in Section \ref{sec:SR}. Let $\bfS^*$ be Hermitian Toeplitz matrix obtained from the optimization \eqref{eq:tv_pos_meas2} and let us denote its first column by $\bff^*$. Note that $\tr(\bfS^*)=M [\bff^*]_0$. We suppose that the parameter $\xi$ is tuned largely enough such that the true $\bff$ is feasible. Also, from the optimality of $\bff^*$, it results that $[\bff^*]_0 \leq [\bff]_0$. Thus, we have both inequalities
\begin{align}
&\|\bff^*-\widehat{\bff}\|\leq \xi \sqrt{\frac{M}{T}} (\sigma^2+ [\bff^*]_0)\leq \xi \sqrt{\frac{M}{T}} (\sigma^2+ [\bff]_0)\\
&\|\bff-\widehat{\bff}\| \leq \xi \sqrt{\frac{M}{T}} (\sigma^2+ [\bff]_0).
\end{align}
Applying the triangle inequality, we have $\|\bff - \bff^*\| \leq 2\xi \sqrt{\frac{M}{T}} \sigma^2 (1+\snr)$, where as before $\snr=\frac{[\bff]_0}{\sigma^2}$. Using the Toeplitz property, we obtain that 
\begin{align}
\|\bfS-\bfS^*\|\leq \sqrt{2M} \|\bff - \bff^*\|\leq \frac{2\sqrt{2} \xi M \sigma^2}{\sqrt{T}} (1+\snr).
\end{align}
 Let $\bfV$ and $\widetilde{\bfV}$ be $M\times p$ matrices whose columns span the dominant $p$-dim signal subspace of $\bfS$ and $\bfS^*$ respectively. Using a result similar to Proposition \ref{prop:concentrate} and \ref{prop:beam_opt}, we obtain 
\begin{align}
|\inp{\bfS}{\bfV\bfV^\herm}- \inp{\bfS}{\widetilde{\bfV}\widetilde{\bfV}^\herm}|&\leq 2 \sqrt{p} \|\bfS - \bfS^*\|\\
&\leq \frac{4\sqrt{2p} \xi M \sigma^2}{\sqrt{T}} (1+\snr).
\end{align}
Dividing both side by $\inp{\bfS}{\bfV\bfV^\herm}$, using the definition of $\Gamma_p$ in \eqref{eq:perf_metric}, and the definition of $\eta_p$ in \eqref{delta_p_def}, and using the fact that $\Gamma_p \in [0,1]$, we have
\begin{align}\label{final_sr_res}
\Gamma_p \geq  1-\frac{4\sqrt{2p} \, \xi}{\eta_p\sqrt{T}} (1+\frac{1}{\snr}).
\end{align}
This completes the proof.
}


\section{Proofs of the Propositions}

\subsection{Proof of Proposition \ref{prop:cav_lin_approx}}\label{prop:cav_lin_approx_app}
Note that $\bfB \widetilde{\bfS} \bfB^\herm$ is a PSD matrix. We prove the following more general statement that for any $m\times m$  Hermitian matrix $\bfH$ for which $\bfI_m+ \bfH$ is PSD, we have $\log \sdet (\bfI_m+ \bfH)= \tr(\bfH) + o(\tr(\bfH))$. Let $\bfU \mathbf{\Lambda} \bfU^\herm$ be the SVD of $\bfH$. Then, 
\begin{align}
\log \sdet &(\bfI_m+ \bfH)= \log \sdet (\bfI_m+ \bfU \mathbf{\Lambda} \bfU^\herm)\nonumber\\
&= \log \sdet (\bfI_m+  \mathbf{\Lambda})= \sum_{\ell=1}^ m \log(1+ \lambda_\ell)\nonumber\\
&= \sum_{\ell=1}^m \lambda_\ell + o(\tr(\bfH))= \tr(\bfH) + o(\tr(\bfH)).
\end{align}
In particular, from the concavity of the Logarithm, it results that $\log(1+ \lambda_\ell) \leq \lambda_\ell$. This implies that $\log \sdet (\bfI_m+ \bfH) \leq \tr(\bfH)$. This complete the proof.

\subsection{Proof of Proposition \ref{prop:ml_semi_def}}\label{prop:ml_semi_def_app}
{Let $(\widetilde{\bfS}^*, \bfW^*)$ be the output of the SDP \eqref{eq:ml_semi}, and let $\widetilde{\bfS}_\mathsf{opt}$ be the AML estimate given by the optimization  $\widetilde{\bfS}_\mathsf{opt}=\argmin_{\widetilde{\bfS}\in \bT_+} L_\text{app}(\widetilde{\bfS})$. Let us define $\bfH_\mathsf{opt}:=\bfI_m+\bfB \widetilde{\bfS}_\mathsf{opt}\bfB^\herm$ and $\bfW_\mathsf{opt}:=\widetilde{\mathbf{\Delta}}^\herm \bfH_\mathsf{opt}^{-1} \widetilde{\mathbf{\Delta}}$. Using the well-known Schur complement condition for positive semi-definiteness (see \cite{boyd1994linear} p.28), it results that 
$\left [  \begin{array}{cc} \bfH_\mathsf{opt} & \widetilde{\mathbf{\Delta}} \\ \widetilde{\mathbf{\Delta}}^\herm & \bfW_\mathsf{opt}  \end{array} \right ]\succeq {\bf 0}$, 
which implies that $(\widetilde{\bfS}_\mathsf{opt}, \bfW_\mathsf{opt})$ satisfy the SDP constraint in \eqref{eq:ml_semi}. 
Hence,
\begin{align}\label{eq:ml>semi}
L_\text{app}(\widetilde{\bfS}_\mathsf{opt})&= \tr(\bfB \widetilde{\bfS}_\mathsf{opt} \bfB^\herm) + \tr(\widehat{\bfC}_{\widetilde{x}} (\bfI_m+ \bfB \widetilde{\bfS}_\mathsf{opt} \bfB^\herm)^{-1})\nonumber\\
&\stackrel{(a)}{=}\tr(\bfB \widetilde{\bfS}_\mathsf{opt} \bfB^\herm) + \tr(\widetilde{\mathbf{\Delta}}^\herm \bfH_\mathsf{opt}^{-1} \widetilde{\mathbf{\Delta}})\nonumber\\
&=\tr(\bfB \widetilde{\bfS}_\mathsf{opt} \bfB^\herm) + \tr(\bfW_\mathsf{opt}) \nonumber\\
&\geq \tr(\bfB \widetilde{\bfS}^* \bfB^\herm) + \tr(\bfW^*)\nonumber\\
& \stackrel{(b)}{\geq} \tr(\bfB \widetilde{\bfS}^* \bfB^\herm) + \tr(\widetilde{\mathbf{\Delta}}^\herm (\bfI_m+ \bfB \widetilde{\bfS}^* \bfB^\herm)^{-1} \widetilde{\mathbf{\Delta}})\nonumber\\
&= L_\text{app}(\widetilde{\bfS}^*),
\end{align}
where in $(a)$, we use $\widehat{\bfC}_{\widetilde{x}}= \widetilde{\mathbf{\Delta}}\widetilde{\mathbf{\Delta}}^\herm$, and where in $(b)$, we apply Schur complement condition to the SDP constraint to obtain $\bfW^* \succeq \widetilde{\mathbf{\Delta}}^\herm (\bfI_m+ \bfB \widetilde{\bfS}^* \bfB^\herm)^{-1} \widetilde{\mathbf{\Delta}}$, and take the trace of both sides to obtain the desired inequality. As $\widetilde{\bfS}_\mathsf{opt}$ is the AML estimate, we also have $L_\text{app}(\widetilde{\bfS}_\mathsf{opt}) \leq L_\text{app}(\widetilde{\bfS}^*)$, which together with \eqref{eq:ml>semi} completes the proof.
}

\balance
\bibliographystyle{IEEEtran}
{\small
\bibliography{references}}

\begin{thebibliography}{10}
\providecommand{\url}[1]{#1}
\csname url@samestyle\endcsname
\providecommand{\newblock}{\relax}
\providecommand{\bibinfo}[2]{#2}
\providecommand{\BIBentrySTDinterwordspacing}{\spaceskip=0pt\relax}
\providecommand{\BIBentryALTinterwordstretchfactor}{4}
\providecommand{\BIBentryALTinterwordspacing}{\spaceskip=\fontdimen2\font plus
\BIBentryALTinterwordstretchfactor\fontdimen3\font minus
  \fontdimen4\font\relax}
\providecommand{\BIBforeignlanguage}[2]{{%
\expandafter\ifx\csname l@#1\endcsname\relax
\typeout{** WARNING: IEEEtran.bst: No hyphenation pattern has been}%
\typeout{** loaded for the language `#1'. Using the pattern for}%
\typeout{** the default language instead.}%
\else
\language=\csname l@#1\endcsname
\fi
#2}}
\providecommand{\BIBdecl}{\relax}
\BIBdecl

\bibitem{Marzetta-TWC10}
T.~L. Marzetta, ``{Noncooperative cellular wireless with unlimited numbers of
  base station antennas},'' \emph{{IEEE Trans. on Wireless Commun.}}, vol.~9,
  no.~11, pp. 3590--3600, Nov. 2010.

\bibitem{Huh11}
H.~Huh, G.~Caire, H.~Papadopoulos, and S.~Ramprashad, ``Achieving massive
  {MIMO} spectral efficiency with a not-so-large number of antennas,''
  \emph{{IEEE Trans. on Wireless Commun.}}, vol.~11, no.~9, pp. 3226--3239,
  2012.

\bibitem{hoydis2013massive}
J.~Hoydis, S.~Ten~Brink, and M.~Debbah, ``Massive mimo in the ul/dl of cellular
  networks: How many antennas do we need?'' \emph{{IEEE J. on Sel. Areas on
  Commun. (JSAC)}}, vol.~31, no.~2, pp. 160--171, 2013.

\bibitem{larsson2014massive}
E.~Larsson, O.~Edfors, F.~Tufvesson, and T.~Marzetta, ``Massive mimo for next
  generation wireless systems,'' \emph{Communications Magazine, IEEE}, vol.~52,
  no.~2, pp. 186--195, 2014.

\bibitem{shepard2012argos}
C.~Shepard, H.~Yu, N.~Anand, E.~Li, T.~Marzetta, R.~Yang, and L.~Zhong,
  ``Argos: Practical many-antenna base stations,'' in \emph{Proceedings of the
  18th Annual International Conference on Mobile Computing and
  Networking}.\hskip 1em plus 0.5em minus 0.4em\relax ACM, 2012, pp. 53--64.

\bibitem{rogalin2014scalable}
R.~Rogalin, O.~Y. Bursalioglu, H.~Papadopoulos, G.~Caire, A.~F. Molisch,
  A.~Michaloliakos, V.~Balan, and K.~Psounis, ``Scalable synchronization and
  reciprocity calibration for distributed multiuser mimo,'' \emph{{IEEE Trans.
  on Wireless Commun.}}, vol.~13, no.~4, pp. 1815--1831, 2014.

\bibitem{adhikary2013joint}
A.~Adhikary, J.~Nam, J.-Y. Ahn, and G.~Caire, ``Joint spatial division and
  multiplexing: the large-scale array regime,'' \emph{{IEEE Trans. on Inform.
  Theory}}, vol.~59, no.~10, pp. 6441--6463, 2013.

\bibitem{nam2014joint}
J.~Nam, A.~Adhikary, J.-Y. Ahn, and G.~Caire, ``Joint spatial division and
  multiplexing: Opportunistic beamforming, user grouping and simplified
  downlink scheduling,'' \emph{{IEEE J. of Sel. Topics in Sig. Proc. (JSTSP)}},
  vol.~8, no.~5, pp. 876--890, 2014.

\bibitem{adhikary2014joint}
A.~Adhikary, E.~Al~Safadi, M.~K. Samimi, R.~Wang, G.~Caire, T.~S. Rappaport,
  and A.~F. Molisch, ``Joint spatial division and multiplexing for mm-wave
  channels,'' \emph{{IEEE J. on Sel. Areas on Commun. (JSAC)}}, vol.~32, no.~6,
  pp. 1239--1255, 2014.

\bibitem{adhikary2014massive}
A.~Adhikary, H.~S. Dhillon, and G.~Caire, ``{Massive-MIMO meets HetNet:
  Interference coordination through spatial blanking},'' \emph{{IEEE J. on Sel.
  Areas on Commun. (JSAC)}}, 2014.

\bibitem{adhikary2014spatial}
------, ``Spatial blanking and inter-tier coordination in massive-mimo
  heterogeneous cellular networks,'' in \emph{{Globecom Workshops (GC
  Workshop)}}.\hskip 1em plus 0.5em minus 0.4em\relax IEEE, 2014, pp.
  1229--1234.

\bibitem{DBLP:journals/corr/NamCKH15}
\BIBentryALTinterwordspacing
J.~Nam, G.~Caire, Y.~Ko, and J.~Ha, ``On the role of transmit correlation
  diversity in multiuser {MIMO} systems,'' \emph{CoRR}, vol. abs/1505.02896,
  2015. [Online]. Available: \url{http://arxiv.org/abs/1505.02896}
\BIBentrySTDinterwordspacing

\bibitem{vaidyanathan2011sparse}
P.~P. Vaidyanathan and P.~Pal, ``Sparse sensing with co-prime samplers and
  arrays,'' \emph{IEEE Transactions on Signal Processing}, vol.~59, no.~2, pp.
  573--586, 2011.

\bibitem{vaidyanathan2011theory}
P.~Vaidyanathan and P.~Pal, ``Theory of sparse coprime sensing in multiple
  dimensions,'' \emph{Signal Processing, IEEE Transactions on}, vol.~59, no.~8,
  pp. 3592--3608, 2011.

\bibitem{chi2013petrels}
Y.~Chi, Y.~C. Eldar, and R.~Calderbank, ``Petrels: Parallel subspace estimation
  and tracking by recursive least squares from partial observations,''
  \emph{Signal Processing, IEEE Transactions on}, vol.~61, no.~23, pp.
  5947--5959, 2013.

\bibitem{balzano2010online}
L.~Balzano, R.~Nowak, and B.~Recht, ``Online identification and tracking of
  subspaces from highly incomplete information,'' in \emph{Communication,
  Control, and Computing (Allerton), 2010 48th Annual Allerton Conference
  on}.\hskip 1em plus 0.5em minus 0.4em\relax IEEE, 2010, pp. 704--711.

\bibitem{candes2009exact}
E.~J. Cand{\`e}s and B.~Recht, ``Exact matrix completion via convex
  optimization,'' \emph{Foundations of Computational mathematics}, vol.~9,
  no.~6, pp. 717--772, 2009.

\bibitem{roy1989esprit}
R.~Roy and T.~Kailath, ``Esprit-estimation of signal parameters via rotational
  invariance techniques,'' \emph{Acoustics, Speech and Signal Processing, IEEE
  Transactions on}, vol.~37, no.~7, pp. 984--995, 1989.

\bibitem{schmidt1986multiple}
R.~O. Schmidt, ``Multiple emitter location and signal parameter estimation,''
  \emph{Antennas and Propagation, IEEE Transactions on}, vol.~34, no.~3, pp.
  276--280, 1986.

\bibitem{chi2011sensitivity}
Y.~Chi, L.~L. Scharf, A.~Pezeshki \emph{et~al.}, ``Sensitivity to basis
  mismatch in compressed sensing,'' \emph{Signal Processing, IEEE Transactions
  on}, vol.~59, no.~5, pp. 2182--2195, 2011.

\bibitem{bajwa2010compressed}
W.~U. Bajwa, J.~Haupt, A.~M. Sayeed, and R.~Nowak, ``Compressed channel
  sensing: A new approach to estimating sparse multipath channels,''
  \emph{Proceedings of the IEEE}, vol.~98, no.~6, pp. 1058--1076, 2010.

\bibitem{baraniuk2007compressive}
R.~Baraniuk and P.~Steeghs, ``Compressive radar imaging,'' in \emph{Radar
  Conference, 2007 IEEE}.\hskip 1em plus 0.5em minus 0.4em\relax IEEE, 2007,
  pp. 128--133.

\bibitem{duarte2013spectral}
M.~F. Duarte and R.~G. Baraniuk, ``Spectral compressive sensing,''
  \emph{Applied and Computational Harmonic Analysis}, vol.~35, no.~1, pp.
  111--129, 2013.

\bibitem{fannjiang2010compressed}
A.~C. Fannjiang, T.~Strohmer, and P.~Yan, ``Compressed remote sensing of sparse
  objects,'' \emph{SIAM Journal on Imaging Sciences}, vol.~3, no.~3, pp.
  595--618, 2010.

\bibitem{herman2009high}
M.~Herman, T.~Strohmer \emph{et~al.}, ``High-resolution radar via compressed
  sensing,'' \emph{Signal Processing, IEEE Transactions on}, vol.~57, no.~6,
  pp. 2275--2284, 2009.

\bibitem{malioutov2005sparse}
D.~Malioutov, M.~{\c{C}}etin, and A.~S. Willsky, ``A sparse signal
  reconstruction perspective for source localization with sensor arrays,''
  \emph{Signal Processing, IEEE Transactions on}, vol.~53, no.~8, pp.
  3010--3022, 2005.

\bibitem{kunis2008random}
S.~Kunis and H.~Rauhut, ``Random sampling of sparse trigonometric polynomials,
  ii. orthogonal matching pursuit versus basis pursuit,'' \emph{Foundations of
  Computational Mathematics}, vol.~8, no.~6, pp. 737--763, 2008.

\bibitem{stoica2012spice}
P.~Stoica and P.~Babu, ``Spice and likes: Two hyperparameter-free methods for
  sparse-parameter estimation,'' \emph{Signal Processing}, vol.~92, no.~7, pp.
  1580--1590, 2012.

\bibitem{stoica2011new}
P.~Stoica, P.~Babu, and J.~Li, ``New method of sparse parameter estimation in
  separable models and its use for spectral analysis of irregularly sampled
  data,'' \emph{Signal Processing, IEEE Transactions on}, vol.~59, no.~1, pp.
  35--47, 2011.

\bibitem{candes2014towards}
E.~J. Cand{\`e}s and C.~Fernandez-Granda, ``Towards a mathematical theory of
  super-resolution,'' \emph{Communications on Pure and Applied Mathematics},
  vol.~67, no.~6, pp. 906--956, 2014.

\bibitem{candes2013super}
------, ``Super-resolution from noisy data,'' \emph{Journal of Fourier Analysis
  and Applications}, vol.~19, no.~6, pp. 1229--1254, 2013.

\bibitem{tan2014direction}
Z.~Tan, Y.~C. Eldar, and A.~Nehorai, ``Direction of arrival estimation using
  co-prime arrays: A super resolution viewpoint,'' \emph{Signal Processing,
  IEEE Transactions on}, vol.~62, no.~21, pp. 5565--5576, 2014.

\bibitem{toeltsch2002statistical}
M.~Toeltsch, J.~Laurila, K.~Kalliola, A.~F. Molisch, P.~Vainikainen, and
  E.~Bonek, ``Statistical characterization of urban spatial radio channels,''
  \emph{Selected Areas in Communications, IEEE Journal on}, vol.~20, no.~3, pp.
  539--549, 2002.

\bibitem{asplund2006cost}
H.~Asplund, A.~A. Glazunov, A.~F. Molisch, K.~Pedersen, M.~Steinbauer
  \emph{et~al.}, ``The cost 259 directional channel model-part ii:
  macrocells,'' \emph{Wireless Communications, IEEE Transactions on}, vol.~5,
  no.~12, pp. 3434--3450, 2006.

\bibitem{tropp2006algorithms}
J.~A. Tropp, A.~C. Gilbert, and M.~J. Strauss, ``Algorithms for simultaneous
  sparse approximation. part i: Greedy pursuit,'' \emph{Signal Processing},
  vol.~86, no.~3, pp. 572--588, 2006.

\bibitem{tropp2006algorithms2}
J.~A. Tropp, ``Algorithms for simultaneous sparse approximation. part ii:
  Convex relaxation,'' \emph{Signal Processing}, vol.~86, no.~3, pp. 589--602,
  2006.

\bibitem{lee2012subspace}
K.~Lee, Y.~Bresler, and M.~Junge, ``Subspace methods for joint sparse
  recovery,'' \emph{Information Theory, IEEE Transactions on}, vol.~58, no.~6,
  pp. 3613--3641, 2012.

\bibitem{kim2012compressive}
J.~M. Kim, O.~K. Lee, and J.~C. Ye, ``Compressive music: revisiting the link
  between compressive sensing and array signal processing,'' \emph{Information
  Theory, IEEE Transactions on}, vol.~58, no.~1, pp. 278--301, 2012.

\bibitem{davies2012rank}
M.~E. Davies and Y.~C. Eldar, ``Rank awareness in joint sparse recovery,''
  \emph{Information Theory, IEEE Transactions on}, vol.~58, no.~2, pp.
  1135--1146, 2012.

\bibitem{mishali2008reduce}
M.~Mishali and Y.~C. Eldar, ``Reduce and boost: Recovering arbitrary sets of
  jointly sparse vectors,'' \emph{Signal Processing, IEEE Transactions on},
  vol.~56, no.~10, pp. 4692--4702, 2008.

\bibitem{tang2013compressed}
G.~Tang, B.~N. Bhaskar, P.~Shah, and B.~Recht, ``Compressed sensing off the
  grid,'' \emph{Information Theory, IEEE Transactions on}, vol.~59, no.~11, pp.
  7465--7490, 2013.

\bibitem{li2014off}
Y.~Li and Y.~Chi, ``Off-the-grid line spectrum denoising and estimation with
  multiple measurement vectors,'' \emph{arXiv preprint arXiv:1408.2242}, 2014.

\bibitem{yang2014exact}
Z.~Yang and L.~Xie, ``Exact joint sparse frequency recovery via optimization
  methods,'' \emph{arXiv preprint arXiv:1405.6585}, 2014.

\bibitem{boyd2004convex}
S.~Boyd and L.~Vandenberghe, \emph{Convex optimization}.\hskip 1em plus 0.5em
  minus 0.4em\relax Cambridge university press, 2004.

\bibitem{neyman1936teorema}
J.~Neyman, \emph{Su un teorema concernente le cosiddette statistiche
  sufficienti}.\hskip 1em plus 0.5em minus 0.4em\relax Istituto Italiano degli
  Attuari, 1936.

\bibitem{yuille2003concave}
A.~L. Yuille and A.~Rangarajan, ``The concave-convex procedure,'' \emph{Neural
  computation}, vol.~15, no.~4, pp. 915--936, 2003.

\bibitem{grant2008cvx}
M.~Grant, S.~Boyd, and Y.~Ye, ``Cvx: Matlab software for disciplined convex
  programming,'' 2008.

\bibitem{price1958useful}
R.~Price, ``A useful theorem for nonlinear devices having gaussian inputs,''
  \emph{Information Theory, IRE Transactions on}, vol.~4, no.~2, pp. 69--72,
  1958.

\bibitem{boyd1994linear}
S.~P. Boyd, L.~El~Ghaoui, E.~Feron, and V.~Balakrishnan, \emph{Linear matrix
  inequalities in system and control theory}.\hskip 1em plus 0.5em minus
  0.4em\relax SIAM, 1994, vol.~15.

\end{thebibliography}

\end{document}